\newif\ifArXiv%Switch for arXiv version (with Appendix)
\crefname{line}{Line}{Lines}
\newtheorem{fact}[theorem]{Fact}
\newtheorem{problem}[theorem]{Problem}
\newtheorem{construction}[theorem]{Construction}
\newcommand{\set}[1]{\left\{#1\right\}}
\newcommand\tuple[1]{\langle {#1}\rangle}
\newcommand{\EDA}{\ensuremath{\mathtt{ED}_a}}
\newcommand{\ED}{\mathtt{ED}}
\newcommand{\HD}{\mathtt{HD}}
\newcommand{\OV}{\mathsf{OV}}
\newcommand{\LCE}{\mathsf{LCE}}
\newcommand{\Rp}{\mathbb{R}_{+}}
\newcommand{\Z}{\mathbb{Z}}
\newcommand{\Zp}{\mathbb{Z}_{+}}
\newcommand{\Oh}{O}
\newcommand{\tOh}{\tilde{\Oh}}
\newcommand{\dd}{\mathinner{..}}
\newcommand{\floor}[1]{\lfloor#1\rfloor}
\newcommand{\ceil}[1]{\lceil#1\rceil}
\newcommand{\A}{\mathtt{A}}
\newcommand{\B}{\mathtt{B}}
\newcommand{\C}{\mathtt{C}}
\newcommand{\D}{\mathtt{D}}
\newcommand{\De}{\mathbf{D}}
\newcommand{\Da}{\tilde{\mathbf{D}}}
\newcommand{\sub}{\subseteq}
\newcommand{\Bin}{\mathrm{Bin}}
\newcommand{\Zz}{\mathbb{Z}_{\ge 0}}
\renewcommand{\alph}{\mathrm{alph}}
\newcommand{\poly}{\mathrm{poly}}
\title{An Algorithmic Bridge Between Hamming and Levenshtein Distances} 
\author{Elazar Goldenberg}{Academic College of Tel Aviv-Yafo, Israel}{elazargo@mta.ac.il}{https://orcid.org/0000-0001-7993-3580}{}
\author{Tomasz Kociumaka}{Max Planck Institute for Informatics, Saarland Informatics Campus, Saarbr\"ucken, Germany}{tomasz.kociumaka@mpi-inf.mpg.de}{https://orcid.org/0000-0002-2477-1702}{Work mostly carried out while at the University of California, Berkeley, partly supported by NSF 1652303, 1909046, and HDR TRIPODS 1934846 grants, and an Alfred P. Sloan Fellowship.}
\author{Robert Krauthgamer}{Weizmann Institute of Science, Israel}{robert.krauthgamer@weizmann.ac.il}{} {Work partially supported by ONR Award N00014-18-1-2364, the Israel Science Foundation grant \#1086/18, and a Minerva Foundation grant, and by the Israeli Council for Higher Education (CHE) via the Weizmann Data Science Research Center.}
\author{Barna Saha}{University of California, San Diego, USA}{barnas@ucsd.edu}{https://orcid.org/0000-0002-6494-3839} {Partly supported by NSF CCF grants 1652303 and 1909046, and an HDR TRIPODS Phase II grant 2217058.}
\authorrunning{E.~Goldenberg, T.~Kociumaka, R.~Krauthgamer, and B.~Saha} %TODO mandatory. First: Use abbreviated first/middle names. Second (only in severe cases): Use first author plus 'et al.'
\keywords{edit distance, Hamming distance, Longest Common Extension queries}
\begin{document}

\maketitle

\begin{abstract}
The edit distance between strings classically assigns unit cost to every character insertion, deletion, and substitution, whereas the Hamming distance only allows substitutions. 
In many real-life scenarios, insertions and deletions (abbreviated \emph{indels}) appear frequently but significantly less so than substitutions.
To model this, we consider substitutions being cheaper than indels, with cost $\frac1a$ for a parameter $a\geq 1$. 
This basic variant, denoted $\EDA$, bridges classical edit distance ($a=1$) with Hamming distance ($a\to\infty$), leading to interesting algorithmic challenges:
Does the time complexity of computing $\EDA$ interpolate between that of Hamming distance (linear time) and edit distance (quadratic time)?
What about approximating $\EDA$? 

We first present a simple deterministic exact algorithm for $\EDA$ and further prove that it is near-optimal assuming the Orthogonal Vectors Conjecture.
Our main result is a randomized algorithm computing a $(1+\epsilon)$-approximation of $\EDA(X,Y)$, given strings $X,Y$ of total length $n$ and a bound $k\ge\EDA(X,Y)$.
For simplicity, let us focus on $k\ge 1$ and a constant $\epsilon>0$;
then, our algorithm takes $\tOh(\frac{n}{a} + ak^3)$ time.
Unless $a=\tOh(1)$, in which case $\EDA$ resembles the standard edit distance, and for the most interesting regime of small enough $k$, this running time is sublinear in $n$. 

We also consider a very natural version that asks to find a \emph{$(k_I, k_S)$-alignment}, i.e., an alignment with at most $k_I$ indels and $k_S$ substitutions. 
In this setting, we give an exact algorithm and, more importantly, an $\tOh(\frac{nk_I}{k_S}+k_Sk_I^3)$-time $(1,1+\epsilon)$-bicriteria approximation algorithm.
The latter solution is based on the techniques we develop for $\EDA$ for $a=\Theta(\frac{k_S}{k_I})$, and its running time is again sublinear in $n$ whenever $k_I \ll k_S$ and the overall distance is small enough.

These bounds are in stark contrast to unit-cost edit distance, where state-of-the-art algorithms are far from achieving $(1+\epsilon)$-approximation in sublinear time, even for a favorable choice of $k$.
\end{abstract}

% ----------------------------------------------------------------------
\section{Introduction}
Edit distance and Hamming distance are the two most fundamental measures of sequence similarity. 
The (unit-cost) edit distance, also known as the Levenshtein distance, of two strings $X$ and $Y$ is the minimum number of character insertions, deletions, and substitutions required to convert one string to the other, whereas the Hamming distance allows only substitutions (requiring $|X|=|Y|$).
From an algorithmic perspective, these two measures exhibit significantly different time complexity in terms of the input size $n=|X|+|Y|$. 
The Hamming distance can be computed exactly in linear time $O(n)$, and it admits a randomized $\epsilon n$-additive approximation in time $\Oh(\epsilon^{-1})$,
which implies a $(1+\epsilon)$-approximation in sublinear time when the distance is not too small. 
In contrast, assuming the Orthogonal Vectors Conjecture~\cite{Wil05}, no $\Oh(n^{2-\Omega(1)})$-time algorithm can compute the edit distance exactly~\cite{BI18}.
Recent developments in designing fast approximation algorithms~\cite{BJKK04,AKO10,AO12,CDGKS18,BR20,KS20b,GRS20} culminated in an $O(1)$-approximation in near-linear time~\cite{AN20}, but the existence of a truly subquadratic-time $3$-approximation still remains open.
Furthermore, despite many efforts~\cite{BCFN22a,BCFN22b,BEKMRRS03,GKS19,KS20a,GKKS22}, the best approximation ratio achievable in sublinear-time
ranges from polylogarithmic to polynomial in $n$ (depending on how large the true edit distance is).

The contrasting complexity landscape between edit and Hamming distance clearly indicates that substitutions are easier to handle than insertions and deletions (abbreviated \emph{indels}). 
Many real-world applications, for instance in computational biology, compare sequences based on edit distance, but its value is often dominated by the number of substitutions, as indicated by recent studies~\cite{ZG03,NMH19,CYB09,H18,M10}. Is it possible to design significantly faster algorithms for these scenarios with much more substitutions than indels? 
Such a bridge between Hamming and edit distances could also provide an explanation for why many heuristics for string comparison are fast on real-life examples.
(For another applied perspective, see~\cite{Medvedev22}.)

This motivates our study of a basic variant of the edit distance, denoted $\EDA$, where a substitution is significantly cheaper than an insertion or a deletion, and its cost is $\frac{1}{a}$ for a parameter $a\geq 1$. 
This simple variant bridges unit-cost edit distance ($a=1$) and Hamming distance ($a\ge |X|=|Y|$), raising basic algorithmic questions:
\emph{Does the time complexity of computing $\EDA$ interpolate between Hamming distance (linear time) and edit distance (quadratic time)?
How efficiently can one compute a $(1+\epsilon)$-approximation of $\EDA$? What speed-up is feasible as the parameter $a$ grows?}
We present the first series of results to answer these questions. 
As an illustrative example, if an intended application expects about $k$ indels and $k^2$ substitutions between the two strings, 
then one should set $a=k$ and use the $(1+\epsilon)$-approximation algorithm that we devise,
which runs in sublinear time $\tilde{O}(\frac{n}{k}+\text{poly}(k))$ for this parameter setting.
In contrast, the state-of-the-art sublinear-time algorithm for unit-cost edit distance~\cite{BCFN22a} only provides an $\Oh(\operatorname{polylog} k)$-factor approximation of the total number of edits, so it will likely produce an alignment with $\omega(k^2)$ indels, which blatantly violates the structure of alignments arising in the considered application.

One can achieve a stricter control on the number of indels and substitutions by imposing two independent bounds $k_I$ and $k_S$ on these quantities.
Our techniques can be easily adapted to the underlying problem, which we call the \emph{$(k_I, k_S)$-alignment} problem;
in particular, we obtain a sublinear-time $(1,1+\epsilon)$-bicriteria approximation of the combined cost $(k_I,k_S)$.
For $(k_I,k_S)=(k,k^2)$ as in the example above, the running time of our algorithm is still $\tOh(\frac{n}{k}+\poly(k))$, albeit the $\poly(k)$ term is slightly larger.

The weighted edit distance problem and the complementary highest-score alignment problem are widely used in applications
and discussed in multiple textbooks; see e.g.~\cite{A05,G97,JM09}. 
The simplest version involves two costs (for indels and substitutions, respectively)
and, up to scaling, is equivalent to $\EDA$.
Theoretical analysis usually focuses on the fundamental unit-cost setting, but many results extend to the setting of constant costs, e.g., they admit the same conditional lower bound~\cite{BK15} (and these problems are clearly equivalent from the perspective of logarithmic approximation).
We initiate a deeper investigation of how the costs of edit operations affect the complexity of computing the edit distance (exactly and approximately); going beyond the regime of $a=\Oh(1)$ reveals the gamut between Levenshtein distance ($a=1$) and Hamming distance ($a=n$, $k< 1$),
providing a holistic perspective on these two fundamental metrics.

\subsection{Our Contribution} 
We provide multiple results for both the $\EDA$ and $(k_I, k_S)$-alignment problems. 
Even though these are basic problems that seamlessly bridge the gap between Hamming and edit distances, surprisingly little is known about them.
Throughout, we assume that the algorithms are given $\Oh(1)$-time random access to characters of $X,Y$ and know the lengths $|X|,|Y|$. 

Our main results are approximation algorithms, but let us start with exact algorithms for the two problems.
We first observe that a simple adaptation of the approach of Landau and Vishkin~\cite{LV88, LMS98} 
computes $\EDA(X,Y)$ in time $O(n+k \min(n,ka))$, where $k\ge \EDA(X,Y)$.
\ifArXiv%
We then use similar techniques to derive an algorithm for the $(k_I, k_S)$-alignment problem. 
\else%
In the full version, we then use similar techniques to derive an algorithm for the $(k_I, k_S)$-alignment problem. 
\fi%
For convenience, we state these results for decision-only problems; it is straightforward to accompany every YES answer with a corresponding alignment. 

\begin{restatable}{proposition}{thmexact}\label{thm:exact}
Given two strings $X,Y\in \Sigma^*$ of total length $n$, 
a cost parameter $a\in \Zp$, and a threshold $k\in \Rp$,
one can compute $\EDA(X,Y)$ or report that $\EDA(X,Y)>k$
in deterministic time $\Oh(n + k \min(n, ka))$.
\end{restatable}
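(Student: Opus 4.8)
The plan is to adapt the classical Landau--Vishkin algorithm~\cite{LV88,LMS98}, which computes unit-cost edit distance via a dynamic program over \emph{alignment diagonals} augmented with longest-common-extension (LCE) queries. Recall that in the Landau--Vishkin scheme one tracks, for each diagonal $d$ (indexed by $j-i$) and each budget level $t$, the furthest row $i$ reachable by an alignment of cost $t$ that ends on diagonal $d$; since a substitution or an indel moves us by one unit of cost and at most one diagonal, and since matched runs are free, a single LCE query extends each frontier. The key observation for $\EDA$ is that the cost structure still has this "one step at a time" flavour: an indel costs $1$ and shifts the diagonal by one, whereas a substitution costs $\frac1a$ and keeps the diagonal fixed. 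So the natural state is $D[t][d]$ = the furthest row reachable by an alignment of cost exactly $\frac{t}{a}$ (for integer $t$) ending on diagonal $d$. The transition takes the max over: (i) $D[t-a][d\pm 1]+1$, coming from an indel on an adjacent diagonal; (ii) $D[t-1][d]+1$, coming from a substitution; and (iii) in each case, a free extension governed by an LCE query between the appropriate suffixes of $X$ and $Y$. We stop once some $D[t][d]$ reaches the end of its diagonal with $\frac{t}{a}\le k$, reporting $\EDA(X,Y)=\min$ such $\frac{t}{a}$, or declare $\EDA(X,Y)>k$ if no such $t\le ak$ works.

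The running-time analysis is where the two terms in $\Oh(n + k\min(n,ka))$ come from. We first build an LCE data structure on $X$ and $Y$ (e.g.\ via a suffix tree or suffix array with LCP, or a Karp--Rabin-based structure) in $\Oh(n)$ time, supporting $\Oh(1)$-time queries; that accounts for the additive $\Oh(n)$. For the DP itself, the cost budget ranges over $t\in\{0,1,\dots,\lfloor ak\rfloor\}$, so there are $\Oh(ak)$ levels. At level $t$, the relevant diagonals are those reachable by at most $t/a\le k$ indels, i.e.\ at most $\Oh(k)$ diagonals, and also trivially at most $\Oh(n)$ diagonals in total; each cell is processed in $\Oh(1)$ time using one LCE query. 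Hence the work is $\Oh(ak)\cdot\Oh(\min(k,n)) = \Oh(k\min(n,ka))$, giving the claimed bound. Correctness follows by the standard induction on $t$: $D[t][d]$ equals the furthest reach of a cost-$\frac{t}{a}$ alignment on diagonal $d$, using that an optimal such alignment ends either with a free match (handled by the LCE extension from the same $(t,d)$ after a cheaper prefix), a substitution (predecessor $(t-1,d)$), or an indel (predecessor $(t-a,d\pm1)$).

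Two minor technical points need care. First, since $a$ need not divide nicely into things, one should note that an optimal $\EDA$ alignment uses some integer number $s$ of substitutions and $r$ of indels with total cost $r+\frac sa$, which is always of the form $\frac ta$ for $t=ar+s\in\Zz$; so restricting the budget to integer multiples of $\frac1a$ loses nothing, and the "first $t$ with a completed alignment" is exactly $a\cdot\EDA(X,Y)$ when this is at most $ak$. Second, to get the $\min(n,ka)$ rather than just $ka$ in the second factor, observe that a completed alignment can have at most $n$ characters total, hence at most $n$ edits, so once $t/a$ (a lower bound on the number of substitutions plus indels is not quite right --- rather $t$ counts a weighted quantity) --- more carefully, the number of indels is at most $\min(k,n)$ and we never need to consider diagonals or rows outside $[0,n]$, which caps the per-level work at $\Oh(n)$ regardless of $ak$.

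The main obstacle I anticipate is purely bookkeeping: getting the diagonal-shift-versus-cost arithmetic exactly right so that the DP recurrence provably captures \emph{all} optimal alignments (in particular handling the interleaving of unit-cost indels with fractional-cost substitutions, and the free matched runs, within a single monotone frontier per diagonal), and confirming that the furthest-reaching-frontier invariant is preserved under the $(t-1,d)$ and $(t-a,d\pm1)$ transitions. Once that invariant is nailed down, both correctness and the time bound follow from the same counting argument as in Landau--Vishkin, with $a$ entering only as the cost-to-step ratio for indels.
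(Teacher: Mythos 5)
Your Landau--Vishkin adaptation is essentially the same as one of the two algorithms the paper uses (it matches \cref{alg:exact} and \cref{lem:exact}), but your running-time accounting contains a genuine gap. You count $\Oh(ak)$ cost levels, $\Oh(\min(k,n))$ diagonals per level, and one $\Oh(1)$-time LCE query per cell, arriving at a total of $\Oh(ak)\cdot\Oh(\min(k,n))$. You then assert that this equals $\Oh(k\min(n,ka))$, but it does not: $ak\cdot\min(k,n)=\min(ak^2,akn)$ whereas $k\cdot\min(n,ka)=\min(kn,ak^2)$, and these differ (by a factor of $a$, or more) precisely in the regime $ak>n$, $k\le n$. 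There your diagonal-band Landau--Vishkin runs in $\Theta(ak^2)$ time, which is asymptotically larger than the claimed $\Theta(kn)$.

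The paper closes this gap by running a \emph{second}, Ukkonen-style algorithm in that regime. It computes the DP table $T[x,y]=\EDA(X[0\dd x),Y[0\dd y))$ cell by cell, restricted to entries with $T[x,y]\le k$; this is correct because each entry depends only on entries with smaller-or-equal value. Since $T[x,y]\ge|x-y|$, only entries on the $\Oh(k)$ central diagonals are touched, giving $\Oh(n+nk)$ total work with no LCE structure needed. Taking the better of the two algorithms yields $\Oh\bigl(n+\min(ak^2,\,nk)\bigr)=\Oh\bigl(n+k\min(n,ka)\bigr)$. Without this second algorithm (or an argument for why the Landau--Vishkin scheme alone can be made to run in $\Oh(nk)$ time when $ak>n$, which your sketch does not supply), the stated bound does not follow. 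The rest of your proposal --- the state $D[t][d]$, the three transitions with predecessors $(t-1,d)$ and $(t-a,d\pm1)$, the LCE extension, and the induction establishing the furthest-reach invariant --- mirrors the paper's \cref{alg:exact} and \cref{lem:exact} and is sound.
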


\begin{restatable}{proposition}{biexact}\label{prp:bi}
Given two strings $X,Y\in \Sigma^*$ of total length $n$ and two integer thresholds $k_S,k_I >0$, one can decide whether there is a $(k_I,k_S)$-alignment or not (report YES or NO) in deterministic time $\Oh(n+k_Sk_I^2)$.
\end{restatable}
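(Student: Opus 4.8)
The plan is to adapt the Landau--Vishkin algorithm~\cite{LV88,LMS98}, replacing its single ``number of edits'' parameter by a \emph{pair} $(i,s)$ that bounds the number of indels and of substitutions separately. As preprocessing, I would build in $\Oh(n)$ time a data structure on $X$ and $Y$ that supports constant-time longest-common-extension queries: for $x\in\{1,\dots,|X|+1\}$ and $y\in\{1,\dots,|Y|+1\}$, let $\LCE(x,y)$ be the length of the longest common prefix of $X[x\dd|X|]$ and $Y[y\dd|Y|]$. This is classical, e.g.\ via a generalized suffix tree of $X$ and $Y$ equipped with a constant-time lowest-common-ancestor oracle.

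The heart of the algorithm is a furthest-reaching table $\De[i][s][d]$ indexed by $i\in\{0,\dots,k_I\}$, $s\in\{0,\dots,k'_S\}$ where $k'_S:=\min(k_S,|X|)$, and $d\in\{-i,\dots,i\}$; its intended meaning is that $\De[i][s][d]$ is the largest $x\in\{0,\dots,|X|\}$ such that $X[1\dd x]$ admits an alignment with $Y[1\dd x-d]$ using at most $i$ indels and at most $s$ substitutions (and $\De[i][s][d]=-\infty$ if no such $x$ exists, in particular whenever $|d|>i$). Adopting the convention that a ``diagonal'' is (index into $X$) minus (index into $Y$), a deletion from $X$ advances the $X$-index by one and moves from diagonal $d-1$ to $d$, an insertion leaves the $X$-index fixed and moves from diagonal $d+1$ to $d$, and a match or substitution advances the $X$-index by one and stays on $d$. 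This yields the recurrence that first computes
\[
x_0\;=\;\min\!\bigl(\,|X|,\ |Y|+d,\ \max(\De[i-1][s][d],\ \De[i-1][s][d-1]+1,\ \De[i-1][s][d+1],\ \De[i][s-1][d]+1)\,\bigr),
\]
where out-of-range table entries are read as $-\infty$, and the four inner terms correspond respectively to not using the extra indel, to a deletion, to an insertion, and to a substitution; it then extends over matches by setting $\De[i][s][d]=x_0+\LCE(x_0+1,\,x_0-d+1)$ when $x_0\ne-\infty$ and $\De[i][s][d]=-\infty$ otherwise. The base case is $\De[0][0][0]=\LCE(1,1)$, and the algorithm reports YES iff $\bigl||X|-|Y|\bigr|\le k_I$ and $\De[i][s][\,|X|-|Y|\,]=|X|$ for some $i\le k_I$ and $s\le k'_S$; back-pointers along the recurrence produce a witnessing alignment in the YES case.

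The running time is then immediate: the table has $\sum_{i=0}^{k_I}(2i+1)(k'_S+1)=\Oh(k_I^2 k'_S)=\Oh(k_I^2 k_S)$ cells, each filled in $\Oh(1)$ time using a single constant-time $\LCE$ query, which together with the $\Oh(n)$ preprocessing gives the claimed $\Oh(n+k_S k_I^2)$ bound. I expect the only delicate part to be the correctness proof, i.e.\ that $\De[i][s][d]$ indeed attains the value in its intended meaning. Soundness --- every finite value produced is realized by a genuine alignment --- is clear by unrolling the recurrence. Completeness --- every admissible alignment is captured --- is proved by induction on the number of edits in a fixed optimal alignment reaching a given cell, and two points need checking. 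First, restricting the third coordinate to $\{-i,\dots,i\}$ is harmless, since an alignment using at most $i$ indels never leaves the band of $i$ diagonals around the main diagonal. Second, the interleaving order of indels and substitutions is irrelevant: any nonempty alignment ends with a (possibly empty) run of matches preceded by a single edit, and deleting that edit gives an alignment to a cell with either one fewer indel or one fewer substitution; monotonicity of $\De$ in each of the first two coordinates (which the recurrence guarantees, thanks to the clamping and the $+1$ in the substitution term) together with the furthest-reaching property then lets the induction reattach the edit and the trailing matches without loss. The remaining bookkeeping --- clamping so that $x\le|X|$ and $x-d\le|Y|$, and ignoring an absent neighbor when $d=\pm i$ --- is routine.
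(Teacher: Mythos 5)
Your proposal is correct and matches the paper's own argument in all essentials: it adapts Landau--Vishkin by replacing the single edit budget with a pair $(v_I,v_S)$, maintains a furthest-reaching table indexed by (indels, substitutions, diagonal) with $\Oh(k_I^2 k_S)$ cells, extends each cell with one $\Oh(1)$-time $\LCE$ query after $\Oh(n)$ preprocessing, and proves correctness by induction on the total number of edits via case analysis on the last edit of an optimal alignment. The only cosmetic deviations are the sign convention for the diagonal, the explicit no-op term $\De[i-1][s][d]$ in the maximum (the paper omits it and instead handles the base case by clamping the main diagonal at $0$; your version makes monotonicity in the indel coordinate explicit, which is harmless and arguably cleaner), and the cap $k_S' = \min(k_S,|X|)$, which only tightens the bound.
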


The existing $n^{2-\Omega(1)}$ lower bounds~\cite{BI18,BK15} do not shed light as to whether these running times are optimal.
Interestingly, we strengthen the result of~\cite{BK15} significantly to show that the bound of \cref{thm:exact} is indeed tight for the entire range of parameters $a,k\ge 1$, assuming the Orthogonal Vectors Conjecture~\cite{Wil05}.
Proving such a lower bound for the bicriteria version remains open.

\subparagraph*{Approximation Algorithm for $\EDA$.}
Our main results are $(1+\epsilon)$-approximation algorithms for the $\EDA$ and $(k_I, k_S)$-alignment problems.
Let us first formally state the gap versions (also known as the promise versions) of these problems. 
\begin{problem}[Approximate Bounded \EDA]\label{prob:approx}
Given two strings $X,Y\in \Sigma^*$ of total length $n$, 
a cost parameter $a\in \Zp$, 
a threshold $k\in \Rp$, and an
accuracy parameter $\epsilon \in (0,1)$,
report YES if $\EDA(X,Y)\le k$, NO if $\EDA(X,Y)>(1+\epsilon)k$,
and an arbitrary answer otherwise.
\end{problem}

\begin{problem}[Bicriteria Approximation]\label{prob:bi}
Given two strings $X,Y\in \Sigma^*$ of total length $n$,
two thresholds $k_I,k_S>0$,
and two approximation factors $\alpha,\beta\ge 1$, 
return YES if there is a $(k_I,k_S)$-alignment,
NO if there is no $(\alpha k_I, \beta k_S)$-alignment, 
and an arbitrary answer otherwise.
\end{problem}

Our aim is to solve the above problems using algorithms whose running time is truly sublinear for suitable parameter values.
This is in stark contrast to unit-cost edit distance, where state-of-the-art algorithms are far from achieving $(1+\epsilon)$-approximation in sublinear time, even for a favorable choice of parameters.%
\footnote{The smallest approximation ratio known to improve upon the running time $\Oh(n+k^2)$ of the exact and conditionally optimal algorithm~\cite{LV88} stands at $3+o(1)$~\cite{GRS20}. The approximation ratio currently achievable in sublinear time is polylogarithmic in $k$ (if $k < n^{1/4-\Omega(1)}$)~\cite{BCFN22a} or polynomial in $k$ (otherwise)~\cite{GKKS22}.}

\begin{restatable}{theorem}{thmmain}\label{thm:main}
One can solve \cref{prob:approx} correctly with high probability within time:
\begin{enumerate}
	\item $\tOh(\frac{n}{\epsilon^2 ak})$ if $k < 1$;\label{it:ham}
	\item $\tOh(\frac{n}{\epsilon^3 a} +  ak^3)$ if $1 \le k < \frac{n}{\epsilon a}$; and\label{it:complex}
	\item $\Oh(1)$ if $k \ge \frac{n}{\epsilon a}$.\label{it:trivial}
\end{enumerate}
\end{restatable}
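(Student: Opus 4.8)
The plan is to dispatch the two boundary cases by elementary means and to concentrate on \cref{it:complex}. For \cref{it:trivial}, note that each indel costs at least $1$ and changes $\bigl\lvert|X|-|Y|\bigr\rvert$ by one, so $\EDA(X,Y)\ge\bigl\lvert|X|-|Y|\bigr\rvert$; conversely, aligning $\min(|X|,|Y|)$ characters by substitutions and the rest by indels gives $\EDA(X,Y)\le\bigl\lvert|X|-|Y|\bigr\rvert+\tfrac{1}{a}\min(|X|,|Y|)\le\bigl\lvert|X|-|Y|\bigr\rvert+\tfrac{n}{a}\le\bigl\lvert|X|-|Y|\bigr\rvert+\epsilon k$, using $k\ge\tfrac{n}{\epsilon a}$. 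Hence $\bigl\lvert|X|-|Y|\bigr\rvert\le k$ implies $\EDA(X,Y)\le(1+\epsilon)k$ (not a NO instance) and $\bigl\lvert|X|-|Y|\bigr\rvert>k$ implies $\EDA(X,Y)>k$ (not a YES instance), so answering YES exactly when $\bigl\lvert|X|-|Y|\bigr\rvert\le k$ settles \cref{prob:approx} in $\Oh(1)$ time. For \cref{it:ham}, if $|X|\neq|Y|$ then $\EDA(X,Y)\ge 1>k$ and we output NO; otherwise an alignment of cost below $1$ uses no indel, so $\EDA(X,Y)=\tfrac{1}{a}\HD(X,Y)$ and it suffices to separate $\HD(X,Y)\le ak$ from $\HD(X,Y)>(1+\epsilon)ak$. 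Sampling $\Theta(\tfrac{n\log n}{\epsilon^2 ak})$ coordinates, comparing them, and scaling the observed disagreement count does this with high probability by a Chernoff bound, in time $\tOh(\tfrac{n}{\epsilon^2 ak})$.

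For the main case \cref{it:complex}, I would follow the Landau--Vishkin strategy behind \cref{thm:exact}, made sublinear by replacing exact longest common extension (LCE) queries with sampled estimates. The structural starting point: if $\EDA(X,Y)\le k$ then there is an alignment with $i\le k$ indels and $s\le ak$ substitutions (as $i+\tfrac{s}{a}\le k$), i.e.\ a monotone staircase path in the $|X|\times|Y|$ grid with $i$ non-diagonal steps and $s$ mismatching diagonal steps, necessarily confined to the diagonal band $[-k,k]$ since $\bigl\lvert|X|-|Y|\bigr\rvert\le i\le k$; such a path splits into at most $k+1$ maximal diagonal runs, each a pair of equal-length substrings compared under Hamming distance. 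The exact template is a dynamic program over states $(i,s,d)$ with $i\in[0,k]$, $s\in[0,\lfloor ak\rfloor]$, $d\in[-k,k]$, whose entry is the furthest grid point on diagonal $d$ reachable with at most $i$ indels and $s$ substitutions; a transition takes the best of an indel step (from diagonals $d\pm1$) or a substitution step and then extends maximally over matching characters via an LCE query, and one answers YES iff the far corner is reached with $i+\tfrac{s}{a}\le k$. With $\Oh(n)$-time LCE preprocessing this runs in $\Oh(n+ak^3)$ time.

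To remove the dependence on $n$ I would avoid this preprocessing entirely: draw a random sample $P\subseteq[1,n]$ of positions, each included independently with probability $p=\Theta(\tfrac{\log n}{\epsilon^{3}a})$ (so $|P|=\tOh(\tfrac{n}{\epsilon^{3}a})$), and build from it a data structure that, given a diagonal $d$ in the band and a budget $\sigma$, returns an estimate of how far one may extend along $d$ before accumulating more than $\sigma$ mismatches --- in effect, an estimate of the $\sigma$-th mismatch position on $d$, read off the sampled positions, on which each true mismatch of $d$ is witnessed with probability $p$. Plugging these estimated extensions into the recurrence above, and reinterpreting the $s$-axis as \emph{estimated} mismatch counts capped at $\Oh(ak)$, yields the algorithm: its cost is $\tOh(\tfrac{n}{\epsilon^{3}a})$ for sampling and building the structure, plus $\tOh(ak^{3})$ for the $\Oh(ak^{3})$ DP transitions, each a single $\tOh(1)$-time query --- matching the claimed bound. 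Making the structure both constructible within this budget --- crucially, without paying a further factor $\Theta(k)$ for the width of the band --- and answerable in $\tOh(1)$ time is part of the technical work.

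The delicate point, and the main obstacle, is the error analysis. On the optimal path the $\le k+1$ diagonal runs carry at most $ak$ substitutions in total, but the sampled structure estimates a run's mismatch count accurately, in relative terms, only when that count is large; a run with $\Oh(1)$ substitutions cannot be pinned down on its own. One must therefore show that the \emph{accumulated} additive error over all runs is at most $\epsilon ak$ with high probability --- averaging over $\le k$ runs whose substitution counts sum to $\le ak$, via a Cauchy--Schwarz-type estimate, is what forces a sampling rate of this order and is the source of the $\epsilon^{-3}$ factor --- and, by a union bound over the $\poly(n)$ queries the DP issues, that the estimated cost of \emph{every} alignment it considers lies within an additive $\epsilon k$ of the true cost. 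Granting this, accepting an alignment exactly when its estimated cost is at most $(1+\tfrac{3}{4}\epsilon)k$ is complete whenever $\EDA(X,Y)\le k$ and sound whenever $\EDA(X,Y)>(1+\epsilon)k$ --- which is precisely what \cref{prob:approx} demands.
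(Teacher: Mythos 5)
Your treatments of the boundary cases~\ref{it:ham} and~\ref{it:trivial} match the paper's almost verbatim (for~\ref{it:trivial}, YES iff $\bigl\lvert|X|-|Y|\bigr\rvert\le k$ plus the bound $\EDA(X,Y)\le\bigl\lvert|X|-|Y|\bigr\rvert+\tfrac{n}{a}$; for~\ref{it:ham}, reject if $|X|\neq|Y|$, otherwise a Hamming-distance estimator). The main case is where your proposal diverges, and where it has genuine gaps.

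First, the granularity of the substitution axis. You index the DP by an exact substitution count $s\in[0,\lfloor ak\rfloor]$ and each transition from $(i,s-1,d)$ to $(i,s,d)$ extends past exactly \emph{one more} mismatch and then greedily over matches. Answering ``where is the next mismatch on diagonal $d$'' is precisely what a rate-$p$ sample cannot do: any single mismatch is invisible with probability $1-p$, so the resulting extensions are not merely noisy but can be unboundedly wrong. You acknowledge this (``a run with $\Oh(1)$ substitutions cannot be pinned down''), but the proposed repair --- bound the \emph{accumulated} additive error by $\epsilon ak$ via a Cauchy--Schwarz average over runs --- is not an algorithm: the DP must commit to a deterministic value at every state, and a concentration inequality about the final alignment cost does not tell the transition where to stop. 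The paper's fix is structural rather than analytic: it coarsens the cost axis to increments of $\epsilon$ (equivalently, substitution budgets in blocks of $d=\epsilon a$) and uses $\LCE_{\epsilon a,\epsilon}$ queries, so each extension step asks for the farthest point reachable with at most $\epsilon a$ more mismatches, up to a $(1{+}\epsilon)$ slack --- a quantity that a rate-$\tOh(1/(\epsilon^2 d))$ sample \emph{can} estimate. The resulting error bound is then a clean, deterministic accounting (\cref{lem:apx}): each of the $\le k$ indels in the optimal alignment absorbs at most $\epsilon a$ extra substitutions of total cost $\le\epsilon k$, with no probabilistic averaging over runs.

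Second, the query-complexity asymmetry. You propose a single sample $P\subseteq[1,n]$ at rate $p=\Theta(\tfrac{\log n}{\epsilon^3 a})$ and then answer extension queries on all $2k+1$ diagonals from it. But a sampled position $x$ of $X$ must be compared against $Y[x+s]$ for each candidate shift $s\in[-k\dd k]$: naively this costs $\Theta(k)$ per sample, i.e.\ $\Theta(\tfrac{nk}{\epsilon^3 a})$ total, which is not the claimed $\tOh(\tfrac{n}{\epsilon^3 a})$. This is exactly the ``first obstacle'' the paper flags in \cref{sec:overview}, and it is the central content of \cref{prp:LCE}: a coarse pre-filter eliminates most shifts, and if two or more shifts survive, both strings are forced to be approximately periodic, which lets one sample set serve all shifts with only a small additive overhead (\cref{sec:almostPeriodic,sec:AuxProb}). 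Your proposal does not engage this, so the preprocessing budget is unjustified.

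Third, the per-query time. You claim each of the $\Oh(ak^3)$ DP transitions is answered in $\tOh(1)$ time and call making this possible ``part of the technical work.'' The paper's accounting is different and does not support $\tOh(1)$: its $\LCE_{d,\epsilon}$ queries cost $\tOh(dw)=\tOh(\epsilon a k)$ each, and the $\tOh(ak^3)$ total comes from issuing only $\Oh(k^2/\epsilon)$ of them (one per coarse cost level and diagonal, see \cref{cor:apx}). With your fine-grained DP and $\Oh(ak^3)$ queries, a per-query cost of $\tOh(\epsilon a k)$ would give $\tOh(\epsilon a^2 k^4)$, far over budget, so the plan needs both the coarsening (to reduce the query count) and the LCE data structure (to keep preprocessing sublinear) --- neither of which is supplied.
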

\medskip

Cases~\ref{it:ham} and~\ref{it:trivial} are boundary cases;
in the former, there are only substitutions (Hamming distance), 
whereas, in the latter, $|X|$ and $|Y|$ must differ significantly when $\EDA(X,Y)>k$, which results in a trivial $(1+\epsilon)$-factor approximation. 
Our main technical contribution is Case~\ref{it:complex}, 
where we achieve sublinear running time for large enough $a$ and small enough $k$.%
\footnote{For small $a$, e.g., $a=1$, using our exact algorithm 
would improve upon the bound in Case~\ref{it:complex} and, in particular, reduce the dependency on $k$ from cubic to quadratic.
}
In the aforementioned applications~\cite{ZG03,NMH19,CYB09,H18,M10},
the indels are few in number and must be estimated with high accuracy;
in this case, $a$ should be set proportionally to the substitution-to-indel ratio, 
so that our $(1+\epsilon)$-approximation computes, in sublinear time, a highly accurate estimate of both the indels and the substitutions.

It is straightforward to solve \cref{prob:bi} with $\alpha=\beta=2+\epsilon$,
by simply applying \cref{thm:main} with $a=\frac {k_S} {k_I}$ and threshold $k=2k_I$. 
\ifArXiv%
However, the techniques we developed for \cref{thm:main} allow for a much stronger guarantee of $\alpha=1$ and $\beta=1+\epsilon$.
\else%
However, as shown in the full version, the techniques we developed for \cref{thm:main} allow for a much stronger guarantee of $\alpha=1$ and $\beta=1+\epsilon$.
\fi%

\begin{restatable}{theorem}{biapx}\label{thm:bi}
One can solve \cref{prob:bi} with $\alpha=1$ and $\beta=1+\epsilon$  (for any given parameter $\epsilon \in (0,1)$)
correctly with high probability in $\tOh(\frac{nk_I}{\epsilon^3 k_S}+k_Sk_I^3)$ time.
\end{restatable}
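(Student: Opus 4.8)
The plan is to derive \cref{thm:bi} by revisiting the algorithm behind Case~\ref{it:complex} of \cref{thm:main} and instrumenting it to keep separate track of the number of indels and the number of substitutions, rather than only of their weighted sum. We set $a\coloneqq\max(1,\lfloor k_S/k_I\rfloor)$, so that every $(k_I,k_S)$-alignment has $\EDA$-cost at most $k_I+k_S/a=\Oh(k_I)$, and we run the $\EDA$ machinery with this $a$ and threshold $k=\Theta(k_I)$. Substituting $a=\Theta(k_S/k_I)$ and $k=\Theta(k_I)$ into the running time $\tOh(\frac{n}{\epsilon^3 a}+ak^3)$ of Case~\ref{it:complex} already yields $\tOh(\frac{nk_I}{\epsilon^3 k_S}+k_Sk_I^2)$; the instrumented version loses one extra factor of $k_I$ in the polynomial term (explained below), matching the claimed $\tOh(\frac{nk_I}{\epsilon^3k_S}+k_Sk_I^3)$. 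The boundary regimes are dealt with as in \cref{thm:main}: if $k_S\le k_I$, or more generally $\frac{nk_I}{k_S}\ge n$, we fall back on the exact algorithm of \cref{prp:bi}; and if $k_S\ge n/\epsilon$, a direct case analysis on whether $|X|=|Y|$ settles the instance. So we may assume $k_I\le k_S<n/\epsilon$.

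The black box of \cref{thm:main} only decides a weighted-cost gap problem and does not expose how the alignment splits into indels and substitutions, so we open it up. Its engine is a Landau--Vishkin-style dynamic program over the diagonals within distance $k$ of the main diagonal: a configuration is a pair $(i,h)$, where $i$ is the number of indels used so far and $h$ the current diagonal, and it stores how far one can advance along $X$ (equivalently along $Y$) using $i$ indels subject to a maintained budget on substitutions. Since a single indel shifts the diagonal by one, a configuration is reachable with at most $k_I$ indels exactly when $|h|\le i\le k_I$; in particular the number of indels of any partial alignment is read off for free from its diagonal, and the configurations with at most $k_I$ indels form a subgrid of size $\Oh(k_I^2)$. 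The only quantity that is not available exactly is the number of substitutions consumed inside each matched run, since computing it outright would cost $\Omega(n)$ and ruin sublinearity.

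To handle substitutions, we reuse the sampling component of \cref{thm:main}: with $a=\Theta(k_S/k_I)$ it draws a single random sample of positions of $X$ and $Y$, of the size that gives the $\tOh(\frac{nk_I}{\epsilon^3 k_S})$ running-time term, and, by the same Chernoff-plus-union-bound analysis over the $\poly(n)$ substrings that can ever appear as a matched run, it supports an estimate of the Hamming distance of every such run whose errors sum to $\Oh(\epsilon k_S)$ along any alignment using at most $k_I$ runs. (Equivalently: since the $\EDA$ analysis estimates the weighted cost to within $\epsilon k$, and indels are exact, the number of substitutions of any relevant alignment is estimated to within $\epsilon k\cdot a=\Oh(\epsilon k_S)$.) We then run the same dynamic program, each diagonal extension being realized by $\LCE$/string-comparison work on sampled positions, and change only the acceptance rule: instead of testing whether some configuration that reaches the ends of both $X$ and $Y$ has weighted cost at most $(1+\epsilon)k$, we accept iff some such configuration has $i\le k_I$ and estimated substitution count at most $(1+\tfrac\epsilon2)k_S$. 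With the error bounds above, every genuine $(k_I,k_S)$-alignment is accepted, and every accepted instance exhibits a $(k_I,(1+\epsilon)k_S)$-alignment, which is precisely \cref{prob:bi} with $\alpha=1$ and $\beta=1+\epsilon$; the combinatorial bookkeeping over the $\Oh(k_I^2)$ configurations contributes the $k_Sk_I^3$ term by a charging argument along the lines of \cref{prp:bi}, now paying an extra $\Oh(k_I)$ factor for the per-run substitution estimates.

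I expect the crux to be exactly this error control. One cannot afford an exact substitution count along matched runs without destroying sublinearity, yet a single sample must simultaneously give accurate estimates for all polynomially many candidate runs, and these errors must not accumulate past $\Oh(\epsilon k_S)$ when summed over the up to $k_I$ runs of a dynamic-programming path. The remedy is to reuse --- with the right scaling --- the sampling already built for \cref{thm:main}; the delicate part is verifying that it meshes with the budgeted-extension primitive underlying \cref{thm:main}, so that every dynamic-programming transition touches only sampled positions and $\LCE$ structures. Once that interface is in place, correctness (via the union bound) and the running-time accounting follow the pattern of the exact algorithm of \cref{prp:bi}.
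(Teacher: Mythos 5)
Your high-level plan --- instrument the Landau--Vishkin-style DP behind \cref{thm:main}, track the number of indels explicitly via the diagonal structure, and replace exact extensions with $\LCE_{d,\epsilon}$ queries at threshold $d=\Theta(\epsilon k_S/k_I)$ so that the per-alignment error in the substitution count stays $\Oh(\epsilon k_S)$ --- is precisely the route the paper takes. The paper makes the DP genuinely three-dimensional, indexed by $(v_I,v_S,s)$ with $v_S$ explicitly discretized in steps of $d=\ceil{\epsilon k_S/(2+k_I)}$; each $\LCE_{d,\epsilon}$ query costs $\tOh(dk_I)$ and there are $\Oh(k_I^2\cdot k_S/d)$ of them, giving the $\tOh(k_Sk_I^3)$ term, while the preprocessing contributes $\tOh(n/(\epsilon^2 d))=\tOh(nk_I/(\epsilon^3 k_S))$.

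The genuine gap is in your fallback condition. Your analysis correctly identifies that the approximate DP introduces up to $d$ extra substitutions per run, hence $\Theta(k_Id)$ in total, and that this is $\Oh(\epsilon k_S)$ only if $d\lesssim\epsilon k_S/k_I$. But $d$ is an integer $\ge 1$ in the approximate-$\LCE$ machinery, so whenever $\epsilon k_S < k_I$ the best you can do is $d=1$, and the accumulated error $\Theta(k_I)$ then strictly exceeds $\epsilon k_S$. You only fall back to the exact algorithm of \cref{prp:bi} when $k_S\le k_I$; the regime $k_I < k_S \lesssim k_I/\epsilon$ is left to the approximation algorithm, where the $(1,1+\epsilon)$ guarantee fails. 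Your fallback test is motivated purely by running time and misses this accuracy constraint. The fix is to fall back whenever $\epsilon k_S\lesssim k_I$ (the paper uses $\epsilon k_S/5 < 2+k_I$); the exact algorithm still meets the target time because $\epsilon^3 k_S\lesssim k_I$ gives $n\lesssim nk_I/(\epsilon^3 k_S)$.

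Two smaller points. First, your assertion that ``the number of indels of any partial alignment is read off for free from its diagonal'' is false: the diagonal $h$ only lower-bounds the indel count (an insert/delete pair changes the count but not the diagonal). This does not sink the argument because you keep $i$ in the state, but it means the DP really needs both $i$ and $h$ as separate indices. Second, your description of the DP as indexed by the pair $(i,h)$ with ``a maintained budget on substitutions'' glosses over the need for the budget itself to be a third DP index (discretized in multiples of $d$); without it, the DP entry would have to record Pareto-optimal (advance, substitution-count) pairs, which is not bounded by $\Oh(k_I^2)$ states. The paper's version makes this explicit.
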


Again, the running time is sublinear whenever $k_I \ll k_S$ and the overall distance is small.

% ----------------------------------------------------------------------

\subsection{Notation}
A \emph{string} $X\in \Sigma^*$ is a finite sequence of characters from an \emph{alphabet} $\Sigma$.
The length of $X$ is denoted by $|X|$ and, for $i\in [0\dd |X|)$,\footnote{For $i,j{\in}\mathbb{Z}$, denote $[i\dd j]=\{k\in \mathbb{Z} : i \le k \le j\}$, $[i\dd j)=\{k\in \mathbb{Z} : i \le k < j\}$, $(i\dd j]=\{k\in \mathbb{Z} : i < k \le j\}$.} the $i$th character of $X$ is denoted by $X[i]$. 
A string $Y$ is a \emph{substring} of $X$ if $Y = X[i]X[i+1]\cdots X[j-1]$ for some $0\le i \le j \le |X|$;
this \emph{occurrence} of $Y$ is denoted by $X[i\dd j)$ or $X[i\dd j-1]$ and called a \emph{fragment} of $X$.
According to this convention, the empty string has occurrences $X[i\dd i)$ for $i\in [0\dd |X|]$.
We use $\HD$ to denote the Hamming distance between two strings and $\ED$ to denote the standard edit distance.

A key notion in our work is that of Longest Common Extension (LCE) queries, defined as follows
for indices $x\in [0\dd |X|]$ and $y\in [0\dd |Y|]$ in strings $X,Y\in \Sigma^*$:\footnote{Here and throughout, we implicitly assume the range $\ell\in [0\dd \min(|X|-x,|Y|-y)]$ needed to guarantee that $X[x\dd x+\ell)$ and $Y[y\dd y+\ell)$ are well-defined.}
\[
  \LCE(x,y) = \max\{\ell :  X[x\dd x+\ell)=Y[y\dd y+\ell)\}.
\]
After $\Oh(|X|+|Y|)$-time preprocessing, $\LCE$ queries can be answered in $\Oh(1)$ time~\cite{LV88,FFM00}.
This notion is often generalized to find the maximum length for which 
the corresponding substrings still have a small Hamming distance; formally,
\[
  \LCE_d(x,y) = \max\{\ell : \HD(X[x\dd x+\ell),Y[y\dd y+\ell))\le d\}.
\]
We further define $\LCE_{d,\epsilon}(x,y)$ as an arbitrary value between $\LCE_d(x,y)$ and $\LCE_{(1+\epsilon)d}(x,y)$; intuitively, this represents $\LCE_d(x,y)$ up to a $(1+\epsilon)$-approximation of the value of $d$.

\subsection{Technical Overview}\label{sec:overview}

\subparagraph*{Exact Algorithm for \boldmath $\EDA$.}
The algorithm behind \cref{thm:exact} is rather straightforward, but it serves as a foundation for subsequent results.
The naive way of computing $\EDA(X,Y)$ is to construct a dynamic-programming (DP) table
$T[x,y]=\EDA(X[0\dd x), Y[0\dd y))$.
Considering only entries with $|x-y|\le k$ (others clearly exceed $k$)
easily yields an $\Oh(n+nk)$-time algorithm.
If $ak \le n$, we achieve a better running time of $\Oh(n+ak^2)$ by generalizing the Landau--Vishkin algorithm~\cite{LV88,LMS98} to allow $a>1$. 
As explained next, this involves answering $\Oh(ak)$ $\LCE$ queries for each of $(2k+1)$ diagonals (which consist of entries $T[x,y]$ with fixed $y-x$).
The table $T$ is monotone along the diagonals, and thus one can construct ``waves'' of entries with a common value $v$.
This structure is conveniently described as another table that, 
for every possible cost $v\in \set{0,\frac{1}{a},\ldots, k-\frac{1}{a},k}$ 
and every possible shift (diagonal) $s\in [-k\dd k]$, contains
an entry $\De_v[s]$ defined as the furthest row $x$ in $T$ with $T[x,x+s]\le v$,
or equivalently, the maximum index $x$ such that $\EDA(X[0\dd x), Y[0\dd x+s))\le v$. 
To compute $\De_v[s]$, the algorithm uses three previously computed entries (for costs $v'<v$)
and then performs a single $\LCE$ query.
The running time of this algorithm is dominated by the construction and usage 
of a data structure answering $\LCE$ queries; see \cref{sec:ExactAlg} for details.

\subparagraph*{A Naive Sampling Algorithm.}
A natural way to speed up the previous algorithm at the expense of accuracy is to use sampling. 
As a first attempt, 
let the algorithm sample positions in $X$ at some rate $r\in(0,1)$
and compare each sampled $X[i]$ with $Y[i+s]$ for every $s\in [-k\dd k]$.
The algorithm then reports the minimum-cost alignment of the samples
(i.e., each sampled $X[i]$ is associated with exactly one shift $s$ and is matched to $Y[i+s]$),
where shift changes cost 1 per unit (i.e., changing $s$ to $s'$ costs $|s-s'|$)
and mismatches cost $\frac{1}{ar}$ each.
At best, we may hope to set $r=\Theta(\frac{1}{ka})$, 
the minimum needed for an optimal alignment with $\Theta(ka)$ substitutions~and~no~indels.

This approach faces two serious obstacles.
First, the query complexity is asymmetric: the algorithm queries, in expectation,
$\Oh(nr)$ positions in $X$ and $\Oh(knr)$ positions in~$Y$.
The query complexity into $Y$ must be improved, ideally to $\Oh(nr)$ as well.
Second, the hope for sampling rate $r = \Theta(\frac{1}{ka})$ is not realistic, 
because, in the case of $a=1$ (i.e., the standard edit distance),
this would distinguish between $\ED(X,Y)\le k$ and $\ED(X,Y)>(1+\epsilon)k$ using $\tOh(\frac{n}{k})$ samples, which is far beyond the reach of current techniques for edit distance estimation.

To overcome the first obstacle, we utilize approximate periodicity in our subroutine for answering approximate LCE queries, described later in this overview.
To tackle the second obstacle, we make sure that every observed edit is charged against approximately $\frac1r$ unobserved mismatches (substitutions), for a judicious choice of $r=\tilde O(\frac{1}{\epsilon a})$, as explained~next.

\subparagraph*{\boldmath Our $(1+\epsilon)$-Approximation Algorithm for $\EDA$.}
Let us modify our exact algorithm as follows. Instead of considering all possible values of $v$,
we enumerate only over $v\in\set{0, \epsilon,2\epsilon,\ldots, k}$,
where $\epsilon\ge \frac1a$ (otherwise, the exact algorithm already meets the requirements of \cref{thm:main}).
More precisely, for each diagonal $s$, we seek a value $\Da_v[s]$
that approximates $\De_v[s]$ up to a $(1+\epsilon)$-factor slack in~$v$.
In other words, $\Da_v[s]$ is between $\De_v[s]$ and $\De_{v(1+\epsilon)}[s]$;

Suppose that we have already computed $\Da_{v'}[s]$ for all $v'<v$.
Then, we can compute $\Da_v[s]$ using a single $\LCE_{\epsilon a,\epsilon}$ query.
Intuitively, this works well because these queries 
$(1+\epsilon)$-approximate the number of substitutions in the optimal alignment, 
except for up to $\epsilon a$ additional substitutions (of total cost $\epsilon$)
per indel (of cost $1$) in the optimal alignment; see \cref{sec:approx} for details.
Thus, the main engine of our algorithm is the following new tool:\footnote{%
	To be precise, our algorithm makes $\Oh(\epsilon^{-1}k^2)$ queries of \cref{prob:LCE} with $d=\Theta(\epsilon a)$ and $w=k$. A single $\LCE_{\epsilon a,\epsilon}(0,0)$ query may already require accessing $\Omega(\frac{n}{\epsilon^3 a})$ characters, so the best time we could hope for is $\tOh(\frac{n}{\epsilon^3 a}+\epsilon^{-1} k^2)$.
	If $\epsilon = \Theta(\frac{1}{a})$, this cannot improve upon our exact algorithm of~\cref{thm:exact}.
}

\begin{problem}[Approximate LCE queries]\label{prob:LCE}
Given strings $X,Y$ of total length $n$, a threshold $d\in \Zp$, a parameter $\epsilon\in \Rp$,
and a width $w\in \Zp$, build a data structure that efficiently computes $\LCE_{d,\epsilon}(x,y)$
for any  $x\in [0\dd |X|]$ and $y\in [0\dd |Y|]$ such that $|x-y|\le w$.
\end{problem}

\subparagraph*{Answering \boldmath $\LCE_{d,\epsilon}$ Queries.}
The following theorem captures our key technical innovation.

\begin{restatable}{theorem}{prpLCE}\label{prp:LCE}
After $\tOh(\frac{n}{\epsilon^2d})$-time randomized preprocessing (successful w.h.p.),
the queries of \cref{prob:LCE} can be answered deterministically 
(with no further randomness) in~$\tOh(dw)$~time.
\end{restatable}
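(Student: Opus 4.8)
The plan is to answer $\LCE_{d,\epsilon}(x,y)$ queries by reducing, in most cases, to a handful of exact $\LCE$ queries on $X$ and $Y$, and handling the remaining case via approximate periodicity. Fix a query $(x,y)$ with $|x-y|\le w$. I would process the substrings $X[x\dd {\cdot})$ and $Y[y\dd {\cdot})$ in \emph{blocks} of some length $\Theta(dw)$ (or $\Theta(w/\epsilon)$), using exact $\LCE$ to jump over maximal matching stretches and a mismatch counter to skip single mismatches. If within the first block the two substrings already accumulate more than $d$ mismatches, then the answer lies inside this block, and since the block length is $\tOh(dw)$ we can afford to scan it and compute $\HD$ prefix sums exactly in $\tOh(dw)$ time per query; this gives $\LCE_d$ exactly, which certainly lies between $\LCE_d$ and $\LCE_{(1+\epsilon)d}$. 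So the hard case is when $X[x\dd x+\ell)$ and $Y[y\dd y+\ell)$ stay very close in Hamming distance (at most $d$ mismatches) for a length $\ell$ that is $\omega(dw)$ — i.e., a long, nearly identical stretch — and this is where the $\tOh(n/(\epsilon^2 d))$ preprocessing must do the work.

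In that hard case I would exploit the following structural fact: if $X[x\dd x+\ell)$ and $Y[y\dd y+\ell)$ have at most $d$ mismatches with $\ell \gg dw \ge d|x-y|$, then $X[x\dd x+\ell)$ is ``close to periodic'' with period $q=|y-x|$ (it matches its own shift by $q$ except at $\Oh(d)$ positions, because matching $Y[y\dd\cdot)$ against $X[x\dd\cdot)$ and against $Y[y\dd\cdot)$'s own relation to $X$ forces approximate self-overlap). The preprocessing step would therefore build, for each relevant period length $q\in[1\dd w]$ (or each power-of-two scale), a data structure that stores the ``approximate-periodicity structure'' of $X$ and of $Y$: a covering of the parts of $X$ that are $(1+\epsilon)$-approximately periodic with period $q$, together with the locations of the $\Oh(\cdot)$ mismatches with the shifted copy, sampled at rate roughly $\frac{1}{\epsilon^2 d}$ so that counts of mismatches over any window can be estimated up to a $(1+\epsilon)$-factor from $\Oh(\log n)$ stored samples. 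Summing $\tOh(\frac{n}{\epsilon^2 d})$ over $\Oh(\log w)$ scales stays $\tOh(\frac{n}{\epsilon^2 d})$. Given a query in the hard case, I would first verify (with a constant number of exact $\LCE$ queries and a few character probes) that we are indeed in the approximately-$q$-periodic regime for $q=|y-x|$, then read off from the precomputed structure the position where the running mismatch count between $X[x\dd\cdot)$ and $Y[y\dd\cdot)$ crosses $d$ — up to the $(1+\epsilon)$ slack in $d$ that $\LCE_{d,\epsilon}$ permits — which takes $\tOh(1)$ (or $\tOh(dw)$ to be safe) time. The only randomness is in the sampling during preprocessing; once the samples are fixed and the w.h.p.\ event (that every window's mismatch count is well estimated) holds, queries are fully deterministic.

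The main obstacle I anticipate is making the ``approximately periodic implies a short, queryable certificate'' step quantitatively tight: I need that a long low-Hamming-distance stretch really does force $X$ (and $Y$) to be $(1+\epsilon)$-approximately periodic with a \emph{single} small period $q=|x-y|$, with only $\Oh(d)$ exceptions, and I need the precomputed sampled structure to let me locate the $d$-th mismatch (within a $(1+\epsilon)$ factor) without ever reading $\Omega(\ell)$ characters. This requires (i) a triangle-inequality / string-synchronization argument bounding self-overlap mismatches, (ii) a careful charging so that the $\epsilon a$-type slack in $d$ absorbs the error from subsampled mismatch counts, and (iii) ensuring different scales $q$ and different starting points do not blow up the preprocessing beyond $\tOh(\frac{n}{\epsilon^2 d})$ — plausibly via a hierarchical/exponential bucketing of periods and a standard argument that each position of $X$ participates in $\Oh(\log n)$ approximate-periodic regions at the sampled granularity. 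Handling the boundary between the ``short block'' case and the ``periodic'' case cleanly — in particular when the stretch is moderately long, neither fully short nor cleanly periodic — will need a uniform recursion on scales, doubling the window length each time and paying $\tOh(dw)$ only at the final scale where the answer is found.
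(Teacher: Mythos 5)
Your high-level plan (split into a short case handled exactly and a long case handled via approximate periodicity) matches the paper's structure, but the crucial periodicity step does not go through as you have stated it, and the sampling mechanism that makes the long case work is missing.

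The claimed structural fact --- that if $\HD(X[x\dd x+\ell),Y[y\dd y+\ell))\le d$ for some $\ell\gg dw$ then $X[x\dd x+\ell)$ is approximately periodic with period $q=|y-x|$ --- is simply false. A single low-Hamming-distance stretch between $X[x\dd\cdot)$ and $Y[y\dd\cdot)$ imposes no structure on $X$ alone: take $X$ to be an arbitrary aperiodic string and set $Y$ so that $Y[y\dd y+\ell)=X[x\dd x+\ell)$; then the Hamming distance is zero for any $\ell$, yet $X$ has no period. (And when $x=y$, there is no nonzero $q$ to speak of at all.) The parenthetical justification you give does not yield a triangle-inequality chain from $X[x\dd\cdot)$ to $X[x+q\dd\cdot)$. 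In the paper's proof, periodicity is extracted at preprocessing time precisely because the data structure must serve \emph{all} $w$ shifts: it first runs a coarse filter that, for each shift $s\in[0\dd w)$, distinguishes $\HD(X,Y[s\dd s+|X|))\le dw$ from $>2dw$; only when at least two shifts $\tilde s_1\ne\tilde s_2$ survive does the triangle inequality force $X$ and $Y$ to be almost periodic with period $p=|\tilde s_1-\tilde s_2|$ up to $\Oh(dw)$ mismatches. If exactly one shift survives, the paper handles it with naive sampling and hard-codes ``$\HD > d$'' for all other shifts. Your per-query, per-period derivation of periodicity skips this filtering step and thus has no valid foundation.

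The second gap is that even granting almost-periodicity, ``sample at rate $\tOh(1/(\epsilon^2 d))$ and read off the mismatch count'' is not an algorithm. A rate-$r$ sample of periodicity violations of $X$ (or of $Y$) does not give an unbiased or even usable estimator of $\HD(X[x\dd x+\ell),Y[y\dd y+\ell))$, because most mismatches between $X$ and the shifted $Y$ are not violations of periodicity, and conversely a single periodicity violation in $Y$ near the far end of the window can change the answer without ever being seen by a rate-$r$ sample of $X$. The paper's key technical device, absent from your sketch, is the symmetric double sample $S_X,S_Y$ together with a classification of every mismatch $(x,y)$ into $M_X$ or $M_Y$ by comparing the occurrence counts $m_x$ and $m_y$ of $X[x]$ and $Y[y]$ within contexts $C_x,C_y$ of length $c=\tOh(1/r)$; a mismatch in $M_X$ is counted iff $x\in S_X$, and in $M_Y$ iff $y\in S_Y$. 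This makes the estimator exactly $\Bin(h,r)$ while ensuring that each counted mismatch is, with high probability, detectable because the sampling has already revealed a non-uniform context. Without this mechanism you cannot justify that the query time is $\tOh(dw)$ while the output is correctly distributed.

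Finally, the paper does not build a per-period preprocessing structure over $q\in[1\dd w]$. It builds one structure per dyadic range $[x\dd x+\ell)\subseteq[0\dd|X|)$ (and a naive sampler for short ranges), combines dyadic-range estimates via composition of capped binomial variables, and then binary-searches over $\ell$. This avoids your hand-waved claim that ``each position participates in $\Oh(\log n)$ approximate-periodic regions,'' which you would need to formalize carefully. I'd encourage you to reorganize around the dyadic decomposition and the two-surviving-shifts filter; those are the missing structural ingredients that make the $\tOh(n/(\epsilon^2 d))$ preprocessing and $\tOh(dw)$ query time simultaneously achievable.
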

Previous work on edit distance~\cite{GKS19,KS20a,BCR20,GRS20,BCFN22a} developed data structures for approximate $\LCE$ queries in a weaker version allowing any value between $\LCE_0(x,y)$ and $\LCE_d(x,y)$. 
As discussed in subsequent paragraphs, these results rely on identifying perfectly periodic string fragments,
whereas, in our case, periodicity is approximate and holds up to $O(dw)$ mismatches. 

In this overview, we focus on a simplified problem that already necessitates the main novel ideas behind \cref{prp:LCE}.
Namely, we consider the task of distinguishing, for any given $s\in [0\dd w)=[0\dd |Y|-|X|]$,
whether $\HD(X, Y[s\dd s+|X|))$ is $\le d$ or $>1.1d$.
To see that this is a special case of \cref{prob:LCE},
observe that an $\LCE_{d,0.1}(0,s)$ query must return $|X|$ 
if $\HD(X, Y[s\dd s+|X|))\le d$ and a value strictly smaller than $|X|$
if $\HD(X, Y[s\dd s+|X|))> 1.1d$.

In this setting, our goal is to answer queries in $\tOh(dw)$ time after $\tOh(\frac{n}{d})$-time preprocessing.
In particular, we want to answer all the $w$ possible queries while probing $\tOh(\frac{n}{d} + dw^2)$ characters in total.
For comparison, let us consider as a baseline the guarantees achieved using simple techniques.
A very naive approach would be to answer each query independently by sampling characters of $X$ with rate $\tOh(\frac{1}{d})$ and comparing each sampled character $X[x]$ with the aligned character $Y[x+s]$; this solution probes $\tOh(\frac{nw}{d})$ characters in total.
Another idea, employed in~\cite{BCR20}, could be to sample the characters of $X$ and $Y$ with the same rate $\tOh(\frac{1}{\sqrt{d}})$ and,
for each $s\in [0\dd w)$, compare $X[x]$ with $Y[x+s]$ whenever both characters have been sampled simultaneously, which happens with probability $\tOh(\frac{1}{d})$; this solution probes $\tOh(\frac{n}{\sqrt{d}})$ characters in total.
Overall, we conclude that the simple techniques result in an effective sampling rate much larger than $\frac{1}{d}$,
while our goal is to achieve the optimal rate $\tOh(\frac{1}{d})$ (necessary already to answer a single query)
at the price of a moderate additive cost.

The first step of our solution is to eliminate shifts $s\in [0\dd w)$
with very large distance $\HD(X, Y[s\dd s+|X|))$.
For this, we use the naive approach to distinguish between Hamming distances $\le dw$ and $>2dw$.
This step costs $\tOh(\frac{n}{dw})$ time per query, which is $\tOh(\frac{n}{d})$ in total.
If our filtering leaves us with at most one candidate shift $\tilde{s}\in [0\dd w)$, the answer for this shift is precomputed naively by sampling.

The interesting case is when at least two shifts $\tilde{s}_1\neq \tilde{s}_2$ remain. 
A key observation is that the strings $X$ and $Y$ are then almost periodic with period $p:=|\tilde{s}_1-\tilde{s}_2|$;
formally, $X[x]=X[x-p]$ holds for all but $\Oh(dw)$ positions $x\in [p\dd |X|)$, and similarly for $Y$.
In order to streamline notation, in this overview, we show how to exploit this structure for $p=1$.

In this case, most positions $x$ in $X$ 
have \emph{uniform contexts} of any fixed length $c$, i.e., $X[x\dd x+c)=Y[y\dd y+c)=\mathtt{A}^c$ holds 
for some $\mathtt{A}\in \Sigma$ and all $y=x+s$ across $s\in [0\dd w)$. 
Intuitively, our goal is to quickly process uniform regions
and spend our additional budget of $\tOh(dw)$ time per query
on the $\Oh(cdw)$ positions with non-uniform contexts (each periodicity violation affects $\Oh(c)$ contexts).
Intuitively, uniform regions are much easier to handle due to the following observation: If we sample two strings $U,V\in \Sigma^m$ with an appropriate rate $r=\tOh(\frac{2}{d})$, and this sampling reveals only $\mathtt{A}$s, then $\HD(U,V)\le \HD(U,\mathtt{A}^m)+\HD(V,\mathtt{A}^m)\le \frac{d}2 + \frac{d}{2} \le d$ holds with high probability.
This argument does not require any synchronization between samples,
so we can reuse the same sample for all shifts.

The remaining challenge is that a rate-$\tOh(\frac{1}{d})$ sample cannot perfectly identify uniform and non-uniform regions. Thus, we need to design a sampling mechanism that provides an unbiased estimate of $\HD(X, Y[s\dd s+|Y|))$ 
yet allows skipping regions that look uniform from the perspective of a rate-$\tOh(\frac{1}{d})$ sampling.
For this, we use two random subsets $S_X \subseteq [0\dd |X|)$ and $S_Y \subseteq [0\dd |Y|)$, where each index is picked independently with rate $r=\tOh(\frac{1}{d})$.
Every mismatch $X[x]\ne Y[y]$ is given two chances for detection: when $x\in S_X$ or $y\in S_Y$.
In order to avoid double counting,
the mismatch at position $x\in S_X$ will be counted if and only if 
$X[x]$ has fewer occurrences than $Y[y]$ within the contexts
$X[x\dd x+c)$ and $Y[y\dd y+c)$ of length $c=\tOh(\frac{1}{r})$.
In this case, with high probability, the sampling reveals a character other than $X[x]$ within the contexts,
and thus the algorithm classifies $x$ as \emph{non-uniform}.
For positions $x\in S_X$ classified as non-uniform, our query algorithm explicitly compares $X[x]$ to $Y[y]$ (for $y=s+x$)
and, upon detecting a mismatch, reads the entire contexts $X[x\dd x+c)$ and $Y[y\dd y+c)$ to verify 
whether indeed $X[x]$ is less frequent than $Y[y]$ within these contexts.
We can afford both steps because the expected number of positions $x\in S_X$ with non-uniform contexts is $\Oh(rcdw)=\tOh(dw)$ and the expected number of detected mismatches $X[x]\ne Y[y]$ is $\Oh(rd)$.

For a complete proof of \cref{prp:LCE}, without the simplifying assumptions, see \cref{sec:LCE}.
The techniques employed to eliminate these assumptions build upon~\cite{KS20a}, where the simplified problem
involves deciding, for every $s\in [0\dd w)$, whether $\HD(X,Y[s\dd s+|X|))$ is $0$ or $>d$. 
Unlike in our setting, that auxiliary problem allows eliminating a shift $s$ upon discovery of a single mismatch $X[x]\ne Y[x+s]$.
This leads to a relatively straightforward $\tOh(\frac{n}{d}+w)$-time solution~\cite{KS20a,BCFN22a}, which we sketch next to facilitate easy comparison with our new method.
First, exact pattern matching is used to filter out shifts $s$ such that $X[0\dd 2w)\ne Y[s\dd s+2w)$
If at most one shift $\tilde{s}$ remains, then a naive $\tOh(\frac{n}{d})$-time solution can be used to estimate $\HD(X,Y[\tilde{s}\dd \tilde{s}+|X|))$.
On the other hand, if there are at least two shifts $\tilde{s}_1\ne \tilde{s}_2$ left, then $p:=|\tilde{s}_1-\tilde{s}_2|$ is a period of $X[0\dd 2w)$. 
A naive rate-$\tOh(\frac{1}{d})$ sampling is used to check whether this period extends to the entire $X$ and the relevant portion of $Y$. 
If this procedure does not identify any violation of the period, then all the remaining shifts satisfy $\HD(X,Y[s\dd s+|X|))\le d$ with high probability.
Otherwise, another call to exact pattern matching (for substrings near the period violation) eliminates all but at most one of the remaining shifts.

\subparagraph*{Algorithms for the \boldmath $(k_I,k_S)$-Alignment Problem.}
Our algorithms computing $\EDA$ (exactly or approximately) can be easily adapted to solve the $(k_I,k_S)$-alignment problem.
The only modification required is to add another dimension 
to the $\De$ and $\Da$ tables so that the algorithm separately keeps track of the number of indels and substitutions (as opposed to their total weight).
In particular, $\De_{v_I,v_S}[s]$ is the maximum index $x$ such that $X[0\dd x)$ and $Y[0\dd x+s)$ admit a $(v_I,v_S)$-alignment,
and $\Da_{v_I,v_S}[s]$ approximates $\De_{v_I,v_S}[s]$ up to a small slack in the value $v_S$.
The exact algorithm uses $\Oh(k_S k_I)$ $\LCE$ queries for each of the $2k_I+1$ main diagonals,
whereas the approximation algorithm asks $\Oh(\frac{k_S k_I}{d})$ $\LCE_{d,\epsilon}$ queries per diagonal for a carefully chosen parameter $d=\Theta(\frac{\epsilon k_S}{k_I})$; 
\ifArXiv%
see \cref{app:bic} for details.
\else%
see the full version for details.
\fi%

\subparagraph*{Tight Lower Bound for Exact Computation of $\EDA$.}
Our lower bound essentially proves that the $\Oh(n+k\cdot \min(n,ak))$ running time in \cref{thm:exact}
is, up to subpolynomial factors, point-wise optimal as a function of $(n,a,k)$. 
We formalize this delicate statement as follows:

\begin{restatable}{theorem}{thmlb}\label{thm:lb}
	Consider sequences $(a_n)_{n=1}^{\infty}$ and $(k_n)_{n=1}^{\infty}$ with entries $a_n,k_n\in [1\dd n]$ computable in $\poly(n)$ time.
	Unless the Orthogonal Vectors Conjecture fails, there is no algorithm that, for some fixed $\epsilon > 0$, every $n\in \Zp$, and all strings $X,Y$ with $|X|+|Y|\le n$,
	in $\Oh((n + k_n\cdot \min(n, a_n k_n))^{1-\epsilon})$ time computes $\ED_{a_n}(X,Y)$ or reports that $\ED_{a_n}(X,Y)> k_n$.
\end{restatable}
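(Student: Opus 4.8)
The plan is a fine-grained reduction from Orthogonal Vectors (OV). For every $n$, using the prescribed integers $a=a_n$ and $k=k_n$, we will construct strings $X,Y$ with $|X|+|Y|\le n$ such that a given OV instance on $N$ vectors has an orthogonal pair iff $\ED_a(X,Y)\le k$, with $N=\Theta(\sqrt{k\min(n,ak)})$ up to a $\mathrm{polylog}(n)$ factor; since $k\min(n,ak)=\Theta(N^2)$, an algorithm as in the statement would then decide OV in time $\Oh(N^{2-\epsilon'})$ for some $\epsilon'>0$, contradicting OVC (the hypothesis that $(a_n),(k_n)$ are $\poly(n)$-computable is exactly what keeps the reduction polynomial). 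It is convenient to split according to which term dominates $T:=n+k\min(n,ak)$: (a) $ak^2\le n$, so $T=\Theta(n)$; (b) $n\le ak^2$ and $ak\le n$, so $T=\Theta(ak^2)$; (c) $ak\ge n$, so $T=\Theta(nk)$.

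Regime (a) needs no conjecture: with $Y=0^{\floor{n/2}}$ and $X\in\zo^{\floor{n/2}}$ having at most one $1$, one has $\ED_a(X,Y)\in\set{0,\tfrac1a}$ (in particular $\le k$), so a correct algorithm must output this value, hence inspect $\Omega(n)$ characters of $X$ (two inputs differing only in an uninspected position are indistinguishable), which already beats $T^{1-\epsilon}=\Theta(n^{1-\epsilon})$. Moreover, (b) and (c) need only be treated for $a=\omega(\log^2n)$: if $a=\Oh(\log^2n)$ then $ak^2\le n^{o(1)}k^2$, and in regime (c) the bound $ak\ge n$ forces $k\ge n/\log^2n$ and hence $nk\le n^{o(1)}k^2$ too; the statement then follows from the known $(n+k^2)^{1-o(1)}$ conditional lower bound for deciding $\ED\le k$, which extends to weighted edit distance, hence to $\ED_a$, by~\cite{BK15}.

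For (b),(c) with $a=\omega(\log^2n)$, set $N=\floor{\sqrt{k\min(n,ak)}}$ (shrunk by a $\mathrm{polylog}(n)$ factor if the strings would otherwise not fit, harmlessly absorbed into $\epsilon'$) and $D=\Theta(\log N)$ as in OVC. We adapt the alignment-gadget construction behind the OV-to-edit-distance lower bound of~\cite{BK15}, but it has to be \emph{re-balanced} for $\ED_a$: a substitution costs only $\tfrac1a$ whereas an indel costs $1$, so an alignment of cost $\le k$ may use $\Oh(k)$ indels but up to $\Theta(ak)$ substitutions, while it must certify an orthogonality check among $N=\tilde\Theta(k\sqrt a)$ (resp.\ $\tilde\Theta(\sqrt{nk})$) vectors, with $N\gg k$. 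Accordingly, the coordinate and vector gadgets are redesigned so that the orthogonality signal of a matched vector pair is realised by substitutions rather than indels and so that the ``structural'' indels of the optimal alignment are amortised over batches of vectors, keeping their number $\Oh(k)$; the core strings then have length $\Theta(N\log N)=\tilde\Theta(\sqrt{k\min(n,ak)})$. Adding the wall and separator gadgets standard in such constructions (blocks $\#^M$ of a fresh symbol with $M=\Theta(N\log N)$) together with an $\Oh(n)$-size block of substitution filler, tuned so that $\ED_a(X,Y)$ in the YES-case equals some $v\le k$ while the reduction's gap—at least $\tfrac1a$ in $\ED_a$—lifts the NO-case above $k$, we obtain $|X|+|Y|=\Oh(n)$, hence $\le n$ after rescaling $n$ by a constant.

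The main obstacle is exactly this re-balancing: producing an OV reduction whose optimal YES-alignment uses only $\Oh(k)$ indels, yet whose dynamic program over all alignments still tests each of the $\Theta(k\min(n,ak))$ vector pairs, with strings of length only $\tilde\Theta(N)$. The naive idea of grouping the $N$ vectors into $\Oh(k)$ batches and comparing each batch diagonally fails, because a fixed diagonal comparison inside a batch tests only position-aligned pairs; a subtler gadget layout is needed, and making it behave uniformly over the whole range $a,k\in[1\dd n]$—in particular across the boundary $ak=\Theta(n)$ between regimes (b) and (c)—is where most of the (routine but lengthy) bookkeeping lies. I expect the single hardest step to be verifying the NO-case: one must show that when no orthogonal pair exists, \emph{every} alignment has $\ED_a$-cost strictly above $v$, which—unlike in the unweighted setting—does not follow from a bound on the unweighted edit distance and forces the gadget analysis to track indels and substitutions separately, exactly in the spirit of the $\De$- and $\Da$-tables used in the algorithmic part of the paper.
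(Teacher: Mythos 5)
Your high-level plan (fine-grained reduction from OV, with the target $|V|\approx\sqrt{k\min(n,ak)}$, plus a split into the three regimes of the bound $n+k\min(n,ak)$) is on the right track, and your regime~(a) argument is correct. But there are two concrete gaps in the non-trivial regimes.

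First, the central construction is not actually carried out, and you yourself flag its hardest point: you want a \emph{single} $\ED_a$-instance whose optimal alignment uses only $\Oh(k)$ indels yet ``certifies'' the non-existence of an orthogonal pair among all $\Theta(k\min(n,ak))$ pairs of vectors, and you note that the naive batch-by-diagonal layout fails and ``a subtler gadget layout is needed.'' This is precisely the obstacle the paper sidesteps rather than solves. Its key structural ideas are (i) working with the asymmetric surrogate $D^+_a(X,Y)=\ED_a(\$^{|Y|}X\$^{|Y|},Y)-|XY|$, so that deletions in $X$ are free and only insertions (cost $2$) and substitutions (cost $\frac1a$) are charged; and (ii) an explicit \emph{OR-composition} gadget (\cref{cons:bnd}, \cref{lem:ulb}) that, given many pairs $(X_i,Y_i)$, produces $(X,Y)$ with $D_a(X,Y)=\frac{(n-1)\ell}{a}+\min_i D^+_a(X_i,Y_i)$. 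This means the optimal alignment only ever ``threads through'' a single micro-instance; it does \emph{not} need to test all pairs within one alignment, and the bounded-$k$ version falls out by combining the OR-gadget with a two-level covering of $V$ (once to reduce the per-call vector count to $N\approx\sqrt{t_n}/d$, and once inside \cref{prp:lb} to balance insertion vs.\ substitution cost via the parameter $m\approx\sqrt{\min(n/k,a)}$). Your proposal never mentions an OR-composition or the free-deletion asymmetric measure, and without one of them the NO-case analysis you correctly identify as the crux remains wide open.

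Second, your fallback for $a=\Oh(\log^2 n)$ quietly assumes a bounded-$k$, $(n+k^2)^{1-o(1)}$ lower bound for $\ED_a$ that is not available off the shelf. The \cite{BK15} framework gives a $\Theta(n^{2-o(1)})$ bound for unit-cost (or $\Oh(1)$-cost) edit distance on \emph{unbounded} instances; as the present paper observes, naively translating it to cost ratio $a$ only excludes $\Oh(n^{2-\epsilon}a^{-6})$ time, and in any case deriving a bound that is tight \emph{as a function of $k$} already requires the $k$-bounded gadgetry your proof is supposed to construct. Taking it as a black box in this sub-regime comes close to circularity. (In the paper this sub-regime is treated uniformly by the same two-level covering; no separate fallback is needed.) Finally, to contradict OVC you must solve OV for \emph{all} instance sizes, not only those sizes $N$ that happen to arise as $\Theta(\sqrt{k_n\min(n,a_nk_n)})$; the paper handles this by an outer covering of $V$ into $\lceil|V|/N\rceil$ blocks and $v^2$ calls to the hypothetical algorithm, a step your proposal does not address.
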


We prove our conditional lower bound in two steps. First, we show that, for any $a\in [1\dd n]$,
$\Omega(n^{2-\epsilon})$ time is necessary to compute $\EDA(X,Y)$ with $|X|+|Y|\le n$.
We then build upon this construction to prove a lower bound on computing $\EDA(X,Y)$ under a guarantee that $\EDA(X,Y)\le k$ for a given $k=o(n)$.
In the first step, we follow the approach of Bringmann and Künnemann~\cite{BK15},
who proved the desired lower bound for $a=\Oh(1)$.
Unfortunately, when translated to larger $a$, their arguments only exclude $\Oh(n^{2-\epsilon}a^{-6})$-time algorithms.
The tool we adapt from~\cite{BK15} is a generic reduction from the Orthogonal Vectors problem to the problem of computing an abstract string similarity measure. This framework is formulated in terms of an \emph{alignment gadget} that needs to be supplied for the measure in question. 
Unlike~\cite{BK15}, instead of working directly with $\EDA(X,Y)$, we consider an \emph{asymmetric} 
measure $\D^+(X,Y):=\EDA(\$^{|Y|}\cdot X \cdot \$^{|Y|},Y)-|XY|$, where $\$$ is a character not present in $Y$.
Even for $a=\Oh(1)$, this trick greatly simplifies the proof in~\cite{BK15} at the cost of increasing the alphabet size from $2$ to $9$.
As a result, for each parameter $a$, every instance of the Orthogonal Vectors problem
produces a pair of strings $(X,Y)$ and a threshold $k$ such that $\EDA(X,Y)\le k$ if and only if the instance is a YES instance. However, since $\EDA(X,Y)=\Theta(|X|+|Y|)$ for instances obtained through $\D^+(\cdot,\cdot)$,
this construction alone provides no information on how the running time depends on $k$.
For this, we express an instance of the Orthogonal Vectors problem as an OR-composition of smaller instances,
and, for each of them, we construct a pair of strings $(X_i,Y_i)$ such that the original instance is a YES-instance if and only if $\min_i \EDA(X_i,Y_i)\le k_{\min}$. An appropriate gadget combines the pairs $(X_i,Y_i)$ into a single pair $(X,Y)$ with $\min_i \EDA(X_i,Y_i)\le k_{\min}$ if and only if $\EDA(X,Y)\le k$ for some $k=\Theta(\frac{n}{a})$, yielding an $\Omega((nk)^{1-\epsilon})$-time lower bound for this case. The lower bound of $\Omega((ak^2)^{1-\epsilon})$ for $k=\Oh(\frac{n}{a})$ follows because the problem in question does not get easier as we increase $n$ while preserving $a$ and $k$;
see \cref{sec:LB} for details.

% ----------------------------------------------------------------------
\section{Computing \boldmath $\EDA$ Exactly}\label{sec:ExactAlg}

\thmexact* % restatement of thm 

Define a DP table $T$ with $T[x,y]=\EDA(X[0\dd x), Y[0\dd y))$ for $x\in [0\dd |X|]$ and $y\in [0\dd |Y|]$. 
Then, $T[0,0]=0$ and the other entries can be computed according to the following formula (each argument of $\min$ is included only if the corresponding condition holds):
\[T[x,y] :=  \min \begin{lrdcases}
	T[x-1,y]+1& \text{if }x>0\\
	T[x,y-1]+1& \text{if }y>0\\
	T[x-1,y-1]+\tfrac{1}{a} \cdot \mathbb{1}_{X[x-1]\ne Y[y-1]}  & \text{if }x>0\text{ and }y>0
\end{lrdcases}.\]

One algorithm runs in time $\Oh(n+nk)$ and follows~\cite{Ukkonen85}.
It computes all entries $T[x,y]$ that satisfy $T[x,y]\le k$; this is correct because $T[x,y]$ only depends on entries $T[x',y']\le T[x,y]$. 
As $T[x,y] \ge |x-y|$, this algorithm considers $\Oh(n+nk)$ entries,
each computed in $\Oh(1)$ time.

Another algorithm runs in time $\Oh(n+ak^2)$ and follows~\cite{LV88}, as explained in Section~\ref{sec:overview}.
It is implemented as~\cref{alg:exact}, where all values $\De_v[s]$ are implicitly initialized to $-\infty$.

\begin{algorithm}[htb]
	\ForEach{$v\in \{0,\frac{1}{a},\frac{2}{a},\ldots,\frac1a\floor{ak}\}$}{
		\ForEach{$s\in [-\floor{v}\dd\floor{v}]$}{
			$\De'_v[s] \gets \min(|X|,|Y|-s,\max(\De_{v-1}[s-1],\De_{v-\frac {1}{a}}[s]+1,\De_{v-1}[s+1]+1))$\;
			\lIf{$s=0$}{$\De'_v[s] \gets \max(\De'_v[s],0)$}
			$\De_v[s] \gets \De'_v[s] + \LCE(\De'_v[s],\De'_v[s]+s)$\;
		}
		\lIf{$\De_v[|Y|-|X|]=|X|$}{\Return{$v$}}
	}
	\Return{``$>k$''}\;
	
	\caption{Exact algorithm for $\EDA$}\label{alg:exact}
	
\end{algorithm}

The correctness of \cref{alg:exact} follows from \cref{lem:exact} and the running time is proportional to the number of $\LCE$ queries, which is $\Oh(ak^2)$, plus the $\Oh(n)$ construction time of an $\LCE$ data structure~\cite{LV88,FFM00}.

\begin{restatable}{lemma}{lemexact}\label{lem:exact}
	For every $v\in \{0,\frac1a,\ldots,\frac1a\floor{ak}\}$, $x\in [0\dd |X|]$, and $y\in [0\dd |Y|]$, we have $T[x,y]\le v$ if and only if $\De_v[y-x]\ge x$.
\end{restatable}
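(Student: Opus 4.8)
The plan is to prove both implications by induction on $v$, since the recurrence defining $\De_v$ references only entries $\De_{v'}$ with $v' < v$. The key bookkeeping device is the auxiliary quantity $\De'_v[s]$ computed in \cref{alg:exact}: I claim it equals the furthest row $x$ reachable with cost $\le v$ where the last operation on the alignment is \emph{not} a match (i.e., it is an indel, or a substitution, or $x=y=0$), and then the final LCE query extends this by a run of free matches to obtain $\De_v[s]$. Formally, I would phrase the invariant as: $\De_v[s] \ge x$ iff $T[x, x+s] \le v$, and separately $\De'_v[s]\ge x$ iff there exists $x'\le x$ with $T[x',x'+s]\le v$, $X[x'\dd x)=Y[x'+s\dd x+s)$, and either $x'=0=x'+s$ or the DP entry $T[x',x'+s]$ is attained by a non-match move. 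Since $T$ is monotone non-decreasing along each diagonal (each of the three recurrence branches is non-decreasing in its source entry), ``$\De_v[s]\ge x$'' is the natural ``wavefront'' encoding and the equivalence in the lemma is exactly the $\De_v$ half of this invariant.

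First I would handle the base case $v=0$: here $s$ ranges only over $\{0\}$, $\De'_0[0] = \max(0, \ldots) = 0$ (all prior entries are $-\infty$), and $\De_0[0] = \LCE(0,0)$, which is the length of the longest common prefix of $X$ and $Y$; this is precisely the largest $x$ with $T[x,x]=0$, matching the claim. For the inductive step, fix $v$ and $s\in[-\floor v\dd\floor v]$ (note $T[x,x+s]\le v$ forces $|s|\le v$, so diagonals outside this range are correctly left at $-\infty$). For the ``$\Leftarrow$'' direction, given $\De_v[y-x]\ge x$, I trace back: either $x$ lies in the matched run appended by the LCE query — so $T$ is constant along the diagonal from $\De'_v[s]$ up to $\De_v[s]$, reducing to showing $T[\De'_v[s], \De'_v[s]+s]\le v$ — or the analysis pushes into one of the three source entries $\De_{v-1}[s-1]$, $\De_{v-1/a}[s]+1$, $\De_{v-1}[s+1]+1$, each of which by the induction hypothesis certifies a DP entry of the right cost one move away, and the recurrence for $T$ then gives $T[x,x+s]\le v$. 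For ``$\Rightarrow$'', given $T[x,x+s]\le v$, I walk back along the diagonal through maximal matched characters until I hit the first entry $T[x',x'+s]\le v$ attained by a non-match move (or reach the origin); that entry's predecessor witnesses one of the three sources with cost $\le v-1$ or $\le v-1/a$ on the appropriate diagonal, the induction hypothesis bounds the corresponding $\De$, the $\min(|X|,|Y|-s,\cdot)$ clamps are respected because $x',x\le|X|$ and $x'+s,x+s\le|Y|$, and then the $\LCE$ call from $\De'_v[s]\ge x'$ extends at least as far as $x$ since $X[x'\dd x)=Y[x'+s\dd x+s)$.

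The main obstacle I anticipate is the careful treatment of the interaction between the $\min(|X|, |Y|-s, \cdot)$ clamping and the LCE extension, together with the boundary case $x=y=0$ (the $s=0$ special line), to make sure the ``first non-match entry on the diagonal'' always exists and falls in range — in particular that a path of cost $\le v$ cannot sneak past the clamp. A secondary subtlety is verifying the monotonicity of $T$ along diagonals rigorously enough that the ``$\De_v[s]\ge x$'' threshold encoding is equivalent to ``$T[x,x+s]\le v$'' (rather than just implying it in one direction); this follows from a short separate induction on $x+y$ showing $T[x,y]\le T[x+1,y+1]$, which I would state as a preliminary observation. Everything else is a routine unrolling of the recurrences.
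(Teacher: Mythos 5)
Your proposal follows the same route as the paper's proof: induction on $v$, proving both directions by peeling off the longest matched run on the diagonal (forward) or inspecting which term of the $\max$ determined $\De'_v[s]$ and extending by the $\LCE$ run (backward). The auxiliary scaffolding you flag as needed — a stand-alone diagonal-monotonicity lemma $T[x,y]\le T[x+1,y+1]$ and a precise characterization of $\De'_v[s]$ — is not required (the paper simply extends any cost-$\le v$ alignment across the matched $\LCE$ run, which gives the needed implication directly), but including it does no harm.
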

\begin{proof}
Let us first prove, by induction on $v$, that $\De_v[y-x]\ge x$ if $T[x,y]\le v$.
Fix an optimal $\EDA$-alignment between $X[0\dd x)$ and $Y[0\dd y)$ and consider the longest suffixes $X[x'\dd x)=Y[y'\dd y)$ aligned without any edit.
We shall prove that $\De'_v[y-x]=\De'_v[y'-x']\ge x'$ by considering four cases:
\begin{itemize}
	\item If $x'=y'=0$, then $\De'_v[y'-x']=\De'_v[0]\ge 0$.
	\item If $X[x'-1]$ is deleted, then $v \ge T[x',y'] = T[x'-1,y']+1$, and the inductive assumption yields $\De'_v[y'-x']\ge \De_{v-1}[y'-x'+1]+1\ge x'$.
	\item If $Y[y'-1]$ is inserted, then $v\ge T[x',y']= T[x',y'-1]+1$, and the inductive assumption yields $\De'_v[y'-x']\ge \De_{v-1}[y'-x'-1]\ge x'$.
	\item Otherwise, $X[x'-1]$ is substituted for $Y[y'-1]$; then, $v\ge T[x',y']= T[x'-1,y'-1]+\frac1a$, and the inductive assumption yields $\De'_v[y'-x']\ge \De_{v-\frac1a}[y'-x']+1\ge x'$.
\end{itemize}
Now, $\De_v[y-x]\ge x$ follows from $\De'_v[y'-x']\ge x'$ due to $\LCE(x',y')\ge x-x'$.
	
The converse implication is also proved by induction on $v$. 
We consider four cases depending on $x':=\De'_{v}[y-x]$ and $y' := x'+(y-x)$:
\begin{itemize}
	\item If $x'\le \De_{v-1}[y'-x'+1]+1$, then, by the inductive assumption, $T[x',y'] \le T[x'-1,y']+1 \le (v-1)+1$.
	\item If $x'\le \De_{v-1}[y'-x'-1]$, then, by the inductive assumption, $T[x',y'] \le T[x',y'-1]+1 \le (v-1)+1$.
	\item If $x'\le \De_{v-\frac1a}[y'-x']+1$, then, by the inductive assumption, $T[x',y'] \le T[x'-1,y'-1]+\tfrac1a\le (v-\tfrac1a)+\tfrac1a$.
	\item In the remaining case, we have $x'=y'=0$, and thus $T[x',y']=0 \le v$.
\end{itemize}
In all cases, $T[x',y']\le v$ implies $T[x,y]\le v$ because $\LCE(x',y')\ge x-x'$.
\end{proof}

\section{Approximating $\EDA$}\label{sec:approx}

In this section, we present our solution for \cref{prob:approx}.
Recall that the input consists of strings $X,Y$, a cost parameter $a\in \Z_+$, a threshold $k\in \mathbb{R}_+$, and an accuracy parameter $\epsilon\in(0,1)$,
and the task is to distinguish between $\EDA(X,Y)\le k$ and $\EDA(X,Y)>(1+\epsilon)k$.
Our procedure mimics the behavior of~\cref{alg:exact} with a coarser granularity of the costs.
It is implemented as~\cref{alg:approx}, where we assume that all values $\Da_v[s]$ are implicitly initialized to $-\infty$ and that $\epsilon a \in \Z_+$ (we will fall back to \cref{alg:exact} whenever $\epsilon a < 1$).

\begin{algorithm}[h]
\caption{Approximation Algorithm}\label{alg:approx}
	\ForEach{$v\in \{0,\epsilon,2\epsilon,\ldots,\epsilon\ceil{\epsilon^{-1}k}\}$}{
		\ForEach{$s\in [-\floor{v}\dd \floor{v}]$}{
			$\Da'_v[s] \gets  \min(|X|,|Y|-s,\max(\Da_{v-1}[s-1],\Da_{v-\epsilon}[s],\Da_{v-1}[s+1]+1))$\;
			\lIf{$s=0$}{$\Da'_v[s] \gets \max(\Da'_v[s],0)$}
			$\Da_v[s] \gets  \Da'_v[s] + \LCE_{\epsilon a,\epsilon}(\Da'_v[s],\Da'_v[s]+s)$\;
		}
		\lIf{$\Da_v[|Y|-|X|] = |X|$}{\Return{YES}}
	}
	\Return{NO}
\end{algorithm}

The following lemma, justifying the correctness of \cref{alg:approx}, is 
proved below through an appropriate adaptation of the arguments behind \cref{lem:exact}.
\begin{restatable}{lemma}{lemapx}\label{lem:apx}
For every $v\in \{0,\epsilon,2\epsilon,\ldots,\epsilon \ceil{\epsilon^{-1}k}\}$, $x\in [0\dd |X|]$, and $y\in [0\dd |Y|]$,
if $T[x,y]\le v$, then $\Da_v[y-x]\ge x$,
and if $\Da_v[y-x]\ge x$, then $T[x,y]\le  v(1+\epsilon + \epsilon^2) + \epsilon + \epsilon^2$.
\end{restatable}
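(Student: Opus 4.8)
\emph{Proof proposal.} The plan is to adapt the proof of \cref{lem:exact} to the coarser cost grid $\{0,\epsilon,2\epsilon,\dots\}$ driving \cref{alg:approx}; the only conceptual change is that one step of a wave now advances over an exact common run \emph{together with} a block of at most $(1+\epsilon)\epsilon a$ additional substitutions, which is exactly what one $\LCE_{\epsilon a,\epsilon}$ query can absorb. I would prove the two implications separately, each by induction on $v$ along the grid (from smaller to larger $v$); throughout I assume $\epsilon^{-1}\in\Zp$ (so the grid is closed under subtracting $1$ and $\epsilon$, and every entry $\Da_{v-1}$, $\Da_{v-\epsilon}$ used by \cref{alg:approx} is well defined) and $\epsilon a\in\Zp$, both without loss of generality. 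I would also first record a monotonicity fact: if $\Da'_v[s]\ge x'$ for some $x'$ with $\HD(X[x'\dd x),Y[y'\dd y))\le\epsilon a$, where $y'=x'+s$ and $y=x+s$, then $\Da_v[s]\ge x$ --- because the fragments starting at $\Da'_v[s]$ form a common suffix of $X[x'\dd x)$ and $Y[y'\dd y)$, so their Hamming distance is still at most $\epsilon a$, giving $\LCE_{\epsilon a,\epsilon}(\Da'_v[s],\Da'_v[s]+s)\ge\LCE_{\epsilon a}(\Da'_v[s],\Da'_v[s]+s)\ge x-\Da'_v[s]$.

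For the forward implication, assume $T[x,y]\le v$, fix an optimal $\EDA$-alignment $\mathcal A$ of $X[0\dd x)$ and $Y[0\dd y)$, and let $X[x'\dd x)$ versus $Y[y'\dd y)$ (so $s:=y-x=y'-x'$) be the \emph{longest} suffix of $\mathcal A$ that uses no indels and has at most $\epsilon a$ substitutions; in particular $\HD(X[x'\dd x),Y[y'\dd y))\le\epsilon a$. By maximality, one of four cases holds, paralleling \cref{lem:exact}: (i) $x'=y'=0$, handled by the $s=0$ clause $\Da'_v[0]\ge0$; (ii) $\mathcal A$ deletes $X[x'-1]$ right before the suffix, so $T[x'-1,y']\le v-1$ and induction gives $\Da'_v[s]\ge\Da_{v-1}[s+1]+1\ge x'$; (iii) $\mathcal A$ inserts $Y[y'-1]$, so $T[x',y'-1]\le v-1$ and $\Da'_v[s]\ge\Da_{v-1}[s-1]\ge x'$; (iv) $\mathcal A$ substitutes $X[x'-1]$ for $Y[y'-1]$ and the suffix already has exactly $\epsilon a$ substitutions (otherwise it could be extended), so the preceding prefix of $\mathcal A$ costs at most $v-\epsilon$ (the suffix has $\epsilon a$ substitutions, of total cost $\epsilon$), hence $T[x',y']\le v-\epsilon$ and $\Da'_v[s]\ge\Da_{v-\epsilon}[s]\ge x'$. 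In each case $\Da'_v[s]\ge x'$ (the $\min$ with $|X|$ and $|Y|-s$ is harmless since $x'\le x\le|X|$ and $x'+s=y'\le y\le|Y|$), so the monotonicity fact yields $\Da_v[y-x]\ge x$; finally $s$ lies in the range enumerated by \cref{alg:approx} because $|s|\le\EDA(X[0\dd x),Y[0\dd y))\le v$, and likewise for the shifts used in the recursive calls.

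For the converse, I would set $f(v):=v(1+\epsilon+\epsilon^2)+\epsilon+\epsilon^2$ and observe that $f(v)\ge f(v-\epsilon)+(\epsilon+\epsilon^2)$, that $f(v)=f(v-1)+1+(\epsilon+\epsilon^2)$, and that $f(0)=\epsilon+\epsilon^2$. Assuming $\Da_v[s]\ge x$ (with $s=y-x$), put $x':=\Da'_v[s]$ and $y':=x'+s$. From $x'+\LCE_{\epsilon a,\epsilon}(x',y')=\Da_v[s]\ge x$ and $\LCE_{\epsilon a,\epsilon}(x',y')\le\LCE_{(1+\epsilon)\epsilon a}(x',y')$ I get $\HD(X[x'\dd x),Y[y'\dd y))\le(1+\epsilon)\epsilon a$, hence $\EDA(X[x'\dd x),Y[y'\dd y))\le(1+\epsilon)\epsilon=\epsilon+\epsilon^2$, and therefore $T[x,y]\le T[x',y']+(\epsilon+\epsilon^2)$ by concatenating alignments. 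It remains to bound $T[x',y']$ through whichever transition realizes $\Da'_v[s]$: if it is $\Da_{v-\epsilon}[s]$, induction gives $T[x',y']\le f(v-\epsilon)$; if it is $\Da_{v-1}[s-1]$ (resp.\ $\Da_{v-1}[s+1]+1$), induction gives $T[x',y'-1]\le f(v-1)$ (resp.\ $T[x'-1,y']\le f(v-1)$) and so $T[x',y']\le f(v-1)+1$; and if $x'$ stems from the $s=0$ clause, then $x'=y'=0$ and $T[x',y']=0$. In all cases $T[x',y']\le f(v)-(\epsilon+\epsilon^2)$, which closes the induction.

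I expect the main obstacle to be making the two sides of this bookkeeping meet with the exact constants in $f$. In the forward direction the split of $\mathcal A$ must keep every substitution block absorbed by a single query at size at most $\epsilon a$ --- so that $\LCE_{\epsilon a}$, not merely $\LCE_{(1+\epsilon)\epsilon a}$, covers it --- while a full such block must be charged exactly $\epsilon$, which is what forces the cut in case (iv) at precisely $\epsilon a$ substitutions. In the converse direction the only per-step loss is that a query billed as ``$\epsilon$ worth of substitutions'' may really hide up to $(1+\epsilon)\epsilon$; the delicate point is verifying that this slack telescopes \emph{additively} into an affine bound over the $\Oh(\epsilon^{-1}k)$ levels rather than compounding multiplicatively, which is why $f$ is chosen affine with slope $1+\epsilon+\epsilon^2$ and intercept $\epsilon+\epsilon^2$.
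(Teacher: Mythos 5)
Your proof is correct and follows essentially the same route as the paper's: both argue by induction on the grid value $v$, splitting the optimal alignment at the longest suffix with at most $\epsilon a$ substitutions and no indels in the forward direction, and casing on which argument of the $\max$ realizes $\Da'_v[s]$ in the converse, with the same $\LCE_{\epsilon a}$ / $\LCE_{(1+\epsilon)\epsilon a}$ bookkeeping and the same affine cost accounting $f(v)=v(1+\epsilon+\epsilon^2)+\epsilon+\epsilon^2$. The extra scaffolding you add (the explicit monotonicity fact for the $\LCE$ step, the range check $|s|\le v$, and the assumption $\epsilon^{-1}\in\Zp$ needed so that $v-1$ stays on the grid — something the paper leaves implicit) are careful elaborations rather than a different approach.
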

\begin{proof}
	We start by proving the first implication inductively on $v$.
	Let us fix an optimum $\EDA$-alignment of $X[0\dd x)$ and $Y[0\dd y)$
	and consider the longest suffixes $X[x'\dd x)$ and $Y[y'\dd y)$ aligned with at most $\epsilon a$ substitutions and no indels.
	We shall prove that $\Da'_v[y-x]=\Da'_v[y'-x']\ge x'$ by considering four cases:
	\begin{itemize}
		\item If $x'=y'=0$, then $\Da'_v[y'-x'] = \Da'_v[0] \ge 0$.
		\item If $X[x'-1]$ is deleted, then $v \ge T[x',y'] = T[x'-1,y']+1$, and the inductive assumption yields
		$\Da'_{v}[y'-x']\ge \Da_{v-1}[y'-x'+1]+1\ge x'$.
		\item If $Y[y'-1]$ is inserted, then  $v \ge T[x,y]= T[x',y'-1]+1$, and the inductive assumption yields $\Da'_{v}[y'-x'] \ge \Da_{v-1}[y'-x'-1]\ge x'$.
		\item Otherwise, there are exactly $\epsilon a$ substitutions between $X[x'\dd x)$ and $Y[y'\dd y)$ (recall that $\epsilon a \in \Z$).
		Thus,  $v\ge T[x,y]=T[x',y']+\epsilon$, and the inductive assumption yields $\Da'_{v}[y'-x'] \ge \Da_{v-\epsilon}[y'-x']\ge x'$.
	\end{itemize}
	Now, $\Da_v[y-x]\ge x$ follows from $\Da'_v[y'-x'] \ge x'$ due to $\LCE_{\epsilon a}(x',y')\ge x-x'$.	
	
	The second implication is also proved by induction on $v$.
	We consider four cases depending on $x':=\Da'_{v}[y-x]$ and $y' := x'+(y-x)$:
	\begin{itemize}
		\item If $x' \le \Da_{v-1}[y'-x'+1]+1$, then, by the inductive assumption, $T[x',y'] \le T[x'-1,y']+1 \le  (v-1)(1+\epsilon + \epsilon^2) + \epsilon + \epsilon^2 +1 = v(1+\epsilon + \epsilon^2)$.
		\item If $x' \le \Da_{v-1}[y'-x'-1]$, then, by the inductive assumption,  $T[x',y'] \le T[x',y'-1]+1 \le  (v-1)(1+\epsilon + \epsilon^2) +\epsilon + \epsilon^2 + 1  = v(1+\epsilon + \epsilon^2)$.
		\item If $x'\le \Da_{v-\epsilon}[y'-x']$, then, by the inductive assumption, $T[x',y'] \le  (v-\epsilon)(1+\epsilon + \epsilon^2)  + \epsilon + \epsilon^2 < v(1+\epsilon + \epsilon^2)$.
		\item In the remaining case, we have $x'=y'=0$. Trivially, $T[x',y']=0 \le v(1+\epsilon + \epsilon^2)$.
	\end{itemize}
	In all cases, $T[x',y']\le v(1+\epsilon + \epsilon^2)$ implies $T[x,y]\le v(1+\epsilon + \epsilon^2) + \epsilon + \epsilon^2$ because $\LCE_{(1+\epsilon)\epsilon a}(x',y')\ge x-x'$.
\end{proof}

\begin{corollary}\label{cor:apx}
\cref{alg:approx} returns 
\begin{itemize}
	\item YES if $\EDA(X,Y)\le k$, and 
	\item NO if $\EDA(X,Y)>k(1+\epsilon+\epsilon^2)+2\epsilon+2\epsilon^2+\epsilon^3$.
\end{itemize}
Moreover, it can be implemented in $\tOh(\frac{n}{\epsilon^3 a}+ak^3)$ time if $k\ge 1$.
\end{corollary}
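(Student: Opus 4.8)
The plan is to derive \cref{cor:apx} as a direct consequence of \cref{lem:apx} together with the running-time analysis of the $\LCE_{\epsilon a,\epsilon}$ data structure from \cref{prp:LCE}. First I would settle correctness. The YES case is immediate: if $\EDA(X,Y)\le k$, then for the target entry $x=|X|$, $y=|Y|$ we have $T[|X|,|Y|]=\EDA(X,Y)\le k$, and since the outer loop enumerates $v$ up to $\epsilon\ceil{\epsilon^{-1}k}\ge k$, by the first implication of \cref{lem:apx} we get $\Da_v[|Y|-|X|]\ge |X|$ for some such $v$; as this quantity is capped at $|X|$ in the algorithm, equality holds and the algorithm returns YES. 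Conversely, if the algorithm returns YES at some $v\le\epsilon\ceil{\epsilon^{-1}k}\le k+\epsilon$, then $\Da_v[|Y|-|X|]=|X|$, and the second implication of \cref{lem:apx} gives $\EDA(X,Y)=T[|X|,|Y|]\le v(1+\epsilon+\epsilon^2)+\epsilon+\epsilon^2\le (k+\epsilon)(1+\epsilon+\epsilon^2)+\epsilon+\epsilon^2 = k(1+\epsilon+\epsilon^2)+2\epsilon+2\epsilon^2+\epsilon^3$. Hence if $\EDA(X,Y)$ exceeds that bound, the algorithm must return NO. This is exactly the claimed dichotomy, and I would note that to actually solve \cref{prob:approx} with accuracy $\epsilon$ one runs \cref{alg:approx} with $\epsilon$ replaced by $\epsilon/c$ for a suitable constant $c$, since $k(1+\epsilon+\epsilon^2)+2\epsilon+2\epsilon^2+\epsilon^3\le k(1+O(\epsilon))$ once $k\ge1$ — but within this corollary I only claim the stated inequalities verbatim.

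Next I would handle the running time under the standing assumption $\epsilon a\in\Z_+$, i.e. $\epsilon a\ge 1$ (the case $\epsilon a<1$ falls back to \cref{alg:exact}, whose $\Oh(n+ak^2)$ bound is dominated by the claimed $\tOh(\frac{n}{\epsilon^3 a}+ak^3)$ once $\epsilon a<1$, so it is covered separately or simply folded into \cref{thm:main}). The outer loop runs over $\Oh(\epsilon^{-1}k)$ values of $v$, and for each $v$ the inner loop runs over $s\in[-\floor v\dd\floor v]$, i.e. $\Oh(v)=\Oh(k)$ values; thus the algorithm issues $\Oh(\epsilon^{-1}k^2)$ queries of type $\LCE_{\epsilon a,\epsilon}$, each with shift parameter $|x-y|=|s|\le k$, so we instantiate \cref{prob:LCE} with $d=\Theta(\epsilon a)$ and $w=\Theta(k)$. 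By \cref{prp:LCE}, the preprocessing costs $\tOh(\frac{n}{\epsilon^2 d})=\tOh(\frac{n}{\epsilon^2\cdot\epsilon a})=\tOh(\frac{n}{\epsilon^3 a})$ time, and each query is answered in $\tOh(dw)=\tOh(\epsilon a k)$ time. Multiplying the per-query cost by the number of queries gives $\tOh(\epsilon^{-1}k^2\cdot \epsilon a k)=\tOh(ak^3)$. Adding the preprocessing cost and the $\Oh(n)$ overhead for reading lengths and building the basic $\LCE$ structure yields the total bound $\tOh(\frac{n}{\epsilon^3 a}+ak^3)$, as claimed.

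The only genuinely delicate point — and the one I would state carefully rather than gloss over — is that \cref{alg:approx} invokes $\LCE_{\epsilon a,\epsilon}$, whereas \cref{prp:LCE} builds one data structure for a \emph{fixed} threshold $d$; so I must confirm that a single choice $d=\epsilon a$ (together with $w=\floor k$, which upper-bounds every shift used) suffices for all $\Oh(\epsilon^{-1}k^2)$ queries, which it does because every query in the algorithm uses precisely $d=\epsilon a$. A secondary subtlety is the logarithmic factor: \cref{prp:LCE} succeeds w.h.p., and we make $\poly(n)$ queries, so a union bound keeps the failure probability polynomially small (absorbed into ``with high probability'' in \cref{thm:main}); the $\tOh(\cdot)$ notation absorbs the resulting $\poly\log n$ factors in both the preprocessing and query times. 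I would also remark that the $s=0$ correction line and the $\min(|X|,|Y|-s,\cdot)$ clamping add only $\Oh(1)$ work per iteration and do not affect the asymptotics. Everything else is bookkeeping already carried out in \cref{lem:apx}, so the proof of \cref{cor:apx} amounts to assembling these pieces.
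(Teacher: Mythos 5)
Your proof is correct and follows essentially the same route as the paper's: apply \cref{lem:apx} at $v=\epsilon\ceil{\epsilon^{-1}k}$ for the YES direction, at the returning $v<k+\epsilon$ for the NO direction, and then count the $\Oh(\epsilon^{-1}k^2)$ queries of $\LCE_{\epsilon a,\epsilon}$ with width $w\le k$ to invoke \cref{prp:LCE} with $d=\epsilon a$. The extra remarks you add (the $\epsilon a<1$ fallback, the union bound over queries, the rescaling of $\epsilon$) are accurate but belong to the proof of \cref{thm:main} rather than this corollary.
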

\begin{proof}
If $\EDA(X,Y)\le k$, we apply \cref{lem:apx} for $v = \epsilon \ceil{\epsilon^{-1}k}$, $x=|X|$, and $y=|Y|$.
We conclude that $\Da_v[|X|-|Y|]\ge |X|$, which means that the algorithm returns YES.\@ 
On the other hand, if the algorithm returns YES, then $\Da_v[|X|-|Y|]\ge |Y|$ holds for some $v\in \{0,\epsilon,2\epsilon,\ldots,\epsilon \ceil{\epsilon^{-1}k}\}$, which satisfies $v < k+\epsilon$. In this case, \cref{lem:apx} implies that \[\ED(X,Y)\le v(1+\epsilon + \epsilon^2) + \epsilon + \epsilon^2 < (k+\epsilon)(1+\epsilon + \epsilon^2) + \epsilon + \epsilon^2 = k(1+\epsilon+\epsilon^2) + 2\epsilon+2\epsilon^2 + \epsilon^3.\]

As for the running time, we observe that the number of $\LCE_{\epsilon a, \epsilon}(x,y)$ queries asked is $\Oh((1+k)(1+\epsilon^{-1}k))$ and that each of them satisfies $|x-y|\le k$. Consequently, the claimed running time follows from \cref{prp:LCE}.
\end{proof}

\thmmain*
\begin{proof}
		First, consider $k<1$. If $|X|\ne |Y|$, then we can safely return NO due to $\EDA(X,Y)\ge \big||X|-|Y|\big|$. Otherwise, we output YES if $\HD(X,Y)\le ak$ (which implies $\EDA(X,Y)\le k$)
		 and NO if $\HD(X,Y)>(1+\epsilon)ak$ (which implies $\EDA(X,Y)\ge \min(1,(1+\epsilon) k)$).
		Previous work for Hamming distance (e.g.,~\cite{HIM12}) lets us distinguish these two possibilities in $\tOh(\frac{n}{\epsilon^2 ak})$ time.

		Next, consider $1\le k  < \frac{n}{\epsilon a}$. If $\epsilon < \frac{7}{a}$, we use the algorithm of \cref{thm:exact},
		which either computes $\EDA(X,Y)$ or reports that $\EDA(X,Y)>k$. This is clearly sufficient to distinguish
		between $\EDA(X,Y)\le k$ and $\EDA(X,Y)>(1+\epsilon)k$ for any $\epsilon \ge 0$.
		The running time is $\Oh(n+ak^2)=\Oh(\frac{n}{\epsilon a} + ak^2)=\tOh(\frac{n}{\epsilon^3 a} + ak^3)$.
		The most interesting case is when $\epsilon \ge \frac{7}{a}$. We then run \cref{alg:approx} with the accuracy parameter 
		decreased to $\bar{\epsilon} := \frac1a\floor{\frac{\epsilon a}7}$; in particular, this guarantees $\bar{\epsilon}a\in \Z_+$.
		By \cref{cor:apx}, the algorithm returns YES if $\EDA(X,Y)\le k$,
		and, due to  $(1+\epsilon)k \ge (1+7\bar{\epsilon})k \ge (1+\bar{\epsilon}+ \bar{\epsilon}^2)k+2\bar{\epsilon} + 2\bar{\epsilon}^2+\bar{\epsilon}^3$,
		it returns NO if $\EDA(X,Y) >(1+\epsilon)k$.
		The running time is $\tOh(\frac{n}{\bar{\epsilon}^3 a}+ak^3)=\tOh(\frac{n}{\epsilon^3 a}+ak^3)$ due to $\bar{\epsilon}\ge \frac\epsilon{14}$.

		Finally, consider $k\ge \frac{n}{\epsilon a}$. We  return YES if $\big||X|-|Y|\big|\le k$
		and NO otherwise. This is correct due to $\big||X|-|Y|\big|\le \EDA(X,Y)\le \big||X|-|Y|\big| + \frac1a\min(|X|,|Y|)\le \big||X|-|Y|\big| + \epsilon k$.
\end{proof}

% ----------------------------------------------------------------------
\section{Answering \texorpdfstring{\boldmath $\LCE_{d,\epsilon}$}{Approximate LCE} Queries}\label{sec:LCE}

In this section, we explain how to implement $\LCE_{d,\epsilon}$ queries. We focus on the decision version of these queries, asking to distinguish whether $\HD(X[x\dd x+\ell), Y[y\dd y+\ell))$ is $\le d$ or $> (1+\epsilon) d$. A naive way of performing such a test would be to count mismatches $X[x+s]\ne Y[y+s]$ at positions $s\in S$, where each $s\in [0\dd \ell)$ is sampled into $S$ independently with an appropriate rate $r$.
The resulting estimator,  $H:=|\{s \in S : X[x+s]\ne Y[y+s]\}|$, follows the binomial distribution $\Bin(h, r)$,
where $h=\HD(X[x\dd x+\ell), Y[y\dd y+\ell))$.
Thus, we can compare $H$ against $(1+\frac{\epsilon}3)rd$ to distinguish between $h\le d$ and $h\ge (1+\epsilon)d$:
By the Chernoff bound, if $h\le d$,
then \[\Pr[H \ge (1+\tfrac{\epsilon}3)rd]\le \exp(-\tfrac1{27}\epsilon^2 rd),\]
whereas if $h\ge (1+\epsilon) d$,
then \[\Pr[H < (1+\tfrac{\epsilon}3)rd] \le \Pr[H<(1-\tfrac{\epsilon}{3})(1+\epsilon)rd]
\le \exp(-\tfrac1{18}\epsilon^2 rd).\]
In particular, if $r=\Theta(\epsilon^{-2}d^{-1}\log n)$, then w.h.p.\ the query algorithm is correct and costs $\tOh(1+r\ell)=\tOh(1+\epsilon^{-2}d^{-1}\ell)$ time.

The same approach can be used in the setting resembling that of \cref{prp:LCE},
where queries need to be answered deterministically after randomized preprocessing.
\begin{fact}\label{fct:naive}
	There exists a data structure that, after randomized preprocessing of strings $X,Y\in \Sigma^*$ and a rate $r\in (0,1)$,
	given positions $x\in [0\dd |X|]$, $y\in [0\dd |Y|]$ and a length $\ell\in [0\dd \min(|X|-x,|Y|-y)]$,
	outputs a value distributed as $\Bin(\HD(X[x\dd x+\ell),Y[y\dd y+\ell)),r)$.
	Moreover, the answers are independent for queries with disjoint intervals $[x\dd x+\ell)$.
	With high probability, the preprocessing time is $\tOh(1+r(|X|+|Y|))$ and the query time is $\tOh(1+r\ell)$.
\end{fact}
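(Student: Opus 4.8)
The plan is to design the data structure around a single global sample $S \subseteq [0\dd n)$ (more precisely, two samples, one indexing positions of $X$ and one indexing positions of $Y$, but conceptually they play the same role) where each position is included independently with probability $r$. During preprocessing, I would record the sorted list of sampled positions in $X$ together with the characters $X[i]$ for $i$ sampled, and similarly for $Y$; this takes $\tOh(1+r(|X|+|Y|))$ time and space with high probability, since the sample sizes concentrate around $r|X|$ and $r|Y|$ by a Chernoff bound. To answer a query $(x,y,\ell)$, I would locate (by binary search in the stored sorted lists, or via precomputed rank structures) the sampled positions falling in the window, form the set $S_{x,\ell} := \{s \in [0\dd \ell) : x+s \in S_X\}$, and output $|\{s \in S_{x,\ell} : X[x+s]\ne Y[y+s]\}|$. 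Reading each such $s$ costs $\Oh(1)$ time for the comparison (we have $X[x+s]$ stored, and we fetch $Y[y+s]$ by random access), so the query time is $\Oh(1+|S_{x,\ell}|)$, which is $\tOh(1+r\ell)$ with high probability after a union bound over the $\poly(n)$ relevant windows — or, if one insists on a worst-case bound for every conceivable window, one can cap the scan and note the expected size is $r\ell$.

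The correctness claims to verify are exactly three. First, the output is distributed as $\Bin(\HD(X[x\dd x+\ell),Y[y\dd y+\ell)),r)$: this is immediate because the positions $s$ with $X[x+s]\ne Y[y+s]$ are a fixed set of size $h := \HD(X[x\dd x+\ell),Y[y\dd y+\ell))$ determined by the input strings, and each such $s$ lands in the sample independently with probability $r$, so the count of detected mismatches is a sum of $h$ independent Bernoulli$(r)$ variables. Here I must be slightly careful about which sample indexes the query: I would fix the convention that the reported count uses $S_X$ shifted to the $X$-window, i.e. membership $x+s\in S_X$; then the distributional claim holds verbatim. Second, independence for queries with disjoint intervals $[x\dd x+\ell)$: if the $X$-intervals are disjoint, the events $\{x+s \in S_X\}$ involved in different queries concern disjoint index sets in $[0\dd |X|)$, hence are mutually independent, and the characters are deterministic, so the outputs are independent. (If the paper's intent is disjointness of both $X$- and $Y$-windows simultaneously, the same argument applies; I would state the convention explicitly to avoid ambiguity.) Third, the running-time bounds: both follow from standard Chernoff concentration of $|S_X|$, $|S_Y|$, and $|S_{x,\ell}|$ around their means $r|X|$, $r|Y|$, $r\ell$, with high probability, where the "high probability" in the statement absorbs the $\Oh(\log n)$ factors hidden in $\tOh$.

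The main obstacle — really the only subtle point — is reconciling the phrase ``the answers are independent for queries with disjoint intervals'' with the requirement that queries be answered deterministically (no fresh randomness per query). With a single fixed global sample this is automatic, and there is no tension: the randomness lives entirely in preprocessing, and disjoint windows touch disjoint coordinates of that fixed random object. A lesser technical nuisance is the query-time guarantee: a literal ``always $\tOh(1+r\ell)$'' is false for a pathological sample, so I would either (i) phrase it as holding with high probability (which the statement already does), taking a union bound over the $\Oh(n^2)$ windows that any outer algorithm could query, or (ii) have the query procedure abort and return a sentinel once it has scanned more than $C r\ell \log n$ sampled positions, noting the abort happens with probability $n^{-\omega(1)}$. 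Either way the bookkeeping is routine; the conceptual content is just ``sample once, reuse everywhere,'' and I would keep the write-up to a short paragraph implementing exactly that.
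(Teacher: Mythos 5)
Your proof is correct and follows essentially the same approach as the paper: sample a single global set $S_X\subseteq[0\dd|X|)$ at rate $r$ during preprocessing, and at query time count the sampled positions in $[x\dd x+\ell)$ that are mismatches, with independence across disjoint $X$-windows following because the sample restricted to disjoint index sets is independent. The only cosmetic difference is that you also mention an $S_Y$ that you then explicitly choose not to use; the paper draws only $S_X$.
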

\begin{proof}
At preprocessing time, we draw $S_X\sub [0\dd |X|)$ with each position sampled independently with rate $r$.
At query time, the algorithm counts $x'\in S_X\cap [x\dd x+\ell)$ such that $X[x']\ne Y[y-x+x']$.
It is easy to see that each mismatch $X[x+s]\ne Y[y+s]$, with $s\in [0\dd \ell)$, is included with rate $r$,
and these events are independent across the mismatches. 
Moreover, the sets $S_X\cap [x\dd x+\ell)$ across distinct ranges $[x\dd x+\ell)$ are independent,
so the resulting estimators are independent as well. 
\end{proof}

Our aim, however, is a query time that does not grow with $\ell$.
As shown in \cref{sec:almostPeriodic,sec:AuxProb}, $\tOh(dw)$ query time can be achieved after preprocessing $X[x\dd x+\ell)$ in $\tOh(w+r\ell)$-time.\footnote{Recall the assumption in \cref{prob:LCE} that $|x-y|\le w$ holds for all queries.} 
While preprocessing each fragment $X[x\dd x+\ell)$ is too costly, it suffices to focus on fragments such that 
$[x\dd x+\ell)$ is a \emph{dyadic interval} (where $\log \ell$ and $\frac{x}{\ell}$ are both integers);
this is because, for any intermediate value $m\in [0\dd \ell)$, the problem of estimating $\HD(X[x\dd x+\ell),Y[y\dd y+\ell))$ naturally reduces to the problems of estimating both $\HD(X[x\dd x+m),Y[y\dd y+m))$ and $\HD(X[x+m\dd x+\ell),Y[y+m\dd y+\ell))$.
Estimators following binomial distribution are particularly convenient within such a reduction: if $B\sim \Bin(h,r)$ and $B'\sim \Bin(h',r)$ are independent,
then $B+B'\sim \Bin(h+h',r)$. Unfortunately, our techniques do not allow estimating $\HD(X[x\dd x+\ell),Y[y\dd y+\ell))$
when this value is much larger than $d$. Thus, we formalize the outcome of our subroutines as \emph{capped} binomial variables define below.

\begin{definition}
	An integer-valued random variable $B$ is a \emph{$d$-capped binomial variable} with parameters $h$ and $r$, denoted $B\in \Bin_d(h,r)$, if it satisfies the following conditions:
	\begin{itemize}
		\item If $h\le d$, then $B \sim \Bin(h,r)$, i.e., $\Pr[B \ge x] = \Pr[\Bin(h,r)\ge x]$ for all $x\in \mathbb{Z}$.
		\item Otherwise, $B$ stochastically dominates $\Bin(h,r)$, i.e., $\Pr[B \ge x] \ge \Pr[\Bin(h,r)\ge x]$ for all $x\in \mathbb{Z}$.
	\end{itemize}
\end{definition}

Moreover, to allow a small deviation from the desired distribution (e.g., in the unlikely event the number of sampled positions is much larger than expected), for $\delta\in [0,1]$, we write $B\in \Bin_{d,\delta}(h,r)$,
if $B$ is at total variation distance at most $\delta$ from a variable $B'\in  \Bin_{d}(h,r)$. 
The capped binomial variables can be composed just like their uncapped counterparts.
\begin{observation}\label{obs:combine}
	If $B\in \Bin_{d,\delta}(h,r)$ and $B'\in \Bin_{d,\delta'}(h',r)$ are independent, then $B+B'\in \Bin_{d,\delta+\delta'}(h+h',r)$.
\end{observation}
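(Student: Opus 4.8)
\textbf{Plan for proving \cref{obs:combine}.}
The statement is that capped binomial variables compose additively, both in the cap parameter $d$, the trial counts $h,h'$, and (with additive accumulation) the slack $\delta,\delta'$. The natural route is to first dispose of the $\delta$-slack by coupling, then prove the clean case $\delta=\delta'=0$ by a case analysis on whether $h+h'\le d$, and finally reassemble. So the plan is: (1) Using the definition of $\Bin_{d,\delta}$, pick $B_0\in\Bin_d(h,r)$ with $d_{TV}(B,B_0)\le\delta$ and $B_0'\in\Bin_d(h',r)$ with $d_{TV}(B',B_0')\le\delta'$; by a standard coupling argument we may take these on a common probability space with $B_0,B_0'$ independent and $\Pr[B\ne B_0]\le\delta$, $\Pr[B'\ne B_0']\le\delta'$ (independence of the pair $(B,B')$ is inherited, and one couples each coordinate separately). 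Then $\Pr[(B+B')\ne(B_0+B_0')]\le\delta+\delta'$, so $d_{TV}(B+B',B_0+B_0')\le\delta+\delta'$, and it remains to show $B_0+B_0'\in\Bin_d(h+h',r)$.

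For the exact case, I would split on the value of $h+h'$. If $h+h'\le d$, then in particular $h\le d$ and $h'\le d$, so by definition $B_0\sim\Bin(h,r)$ and $B_0'\sim\Bin(h',r)$ exactly; since they are independent, $B_0+B_0'\sim\Bin(h+h',r)$, which is exactly what $\Bin_d(h+h',r)$ requires in this regime. If $h+h'>d$, I need only stochastic domination: $B_0\succeq\Bin(h,r)$ and $B_0'\succeq\Bin(h',r)$ (this holds whether or not the individual counts exceed $d$ — in the $\le d$ case it is equality, in the $>d$ case it is the domination clause). Stochastic domination is preserved under independent sums: if $U\succeq\tilde U$ and $V\succeq\tilde V$ with $(U,V)$ independent and $(\tilde U,\tilde V)$ independent, then $U+V\succeq\tilde U+\tilde V$ (couple $U\ge\tilde U$ and $V\ge\tilde V$ pointwise and add). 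Applying this with $U=B_0$, $V=B_0'$, $\tilde U\sim\Bin(h,r)$, $\tilde V\sim\Bin(h',r)$ gives $B_0+B_0'\succeq\Bin(h+h',r)$, as needed.

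Combining the two paragraphs: $B_0+B_0'\in\Bin_d(h+h',r)$ and $B+B'$ is within total variation $\delta+\delta'$ of it, so $B+B'\in\Bin_{d,\delta+\delta'}(h+h',r)$.

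\textbf{Main obstacle.} There is no deep difficulty here; the only things to be careful about are the two standard coupling facts — that a total-variation bound yields a coupling with a matching probability bound, and that stochastic domination passes through independent sums — and the bookkeeping that independence of $(B,B')$ descends to independence of the coupled pair $(B_0,B_0')$. The case split $h+h'\le d$ versus $h+h'>d$ must be stated explicitly because the ``exact binomial'' conclusion is only available in the former regime, and one should note that $h+h'\le d$ forces both $h\le d$ and $h'\le d$ (using $h,h'\ge 0$), which is what lets us invoke the equality clause for each summand.
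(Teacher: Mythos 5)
Your proof is correct. The paper states this as an unproved \emph{observation} (it does not supply an argument), so there is no ``paper's approach'' to compare against, but your argument is exactly the natural one that the label ``observation'' implicitly invokes: peel off the $\delta$-slack via the standard coupling interpretation of total variation distance, observe that independence lets you couple the two coordinates separately so that $\Pr[B+B'\neq B_0+B_0']\le\delta+\delta'$, then handle the clean $\Bin_d$ case by splitting on $h+h'\le d$ (where nonnegativity forces $h\le d$ and $h'\le d$, giving exact binomials whose independent sum is exactly $\Bin(h+h',r)$) versus $h+h'>d$ (where both summands stochastically dominate their respective binomials, and domination passes through independent sums). All three ingredients are standard and correctly applied; nothing is missing.
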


As indicated above, the main building block of our solution to \cref{prob:LCE} is a component for estimating $\HD(X[x\dd x+\ell),Y[y\dd y+\ell))$ for fixed $x,\ell$ and varying $y\in [x-w\dd x+w]$.
This task can be formalized as a problem of estimating \emph{text-to-pattern} Hamming distances~\cite{CGKKP20},
that is, the distances between a pattern $X$ and length-$|X|$ fragments of a text $Y$.
\begin{problem}\label{prob:PM}
	Preprocess strings $X,Y\in \Sigma^*$, real numbers $\delta\in (0,\frac{1}{|X|})$ and $r\in (0,1)$, and an integer $d\ge r^{-1}$	into a data structure that, given a shift $s\in [0\dd |Y|-|X|]$,
	outputs a value distributed as $\Bin_{d,\delta}(\HD(X,Y[s\dd s+|X|)),r)$.
\end{problem}

In \cref{sec:almostPeriodic}, we study \cref{prob:PM} in an important special case when $X,Y$ are almost periodic. The general solution to \cref{prob:PM} is then presented in \cref{sec:AuxProb}, and 
we derive \cref{prp:LCE} in \cref{sec:actual}.

\subsection{Text-to-Pattern Hamming Distances for Almost Periodic Strings}\label{sec:almostPeriodic}
In this section, we consider the case when $X$ and $Y$ are almost periodic. Formally, this means that the data structure is additionally given (at construction time) integers $p$ and $m$ such that $\HD(X[0\dd |X|-p),\allowbreak X[p\dd |X|))+\HD(Y[0\dd |Y|-p),Y[p\dd |Y|))\le m$. 

The presentation of our data structure comes in four parts. We start with an overview, where we introduce some notation and provide intuition. This is followed by~\cref{alg:PM}, which gives a precise mathematical definition of the values reported by our data structure. Then, in \cref{lem:main}, we formalize the intuition and prove that the outputs of \cref{alg:PM} satisfy the requirements of \cref{prob:PM}, that is, their distributions follow $\Bin_{d,\delta}(\HD(X,Y[s\dd s+|X|)),r)$.
Finally, in \cref{lem:impl}, we provide an efficient implementation of \cref{alg:PM} and the complexity analysis of the resulting procedure.

The randomness in our data structure comes from two sets $S_X\subseteq [0\dd |X|)$ and $S_Y\subseteq [0\dd |Y|)$ with each element sampled independently with probability $r$.
At a first glance, it may seem that $S_X$ would be sufficient because 
$|\{x\in S_X : X[x]\ne Y[s+x]\}|$ satisfies the requirements of \cref{prob:PM} (as proved in \cref{fct:naive}).
However, these values cannot be computed efficiently.
To see this, suppose that $X$ and $Y$ consist only of the character $\mathtt{A}$,
except for a single position $y$ where $Y[y]=\mathtt{B}$. 
In this case, we would need to return $1$ if and only if $y-s\in S_X$;
this requires $\Omega(|Y|)$ preprocessing time (to discover $y$) or $\Omega(|S_X|)$ query time
(to check if $y=s+x$ for some $x\in S_X$). 
Thus, our strategy is more involved: we partition the set of mismatches $\{(x,y) : X[x]\ne Y[y]\}$
into two disjoint classes, $M_X$ and $M_Y$, and we return $|\{(x,y)\in M_X : x\in S_X\text{ and }y-x=s\}| + |\{(x,y)\in M_Y : y\in S_Y\text{ and }y-x=s\}|$. In other words, depending on its class, a mismatch in $X[x]\ne Y[y]$ is counted either if $x\in S_X$ or if $y\in S_Y$.
Our classification is determined by comparing the frequency of characters $X[x]$ and $Y[y]$ in proximity to the two underlying positions.
Intuitively, this is helpful because a rare character is unlikely to be discovered unless it happens to be sampled.
Formally, for each position $x\in [0 \dd |X|)$, we define its context $C_x=\{x, x+p, x+2p,\ldots, x+(c-1)p\}$,
where $c=\tOh(r^{-1})$ is sufficiently large, and the context $C_y$ of $y\in [0 \dd |Y|)$ is defined similarly.
Then, we count the occurrences of $X[x]$ and $Y[y]$ as $X[x']$ for $x'\in C_x$ and as $Y[y']$ for $y'\in C_y$, denoting the resulting numbers by $m_x$ and $m_y$, respectively (see Lines~\ref{ln:mx} and~\ref{ln:my} of \cref{alg:PM}).
We place the mismatch in $M_X$ if $m_x \le m_y$ and in $M_Y$ otherwise.

\begin{algorithm}[b!]
\SetKwBlock{Begin}{}{end}
\SetKwFunction{preprocess}{Build}
\SetKwFunction{query}{Query}
	$H \gets 0$\;
	\ForEach{$x\in [0\dd |X|)$}{\label{ln:loop}
		$y \gets  s+x$\;\label{ln:y}
		\lIf{$x\notin S_X$ \KwSty{and} $y\notin S_Y$}{\KwSty{continue}}\label{ln:sxsy}
		$c \gets  \ceil{r^{-1}\ln \delta^{-2}}$\;
		$C_x \gets  \{x,x+p,\ldots,x+(c-1)p\}$\;
		$C_y \gets  \{y,y+p,\ldots,y+(c-1)p\}$\;
		\If{$C_x\sub [0\dd |X|)$ \KwSty{and} $|\{X[x']: x'\in C_x\cap S_X\} \cup \{Y[y']: y' \in C_y\cap S_Y\}|=1$}{\label{ln:filter}\KwSty{continue}\;\label{ln:filter2}}
		\lIf{$X[x]=Y[y]$}{\KwSty{continue}}\label{ln:mismatch}
		$m_x \gets  |\{x'\in C_x\cap [0\dd |X|) : X[x']=X[x]\}|+|\{y'\in C_y\cap [0\dd |Y|) : Y[y']=X[x]\}|$\;\label{ln:mx}
		$m_y \gets  |\{x'\in C_x\cap [0\dd |X|) : X[x']=Y[y]\}|+|\{y'\in C_y\cap [0\dd |Y|) : Y[y']=Y[y]\}|$\;\label{ln:my}
		\lIf{($x\in S_X$ \KwSty{and} $m_x \le m_y$) \KwSty{or} ($y\in S_Y$ \KwSty{and} $m_y < m_x$)}{$H \gets  H+1$}\label{ln:increment}
	}
	\Return{$H$}\;
	\caption{Answering queries of \cref{prob:PM} in the almost periodic case.}\label{alg:PM}
\end{algorithm}

The benefit of this approach is that, if $x\in M_X$, then the contexts $C_x$ and $C_y$
(of total size $2c$) contain at least $c$ occurrences of characters other than $X[x]$.
In particular, at a small loss of total variation distance,
we may assume that the samples $S_X$ and $S_Y$ have revealed such a character distinct from $X[x]$,
and thus we need to read $Y[y]$ only conditioned on that event.
At the same time, if $X$ and $Y$ are almost periodic and the sought Hamming distance is small,
then the contexts $C_x$ and $C_y$ are typically uniform (meaning that all the $2c$ positions are occurrences of the same symbol). Hence, we can afford to investigate each location with non-uniform contexts and, in case a mismatch $X[x]\ne Y[y]$ is detected, to explicitly compute $m_x$ and $m_y$ in order to verify whether this mismatch indeed belongs to $M_X$ or $M_Y$.

\begin{lemma}[Correctness]\label{lem:main}
Given $s\in [0\dd |Y|-|X|]$,  \cref{alg:PM} outputs a random variable $H$ at total variation distance at most $\delta$
from $\Bin(h,r)$, where $h=\HD(X,Y[s\dd s+|X|))$.
\end{lemma}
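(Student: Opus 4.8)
The plan is to show that the variable $H$ output by \cref{alg:PM} has the same distribution as $\sum_{(x,y)\in M}Z_{(x,y)}$, where $M=\{(x,y) : X[x]\ne Y[y],\ y=s+x,\ x\in[0\dd|X|)\}$ is the set of mismatches at shift $s$ (so $|M|=h$) and the $Z_{(x,y)}$ are \emph{independent} Bernoulli$(r)$ variables, up to total variation distance $\delta$. Once this is established, $\sum_{(x,y)\in M}Z_{(x,y)}\sim\Bin(h,r)$ immediately, and the triangle inequality for total variation distance finishes the proof.

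First I would fix the shift $s$ and, for each mismatch $(x,y)\in M$, identify which of $x\in S_X$ or $y\in S_Y$ is the event that causes $H$ to be incremented. By the rule on \cref{ln:increment}, this is the event $E_{(x,y)} := (x\in S_X\wedge m_x\le m_y)\vee(y\in S_Y\wedge m_y<m_x)$, but only provided the filter on \cref{ln:filter} does not trigger a \texttt{continue} first. Note that the values $m_x,m_y$ depend only on the characters of $X$ and $Y$ (not on the random sets), so for each mismatch the classification into $M_X$ (case $m_x\le m_y$) or $M_Y$ (case $m_y<m_x$) is deterministic. Ignoring the filter for a moment: a mismatch in class $M_X$ is counted iff $x\in S_X$, and one in class $M_Y$ iff $y\in S_Y$; since distinct mismatches at a common shift occupy distinct positions $x$ in $X$ and distinct positions $y$ in $Y$ (here the key point is $y=s+x$, so a single sampled index is used at most once), these indicator events are genuinely independent across mismatches, each with probability exactly $r$. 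Hence, without the filter, $H$ would be \emph{exactly} $\Bin(h,r)$.

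The role of the filter on \cref{ln:filter}--\cref{ln:filter2} is efficiency, and its correctness cost is exactly the $\delta$ slack. I would argue: the filter can only cause an \emph{undercount}, i.e., it can suppress an increment that would otherwise happen, and it never creates a spurious one (if the filter triggers we \texttt{continue}, so $H$ is unchanged; and when $X[x]=Y[y]$ there is nothing to count anyway). So it suffices to bound the probability that the filter suppresses a genuine increment. Suppose $(x,y)\in M$ is in class $M_X$ and would be counted, so $x\in S_X$. The filter suppresses it only if $C_x\subseteq[0\dd|X|)$ and $\{X[x']:x'\in C_x\cap S_X\}\cup\{Y[y']:y'\in C_y\cap S_Y\}$ is a singleton. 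But $x\in S_X$ and $x\in C_x$, so that singleton must be $\{X[x]\}$; on the other hand, being in class $M_X$ means $m_x\le m_y$, and since $X[x]\ne Y[y]$ while $y\in C_y$ contributes to $m_y$, the contexts $C_x\cup C_y$ contain at least $c$ positions whose character differs from $X[x]$ (this is the combinatorial heart: $m_x+m_y'\le 2c$ forces $m_x\le c$ where $m_x$ counts the $X[x]$-occurrences, hence $\ge c$ non-$X[x]$ occurrences, I would spell out the bookkeeping carefully here — it is the one place that is not entirely mechanical). Each such position is sampled independently with rate $r$, so the probability that \emph{none} of them is sampled — which is necessary for the singleton condition to hold — is at most $(1-r)^{c}\le e^{-rc}=e^{-\ln\delta^{-2}}=\delta^2$, by the choice $c=\ceil{r^{-1}\ln\delta^{-2}}$. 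The symmetric argument handles $M_Y$. Union-bounding over the at most $|X|$ mismatches gives total suppression probability at most $|X|\cdot\delta^2\le\delta$ (using $\delta<\tfrac1{|X|}$), and this is a valid coupling bound between the actual output and the filter-free version, so the total variation distance is at most $\delta$.

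I expect the main obstacle to be making the ``filter only undercounts, and the undercount is rare'' argument fully rigorous — in particular, the clean coupling between the real run of \cref{alg:PM} and the idealized filter-free run on the same random sets $S_X,S_Y$, together with the precise counting that forces at least $c$ ``witness'' positions distinct from $X[x]$ inside $C_x\cup C_y$ when $(x,y)\in M_X$. One subtlety worth checking is the edge case $C_x\not\subseteq[0\dd|X|)$, where the filter never triggers, so those mismatches are always handled correctly and need no argument; and one must also confirm that positions $x+jp$ lying outside $[0\dd|X|)$ (resp.\ $y+jp$ outside $[0\dd|Y|)$) are simply excluded from the counts $m_x,m_y$ consistently, matching the $\cap[0\dd|X|)$ and $\cap[0\dd|Y|)$ in Lines~\ref{ln:mx}--\ref{ln:my}. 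The almost-periodicity hypothesis (the bound $m$ on $\HD(X[0\dd|X|-p),X[p\dd|X|))+\HD(Y[0\dd|Y|-p),Y[p\dd|Y|))$) is \emph{not} needed for correctness — it is only used later, in \cref{lem:impl}, for the running-time analysis — so I would be careful not to invoke it here.
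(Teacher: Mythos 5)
Your proposal is correct and follows essentially the same strategy as the paper: first show that, without the filter, each mismatch is counted by exactly one of the two independent rate-$r$ events (selected by the deterministic comparison of $m_x$ and $m_y$), so the filter-free output is exactly $\Bin(h,r)$; then show that the filter only ever suppresses, and, because a counted $M_X$-mismatch forces $m_x\le c$ and hence at least $c$ positions in $C_x\cup C_y$ carrying a character distinct from $X[x]$, each suppression happens with probability at most $(1-r)^c\le\delta^2$, giving total variation at most $|X|\delta^2\le\delta$ by union bound and coupling. Your closing observations (that $C_x\not\subseteq[0\dd|X|)$ is a vacuous case for the filter, and that almost-periodicity is not needed for correctness but only for the running-time analysis in the next lemma) are accurate and match the paper's structure.
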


\begin{proof}
We shall first prove that a modification of \cref{alg:PM} with Lines~\ref{ln:filter} and~\ref{ln:filter2} removed outputs $H\sim \Bin(h,r)$.
Let $P\sub [0\dd |X|)$ be the set of positions for which $H$ was incremented in Line~\ref{ln:increment}.
We claim that $P$ is a subset of $M:= \{x\in [0\dd |X|) : X[x]\ne Y[s+x]\}$
with each position sampled independently with rate $r$.
Due to Line~\ref{ln:mismatch}, we have $P\sub M$.
For fixed positions $x\in M$ and $y=s+x$, let us consider the (deterministic) values $m_x$ and $m_y$ as defined in Lines~\ref{ln:mx} and~\ref{ln:my}. 
Observe that, for every $x\in M$, we have $x\in P$ if and only if $(x\in S_X\text{ and } m_x \le m_y)\text{ or }(y\in S_Y\text{ and }m_x > m_y)$.
Moreover, these are probability-$r$ events independent across $x\in M$.
Thus, in the modified version of the algorithm, we indeed have $H\sim \Bin(h,r)$.

Next, we shall prove that, for every $x\in P$, the probability of losing $x$ due to Line~\ref{ln:filter} does not exceed $\delta^2$.
By the union bound, the total variation distance between $H$ and $\Bin(h,r)$ is then at most $\delta^2 |X|\le \delta$.
Consider a multiset $A :=  \{X[x']: x'\in C_x \cap [0\dd |X|)\} \cup \{Y[y']: y' \in C_y\cap [0\dd |Y|)\}$.
The filtering of Line~\ref{ln:filter} applies only if $|A|=2c$ and all the sampled characters in $A$ match.
If $m_x \le m_y$ and $x\in S_X$, then $X[x]$ is sampled yet none of the at least $c$ elements of $A$ distinct from $X[x]$
is sampled.
Similarly, if $m_y < m_x$ and $y\in S_Y$, then $Y[y]$ is sampled yet none of the at least $c$ elements of $A$ distinct from $Y[y]$ is sampled.
In either case, the probability of losing $x\in P$ is bounded by $(1-r)^{c} \le \exp(-rc) \le \delta^2$, as claimed.
\end{proof}

\begin{lemma}[Efficient implementation]\label{lem:impl}
\cref{alg:PM} can be implemented so that, after $\tOh(|S_X|+|S_Y|)$-time preprocessing,
the query time is $\tOh((m+p+h+r^{-1})\log^2\delta^{-1})$ with probability $1-\Oh(\delta)$,
where $h=\HD(X,Y[s\dd s+|X|))$.
\end{lemma}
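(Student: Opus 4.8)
The plan is to describe a concrete data structure implementing \cref{alg:PM} and bound both its preprocessing and query cost. The preprocessing is cheap: we only need to store the sampled sets $S_X$ and $S_Y$ together with auxiliary structures that let us, at query time, (i)~enumerate the positions $x\in S_X\cap[0\dd|X|)$ and $y\in S_Y\cap[0\dd|Y|)$ that fall into the relevant ranges, (ii)~test membership in $S_X,S_Y$ in $\tOh(1)$ time, and (iii)~access arbitrary characters of $X,Y$ in $\Oh(1)$ time (already assumed). Sorting the sampled indices and building predecessor/successor structures (or just sorted arrays with binary search) costs $\tOh(|S_X|+|S_Y|)$, matching the claimed bound.

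The heart of the argument is the query bound. The loop of \cref{alg:PM} ranges over all $x\in[0\dd|X|)$, but lines~\ref{ln:sxsy}--\ref{ln:filter2} are designed so that we never do real work on an $x$ unless either $x\in S_X$ or $y=s+x\in S_Y$, \emph{and} the context $C_x$ (if it fits inside $X$) is non-uniform from the sampler's point of view. So the plan is: first, observe that the set of ``candidate'' positions is $\{x : x\in S_X\text{ or }s+x\in S_Y\}$, which we can enumerate in $\tOh(|S_X\cap[\cdot]| + |S_Y\cap[\cdot]|)$ time using the sorted arrays; for each such $x$, the cost is dominated by reading (parts of) the length-$c$ contexts $C_x,C_y$ to perform the uniformity test in line~\ref{ln:filter} and, in the rare event a mismatch survives, by the $\Oh(c)$ work in lines~\ref{ln:mx}--\ref{ln:my}. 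Thus the query time is $\tOh(c\cdot N)$ where $N$ is a bound on the number of surviving candidates, plus $\tOh(c\cdot(\#\text{detected mismatches}))$. Since $c=\Oh(r^{-1}\log\delta^{-2})=\Oh(r^{-1}\log^2\delta^{-1})$, it remains to show $N = \tOh((m+p+h+r^{-1}))$ with probability $1-\Oh(\delta)$, and that the number of detected mismatches is $\Oh(r\cdot h)$ plus a low-probability tail.

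The combinatorial core is the bound on $N$. The plan is to use the almost-periodicity hypothesis $\HD(X[0\dd|X|-p),X[p\dd|X|))+\HD(Y[0\dd|Y|-p),Y[p\dd|Y|))\le m$ together with $h=\HD(X,Y[s\dd s+|X|))$. A position $x$ is \emph{not} filtered by line~\ref{ln:filter} only if either $C_x\not\sub[0\dd|X|)$ (there are only $\Oh(p+cp)=\tOh(p\cdot r^{-1})$... actually only $\Oh((c-1)p)$ such boundary positions, but those are also only relevant when sampled) or the context $A=C_x\cup C_y$ is genuinely non-uniform, i.e.\ contains two distinct characters. I would argue that a context $C_x$ (a length-$c$ arithmetic progression with common difference $p$) is non-uniform only if it contains a position witnessing one of: a period-$p$ violation in $X$ (there are $\le m$ of these, each contained in $\Oh(c)$ contexts), a period-$p$ violation in $Y$ (again $\le m$), or a mismatch position between $X$ and $Y[s\dd\cdot)$ (there are $h$ of these, each in $\Oh(c)$ contexts), plus boundary effects near the ends of $X$ and $Y$ within $\Oh(cp)$ of either end. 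Hence the number of $x$ with non-uniform context is $\Oh(c(m+h+p))$. But we only care about such $x$ that are \emph{also} candidates ($x\in S_X$ or $s+x\in S_Y$); in expectation this is $\Oh(r\cdot c(m+h+p)) = \tOh(m+h+p)$, and a Chernoff/Markov bound gives the same up to the $\Oh(\delta)$ failure probability and an extra $\log\delta^{-1}$ factor. Finally, candidate positions with \emph{uniform} context are filtered (when $C_x\sub[0\dd|X|)$), and the $\Oh(cp)$ boundary positions contribute $\tOh(r^{-1})$ more once intersected with the sample and accounting for the $\log\delta^{-1}$ factors. Summing, $N=\tOh(m+h+p+r^{-1})$, each processed in $\Oh(c)=\tOh(r^{-1})$ time, giving the stated $\tOh((m+p+h+r^{-1})\log^2\delta^{-1})$ query bound.

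The main obstacle I anticipate is making the ``each violation/mismatch lies in $\Oh(c)$ contexts'' counting fully rigorous in the presence of the three boundary conditions (contexts running off the end of $X$, the offset $s$, and the fact that $C_x$ and $C_y$ have different ambient ranges $[0\dd|X|)$ vs.\ $[0\dd|Y|)$), and in correctly matching the tail probabilities: we need a \emph{high-probability} (in fact $1-\Oh(\delta)$) bound on $N$, so I would apply a Chernoff bound to $|S_X\cap B|$ and $|S_Y\cap B|$ for the relevant ``bad'' sets $B$ of size $\Oh(c(m+h+p))$, using that $r\cdot|B|$ is already $\Omega(\log\delta^{-1})$ in the regime of interest (and handling the small-$|B|$ case by a crude union bound). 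Care is also needed to ensure the detected mismatches (line~\ref{ln:mismatch} passing) are few: each candidate $x$ yields a detected mismatch only if $X[x]\ne Y[s+x]$, so there are at most $h$ of them among all $x$, hence $\Oh(rh)+\text{tail}$ among candidates — this folds into the same estimate. Everything else (the data-structure bookkeeping, the $\Oh(c)$-time recomputation of $m_x,m_y$) is routine.
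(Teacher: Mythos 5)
Your preprocessing plan and your combinatorial count of positions with non-uniform contexts (roughly $O(c(m+h+p))$, with each period violation or mismatch touching $O(c)$ contexts) are essentially correct and match the paper's. However, there is a genuine gap in your query-time argument.

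You propose to enumerate the candidate positions $\{x : x\in S_X\text{ or }s+x\in S_Y\}$ and do a little work on each. But the number of candidates is $\Theta(r(|X|+|Y|))$ in expectation -- this is the \emph{preprocessing} budget, not the query budget. Concretely, take $X=Y=\mathtt{A}^n$, $p=1$, $m=h=0$: the claimed query time is $\tOh(r^{-1}\log^2\delta^{-1})$, yet your plan spends $\Omega(|S_X|)=\Omega(rn)$ time enumerating candidates, which for $n\gg r^{-2}$ is already too large. Even $O(1)$ work per candidate is too much; the issue is not the cost of the uniformity test per candidate but the sheer number of candidates visited.

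The missing ingredient is the \emph{jump mechanism}. In the paper's implementation, when the filter at line~\ref{ln:filter} passes (the sample inside $C_x\cup C_y$ looks uniform, say all equal to some character $a$), the algorithm does \emph{not} merely step to the next sampled position; it locates the nearest sampled $x'\in S_X$ with $x'\equiv x\pmod p$ and $X[x']\ne a$ (and likewise for $y'\in S_Y$), then jumps $x$ forward by $p\cdot\big\lceil\frac1p\min(|X|-x,x'-x,y'-y)-(c-1)\big\rceil$. A single jump skips past potentially many sampled positions whose contexts are uniform for the same reason, and the total number of jumps is bounded by $O(p+m)$ across all residues modulo $p$. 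To make each jump $O(1)$, the preprocessing must build a \emph{period- and character-aware} successor structure answering ``given $(x,a)$, find the smallest $x'\ge x$ in $S_X$ with $x'\equiv x\pmod p$ and $X[x']\ne a$'' -- plain sorted arrays with binary search over $S_X$ and $S_Y$ do not support this. Without both the specialized successor structure and the jump rule, the query time is inherently coupled to $|S_X|+|S_Y|$ and the stated bound cannot be reached.

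A minor side note: in your ``boundary positions'' count you write that the $O(cp)$ positions whose contexts leave $[0\dd|X|)$ contribute $\tOh(r^{-1})$ after sampling; since $rc=\tOh(1)$, they actually contribute $\tOh(p)$, which the statement's bound already absorbs via the $p$ term. This is not the source of the $r^{-1}$ term -- that comes from the $\Oh(c)$-time recomputation of $m_x,m_y$ at line~\ref{ln:mx} for the few detected mismatches that reach that line.
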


\begin{proof}
While preprocessing $X$, we construct a data structure that, given a position $x\in [0\dd |X|)$ and a character $a\in \Sigma$, in $\tOh(1)$ time computes the smallest position $x'\in S_X\cap [x\dd |X|)$ such that $X[x']\ne a$ and $x'\equiv x \pmod p$ (if any such $x'$ exists). This preprocessing costs $\tOh(|S_X|)$ time
(recall that all our model assumes oracle access to characters of the input strings).
The string $Y$ is preprocessed in the same way in $\tOh(|S_Y|)$ time.

In the main loop of Line~\ref{ln:loop}, we process $x\in [0\dd |X|)$ grouped by the remainder $x \bmod p$,
starting from $x\in [0\dd p)$ and $y=s+x$.
For each group, instead of increasing $x$ and $y$ by $p$ each time,
we utilize the precomputed data structure to perform larger jumps.
Specifically, if an iteration terminates due to Line~\ref{ln:sxsy}, we proceed directly
to the subsequent state $(x',y')$ such that $x'\in S_X$ or $y'\in S_Y$ (if any).
Moreover, while executing Line~\ref{ln:filter}, we compute $a=X[x]$ (if $x\in S_X$)
or $a=Y[y]$ (otherwise) and determine the nearest sampled position $x'\in [x\dd |X|)\cap S_X$
with $X[x']\ne a$ and $x' \equiv x \pmod p$ and the nearest sampled position $y'\in [y\dd |Y|)\cap S_Y$
with $Y[y']\ne a$ and $y'\equiv y \pmod p$. Then, the condition of Line~\ref{ln:filter}
is satisfied if and only if $(c-1)p < \min(|X|-x,x'-x,y'-y)$.
In that case, we can increase $x$ and $y$ by $p\cdot \lceil \frac1p \min(|X|-x,x'-x,y'-y) -(c-1)\rceil$;
the number of such increases can be bounded by $\Oh(p+m)$ in total across all the remainders modulo $p$.

It remains to count positions reaching Lines~\ref{ln:mismatch} and~\ref{ln:mx}.
The condition $C_x\sub [0\dd |X|)$ is not satisfied for at most $pc$ positions $x$.
The condition $|\{X[x']: x'\in C_x\cap [0\dd |X|)\} \cup \{Y[y']: y' \in C_{y}\cap [0\dd |Y|)\}|=1$
is not satisfied for at most $(m+h)c$ positions $x$.
Hence, at most $(m+h+p)c$ positions may proceed to Line~\ref{ln:mismatch}.
However, due to the filtering of Line~\ref{ln:sxsy}, the actual number of positions
reaching Line~\ref{ln:mismatch} can be bounded by $\Bin((m+h+p)c,2r)$, which is $\Oh((m+h+p)\log\delta^{-1})$ with probability at least $1-\delta$.
Hence, Lines~\ref{ln:y}--\ref{ln:mismatch} can be implemented in $\tOh((m+h+p)\log\delta^{-1})$ time with probability at least $1-\delta$.

Furthermore, $X[x]= Y[y]$ is not satisfied for at most $h$ positions,
and, for these positions, computing $m_x$ and $m_y$ takes $\Oh(c)$ time.
Due to the filtering of Line~\ref{ln:sxsy},  the actual number of positions
reaching Line~\ref{ln:mx} is at most $\Bin(h,2r)$, which is $\Oh(\frac{h}{r}+\log \delta^{-1})$ with probability at least $1-\delta$. Hence, Lines~\ref{ln:mx}--\ref{ln:increment}
cost $\Oh(h \log \delta^{-1} + r^{-1}\log^2\delta^{-1})$ time with probability at least $1-\delta$.
Overall, the running time of any query is $\tOh((m+p+h+r^{-1})\log^2\delta^{-1})$ with probability at least $1-\Oh(\delta)$.
\end{proof}

\subsection{Text-to-Pattern Hamming Distances for Arbitrary Strings}\label{sec:AuxProb}

In this section, we design an efficient solution to \cref{prob:PM} (in the general case):

\begin{restatable}{proposition}{prppm}\label{prp:PM}
	Given an instance of \cref{prob:PM} with $w:=|Y|-|X|+1$ and $|X|r\ge dw$,
	after $\tOh(|X|r\log\delta^{-1})$-time randomized preprocessing,
	one can deterministically (with no further randomness) answer the queries of \cref{prob:PM} in $\tOh(dw\log^2 \delta^{-1})$ time.
\end{restatable}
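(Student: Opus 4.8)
The plan is to reduce the general instance of \cref{prob:PM} to the almost-periodic case handled by \cref{sec:almostPeriodic} (via \cref{lem:main} and \cref{lem:impl}), following the blueprint sketched in the technical overview. The key structural fact is that if two distinct shifts $s_1\ne s_2$ both yield small text-to-pattern Hamming distance, say $\HD(X,Y[s_i\dd s_i+|X|))\le dw$, then $p:=|s_1-s_2|$ is an \emph{approximate period} of both $X$ and $Y$: the triangle inequality over the alignment chain gives $\HD(X[0\dd|X|-p),X[p\dd|X|))+\HD(Y[0\dd|Y|-p),Y[p\dd|Y|))=O(dw)$. So the first step of preprocessing is a filtering phase: use the naive estimator of \cref{fct:naive} with rate $\Theta(\frac{\log n}{dw})$ to distinguish, for every $s\in[0\dd w)$, whether $\HD(X,Y[s\dd s+|X|))$ is $\le dw$ or $>2dw$; this costs $\tOh(\frac{|X|}{dw}\cdot w)=\tOh(\frac{|X|}{d})=\tOh(|X|r)$ preprocessing (we may assume $|X|r\ge dw$, so $\frac{|X|}{d}\le |X|r$) and lets us discard all shifts with distance $>2dw$, for which we can simply return a value $\ge d$ deterministically (consistent with $\Bin_{d,\delta}$, since capped binomials only require stochastic domination when $h>d$).

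If at most one candidate shift $\tilde s$ survives the filter, we handle it directly: run the naive estimator of \cref{fct:naive} on the single fragment, which gives the exact $\Bin(\HD,r)$ distribution in $\tOh(|X|r)$ preprocessing. If at least two shifts $\tilde s_1\ne\tilde s_2$ survive, we set $p:=|\tilde s_1-\tilde s_2|$ and $m:=\Theta(dw)$; the filtering guarantee certifies $\HD(X[0\dd|X|-p),X[p\dd|X|))+\HD(Y[0\dd|Y|-p),Y[p\dd|Y|))\le m$ with high probability, so the almost-periodic data structure of \cref{sec:almostPeriodic} applies with these parameters. We invoke it with the rate $r$ and error $\delta$ from the input; by \cref{lem:main} its output is within total variation distance $\delta$ of $\Bin(\HD(X,Y[s\dd s+|X|)),r)$, which is exactly what \cref{prob:PM} demands (the capped behavior is only needed on the discarded shifts). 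By \cref{lem:impl}, its preprocessing is $\tOh(|S_X|+|S_Y|)=\tOh(|X|r)$ (using $|Y|=|X|+w-1$ and $w\le dw\le |X|r$, so $|Y|r=\tOh(|X|r)$) and each query runs in $\tOh((m+p+h+r^{-1})\log^2\delta^{-1})$. Here $m=\Theta(dw)$, $p\le w\le dw$, and for surviving shifts $h\le 2dw$; since $d\ge r^{-1}$ we have $r^{-1}\le d\le dw$, so the per-query time is $\tOh(dw\log^2\delta^{-1})$, as required.

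Two points need care. First, \cref{lem:impl} only bounds the query time with probability $1-O(\delta)$, whereas \cref{prp:PM} asks for \emph{deterministic} queries after randomized preprocessing; the fix is to treat an overlong query as a rare bad event absorbed into the $\delta$-slack of $\Bin_{d,\delta}$, or, more cleanly, to truncate the computation after the target time bound and output a default value in that unlikely case — this changes the output distribution by at most $O(\delta)$ in total variation, which is within tolerance (one may need to rescale $\delta$ by a constant throughout). Second, the correctness of the whole reduction hinges on the filtering step being accurate for all $w$ shifts simultaneously; a union bound over the $w\le n$ shifts costs only a $\log n$ factor in the sampling rate, which is absorbed in the $\tOh(\cdot)$, and this randomness is used only at preprocessing time, keeping queries deterministic. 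The main obstacle I anticipate is bookkeeping the parameters so that the hypotheses of \cref{lem:impl} (in particular the almost-periodicity bound $m$, and the relation $d\ge r^{-1}$) line up with what the filtering phase actually certifies, and verifying that the $h\le 2dw$ bound on surviving shifts is enough to keep the $h$-term in the query cost under control — but none of this is deep, it is a matter of threading the constants. The genuinely novel content was already spent in \cref{sec:almostPeriodic}; this section is the glue that removes the periodicity assumption.
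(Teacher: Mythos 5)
Your proposal follows essentially the same approach as the paper: a filtering phase (rate $\Theta(\frac{\log(w\delta^{-1})}{dw})$) that identifies candidate shifts with $\HD\le 2dw$, a direct naive fallback if at most one shift survives, and otherwise the almost-periodic data structure of \cref{sec:almostPeriodic} with $p$ set to the gap between two surviving shifts, plus a hard cap on query time to absorb the unlikely slow runs. This matches the paper's proof structure step for step.

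One point to tighten, because as written it is actually wrong: when discarding a filtered-out shift (or when the query-time cap fires), it does \emph{not} suffice to ``return a value $\ge d$.'' The definition of $\Bin_{d,\delta}(h,r)$ requires the output to be within TV distance $\delta$ of a variable that \emph{stochastically dominates} $\Bin(h,r)$ whenever $h>d$; a fixed value like $d$ or $d+1$ fails this when $h$ is large, since $\Bin(h,r)$ then exceeds it with probability much larger than $\delta$. You must return the maximum value $|X|$ (as the paper does), which deterministically dominates $\Bin(h,r)$ for every $h\le|X|$. Similarly, your remark that truncation only ``changes the output distribution by $O(\delta)$ in total variation'' is the right argument for $h\le d$, but for $h>d$ the truncation may fire with probability far above $\delta$ (e.g., if the filtering test erred), so the TV argument does not apply; instead, one observes that overwriting the output with the maximum $|X|$ can only increase it, preserving stochastic domination. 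Finally, your justification $\frac{|X|}{d}\le|X|r$ should cite $d\ge r^{-1}$ from \cref{prob:PM} rather than $|X|r\ge dw$, which does not imply it.
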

\begin{proof}
	In the preprocessing, for each shift $s\in [0\dd w)$, we distinguish,
	correctly with probability at least $1-\frac{\delta}{w}$, the case of $\HD(X,Y[s\dd s+|X|))\le dw$
	from the case of $\HD(X,Y[s\dd s+|X|))>2dw$.
	This is implemented by sampling positions in $X$ at rate $\Theta(\frac{\log (w\delta^{-1})}{dw})$
	and comparing the sampled characters of $X$ against the aligned characters of $Y$.
	By the union bound, the tests return correct values (for all $s\in [0\dd w)$) with probability at least $1-\delta$. 
	The running time of this preprocessing step is $\Oh(w\cdot \frac{\log (w\delta^{-1})}{dw}\cdot (|X|+|Y|)) =\tOh(|X|r\log\delta^{-1})$ with probability at least $1-\delta$ (if this step takes too much time,
	we terminate the underlying computation and declare a preprocessing failure, which effectively means that $|X|$ is returned for all queries).
	
	If the test reports a small value $\HD(X,Y[\tilde{s}\dd \tilde{s}+|X|))$ for just one shift $\tilde{s}\in [0\dd w)$,
	then we memorize this shift and compute a value distributed as $\Bin(\HD(X,Y[\tilde{s}\dd \tilde{s}+|X|)),r)$ by sampling the characters $X[x]$ and $Y[x+\tilde{s}]$ at rate $r$. This costs $\Oh(|X|r\log \delta^{-1})$ time
	with probability at least $1-\delta$ (otherwise, we declare a preprocessing failure).
	At query time, we report the computed value for $\tilde{s}$.
	For each of the remaining shifts $s\ne \tilde{s}$, we are guaranteed that (with probability at least $1-\delta$) $\HD(X,Y[s\dd s+|X|))>dw\ge d$, and thus we can return $|X|$ as a value distributed according to $\Bin_{d,\delta}(\HD(X,Y[s\dd s+|X|)),r)$.
	
	The interesting case is when we detect (at least) two shifts $\tilde{s}_1\ne \tilde{s}_2$ satisfying $\HD(X,Y[s\dd \allowbreak s+|X|))\le 2dw$.
	We then conclude that $X$ and $Y$ are approximately periodic with period $p := |\tilde{s}_1-\tilde{s}_2|$.
	Formally, this means that $\HD(X[0\dd |X|-p),X[p\dd |X|))+\HD(Y[0\dd |Y|-p),\allowbreak Y[p\dd |Y|))\le 8dw+w=\Oh(dw)$
	holds with probability at least $1-\delta$.
	In this case,  at preprocessing time, we draw sets $S_X\subseteq [0\dd |X|), S_Y\subseteq [0\dd |Y|)$ with each element sampled independently with probability $r$. This costs $\Oh(|X|r\log \delta^{-1})$ time
	with probability at least $1-\delta$ (otherwise, we declare a preprocessing failure).
	  We apply the query procedure of \cref{alg:PM}, described in~\cref{sec:almostPeriodic}, imposing a hard limit $\tOh(dw\log^2\delta^{-1})$ on the query time.
	If the query algorithm exceeds the limit, we return a naive upper bound $H = |X|$.
	If $h\le d$, the probability of exceeding the limit is $\Oh(\delta)$ (this incorporates both $X,Y$ violating the periodicity assumption and \cref{alg:PM} taking a long time due to an unlucky choice of $S_X,S_Y$).
	Hence, imposing the limit increases the 
	total variation distance of $H$ and $\Bin(h,r)$ by $\Oh(\delta)$.
	Otherwise (if $h > d$), setting $H:= |X|$ in some states of the probability space may only increase $H$ in these states,
	and thus the resulting random variable stochastically dominates the original one. In particular,
	$H$ is at total variation distance at most $\delta$ from a variable stochastically dominating $\Bin(h,r)$.
	Combining these two cases, we conclude that, if we shrink $\delta$ by an appropriate constant factor, the resulting value $H$ satisfies the requirements of \cref{prob:PM} 
\end{proof}

\subsection{Proof of Theorem~\ref{prp:LCE}}\label{sec:actual}
In this section, we derive \cref{prp:LCE} from \cref{prp:PM}. 

\prpLCE*

\begin{proof}
Our solution uses an auxiliary parameter $r\in[\frac{1}{d},1]$ to be chosen later.
At preprocessing, we build the data structure of \cref{fct:naive} and, for each dyadic range $[x\dd x+\ell)\sub [0\dd |X|)$ of length $\ell\ge \frac{dw}{r}$, the component of \cref{prp:PM} for $X[x\dd x+\ell)$ and $Y[x-w\dd x+w+\ell)$ (with the latter fragment trimmed to fit within $Y$ if necessary) and $\delta = n^{-c-1}$ for a sufficiently large constant $c$.
While doing so, we make sure that all these components use independent randomness. 
The construction algorithm takes $\tOh(\ell r)$ time for each such dyadic range,
which sums up to $\tOh(nr)$ time across all considered ranges.
	
As a result, for $x\in [0\dd |X|]$, $y\in [0\dd |Y|]$, and $\ell\in [1\dd \min(|X|-x,|Y|-y)]$, 
if $[x\dd x+\ell)$ is a dyadic range and $|y-x|\le w$, then we can in $\tOh(dw)$ time 
compute a value $H\in \Bin_{d,n^{-c-1}}(h,r)$, where $h=\HD(X[x\dd x+\ell),Y[y\dd y+\ell))$. 
For $\ell< \frac{dw}{r}$, this follows from \cref{fct:naive},
whereas for $\ell \ge \frac{dw}{r}$, this is a consequence of \cref{prp:PM}.
For an arbitrary range $[x\dd x+\ell)$, a value $H\in \Bin_{d,n^{-c}}(h,r)$ can be derived by decomposing  $[x\dd x+\ell)$ into $\Oh(\log n)$ dyadic ranges and combining the results of the individual queries using \cref{obs:combine}.

With an appropriate $r$, we can then use $H$ to distinguish $\HD(X[x\dd x+\ell),Y[y\dd y+\ell])\le d$ and $\HD(X[x\dd x+\ell),Y[y\dd y+\ell])\ge (1+\epsilon)d$.
To see this, first suppose that $H\in \Bin_d(h,r)$. If $h\le d$, then $\Pr[H \ge (1+\frac{\epsilon}{3})rd]
\le \exp(-\frac1{27}\epsilon^2 rd)$ holds by the multiplicative Chernoff bound.
Moreover, if $h \ge (1+\epsilon) d$, then $\Pr[H < (1+\frac{\epsilon}{3})rd] \le \Pr[H < (1-\frac{\epsilon}{3})(1+\epsilon)rd] \le \exp(-\frac1{18}\epsilon^2 rd)$ holds by the multiplicative Chernoff bound.
Hence, by testing whether $H<(1+\frac{\epsilon}{3})rd$, we can distinguish between $h\le d$ and $h\ge (1+\epsilon) d$
with failure probability at most $\exp(-\frac1{27}\epsilon^2 rd)$.
Setting $r = \Theta(\epsilon^{-2}d^{-1} \log n)$, the failure probability can be bounded by $n^{-c}$.

Since we are only guaranteed that $H\in \Bin_{d,n^{-c}}(h,r)$, the failure probability increases to $2n^{-c}$.
Nevertheless, this yields a data structure that can distinguish in $\tOh(dw)$ time between $\HD(X[x\dd x+\ell),Y[y\dd y+\ell])\le d$ and $\HD(X[x\dd x+\ell),Y[y\dd y+\ell])\ge (1+\epsilon)d$ correctly with high probability.
Since our query algorithm is deterministic, this means that, with high probability, the randomized construction algorithm yields a data structure that produces correct answers for all $\Oh(n^3)$ possible queries.
With binary search over $\ell$, we derive a data structure answering $\LCE_{d,(1+\epsilon)d}(x,y)$ queries
correctly w.h.p.\ and in $\tOh(dw)$ time.
\end{proof}

% ----------------------------------------------------------------------
\section{Lower Bound for Computing $\EDA$ Exactly}\label{sec:LB}

The lower bounds provided in this section are conditioned on the Orthogonal Vectors Conjecture~\cite{Wil05},
which we state below.

\begin{conjecture}[Orthogonal Vectors Conjecture~\cite{Wil05}]\label{conj:ov}
In the $\OV$ problem, the input is a set $V\sub \set{0,1}^d$ of $n$ vectors,
and the goal is to decide if there exist $u,v\in V$ with $\tuple{u,v} =0$.\footnote{Here, $\tuple{u,v}=\sum_{i=1}^d u_i\cdot v_i$ denotes the scalar product of $u$ and $v$.}

The Orthogonal Vectors Conjecture asserts that, for every constant $\epsilon>0$, 
there exists a constant $c\ge 1$ such that $\OV$ cannot be solved in $\Oh(n^{2-\epsilon})$-time
on instances with $d=c\log n$. 
\end{conjecture}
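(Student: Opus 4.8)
The Orthogonal Vectors Conjecture is a hardness hypothesis, not a theorem, so strictly speaking there is nothing to establish unconditionally: adopting it is the whole point of Section~\ref{sec:LB}, and an unconditional proof would be a major breakthrough in fine-grained complexity. The most one can responsibly do is argue that it is a safe assumption by reducing it to a more established hypothesis. The plan is therefore to \emph{derive} the $\OV$ Conjecture from the Strong Exponential Time Hypothesis (SETH) — which posits that for every $\epsilon>0$ there is a clause width $q$ such that $q$-SAT on $N$ variables cannot be solved in $\Oh(2^{(1-\epsilon)N})$ time — via the split-and-list reduction of Williams~\cite{Wil05}.

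\textbf{The reduction.} Given a $q$-CNF formula $\varphi$ on $N$ variables and $m=\Oh(N^q)=\poly(N)$ clauses, I would partition the variables into two halves of size $N/2$. For each of the $2^{N/2}$ assignments to the first half I would build a length-$m$ Boolean vector whose $j$-th coordinate is $1$ iff that partial assignment leaves clause $j$ unsatisfied; let $A$ be the resulting set, and let $B$ be the analogous set for the second half. A full assignment satisfies $\varphi$ exactly when the corresponding pair $(a,b)\in A\times B$ has $\langle a,b\rangle=0$, since each clause must be satisfied by at least one half. This produces a bipartite $\OV$ instance on $n=2^{N/2+1}$ vectors of dimension $m=\poly(\log n)$; padding each vector with a constant-size flag gadget (distinguishing $A$ from $B$) turns it into a standard $\OV$ instance without changing the set of valid orthogonal pairs.

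\textbf{Reducing the dimension.} To meet the conjecture's requirement of $d=c\log n$ rather than $d=\poly(\log n)$, I would apply a dimension-reduction step: hash the $\poly(\log n)$ coordinates into $\Theta(\log n)$ buckets, OR-ing the coordinates within each bucket; this preserves every genuine orthogonal pair and creates a spurious one with controllable probability, which is then killed by $\Oh(\log n)$ independent repetitions and a union bound (or by a deterministic coordinate-reduction à la Chan–Williams). Composing the whole chain, an $\Oh(n^{2-\delta})$-time $\OV$ algorithm with $d=c\log n$ would, after accounting for the $2^{o(N)}$ overhead of the reduction, yield a $2^{(1-\delta/2+o(1))N}$-time algorithm for $q$-SAT \emph{for every} width $q$, contradicting SETH.

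\textbf{Main obstacle.} The obstacle is foundational rather than technical: this chain does not prove the $\OV$ Conjecture, it only shows that the conjecture is implied by SETH, and SETH is itself unproven (and regarded by some as possibly false). An unconditional proof of either statement is well beyond current techniques; indeed, the fastest known algorithm for $\OV$ with $d=c\log n$ runs in time $n^{2-1/\Oh(\log c)}$, which is $n^{2-o(1)}$ but not $n^{2-\Omega(1)}$, so the conjecture remains consistent with the state of the art. Accordingly, the honest reading of the final statement is that it is \emph{assumed}, and the ``proof'' above is merely the reduction certifying that this assumption is no stronger than the widely believed SETH.
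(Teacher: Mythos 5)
You are right that there is nothing to prove here: the paper states the Orthogonal Vectors Conjecture purely as a hardness assumption, citing~\cite{Wil05}, and uses it only as the hypothesis for the conditional lower bounds of Section~\ref{sec:LB}, so your reading of the statement matches the paper exactly, and your split-and-list sketch is the standard justification that the assumption is no stronger than SETH. One caveat about that sketch: OR-ing the $\poly(\log n)$ clause-coordinates into $\Theta(\log n)$ buckets does not ``preserve every genuine orthogonal pair'' --- the one-sided error goes the other way. Non-orthogonal pairs stay non-orthogonal, while a genuinely orthogonal pair is destroyed whenever two of its disjoint $1$-coordinates land in the same bucket, which happens with overwhelming probability when each vector has up to $\poly(\log n)$ ones and only $\Theta(\log n)$ buckets are available, so repetition plus a union bound does not rescue it. The standard route to dimension $d=c\log n$ under SETH is instead the Impagliazzo--Paturi--Zane sparsification lemma, which reduces a $q$-CNF to $2^{\epsilon N}$ formulas with $\Oh(N)$ clauses each, so that every resulting $\OV$ instance already has dimension $\Oh(\log n)$ without any lossy hashing step.
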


\subsection{Unbounded Distance}

Here, we use the setting of~\cite{BK15} to prove that the following auxiliary similarity measures are hard to compute exactly. This immediately yields analogous hardness for \EDA.

\begin{definition}
For strings $X,Y\in \Sigma^*$ and an integer $a\in \Zp$,
let $D_a(X,Y) :=  \EDA(X,Y)+|Y|-|X|$.
Moreover, let $D^+_a(X,Y) :=  D_a(\$^{|Y|}\cdot X \cdot \$^{|Y|},Y)$, where $\$\notin \Sigma$.
\end{definition}

Intuitively, for an edit-distance alignment between strings $X$ and $Y$,
the measure $D_a$ charges a character $Y[y]$ with cost $0$ if it is matched, cost $\frac1a$ if it is substituted, and cost $2$ if it is inserted; deletions in $X$ are free.
The measure $D^+_a$ is justified by the following fact, where $\alph(Z)\sub \Sigma$ denotes the set of letters occurring in $Z\in \Sigma^*$.
\begin{restatable}{fact}{fctany}\label{fct:any_extension}
Let $X,Y,L,R\in \Sigma^*$.  If $|L|,|R|\ge |Y|$ and $\alph(Y)\cap \alph(LR)=\emptyset$,
then $D^+_a(X,Y)=D_a(LXR,Y)$.
\end{restatable}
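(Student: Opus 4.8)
The plan is to show that both sides equal the same quantity, namely $D_a^+(X,Y)$ with the specific padding $\$^{|Y|}$, by proving that \emph{any} left/right padding with the stated properties yields the same value. Since $D^+_a(X,Y) = D_a(\$^{|Y|}X\$^{|Y|},Y)$ by definition, it suffices to prove that $D_a(LXR,Y) = D_a(L'XR',Y)$ whenever $L,R,L',R'$ all have length $\ge |Y|$ and none of their characters appears in $Y$. The key observation is that $D_a(Z,Y) = \EDA(Z,Y) + |Y| - |Z|$, so in an optimal alignment realizing $\EDA(Z,Y)$, every character of $Z$ that is deleted contributes cost $1$ to $\EDA$ but $-1$ to the term $|Y|-|Z|$ relative to keeping it, making deletions in $Z$ effectively free; matched characters cost $0$; substituted characters cost $\frac1a$; and each inserted character of $Y$ costs $1$ for the insertion plus $1$ more because it lengthens $Y$ relative to $Z$ — wait, more carefully: the term $|Y|-|Z|$ is a constant once $Z$ is fixed, so what I really want is to compare across different $Z$.

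First I would establish the ``free deletion'' principle directly: for any string $P$ over an alphabet disjoint from $\alph(Y)$ and any string $Q$, we have $\EDA(PQ, Y) + |Y| - |PQ| = \EDA(Q,Y) + |Y| - |Q|$, i.e. $\EDA(PQ,Y) = \EDA(Q,Y) + |P|$. The inequality $\le$ holds because we may delete all of $P$ (cost $|P|$) and then optimally align $Q$ with $Y$. For $\ge$, take an optimal alignment of $PQ$ with $Y$; since no character of $P$ occurs in $Y$, every character of $P$ is either deleted (cost $1$) or substituted (cost $\frac1a \le 1$) — but a substitution of $P[i]$ for some $Y[j]$ can be replaced by deleting $P[i]$ and inserting $Y[j]$ only if that doesn't increase cost, which it might since $\frac1a$ could be less than... hmm. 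Instead, the cleaner route: restrict the optimal alignment to the part of $Y$ aligned against $Q$ versus the part aligned against $P$. Let $Y = Y_P Y_Q$ be the split induced by the alignment ($Y_P$ aligned with $P$, $Y_Q$ with $Q$). Then $\EDA(PQ,Y) = \cost(P \leftrightarrow Y_P) + \cost(Q\leftrightarrow Y_Q) \ge \EDA(P,Y_P) + \EDA(Q,Y_Q) \ge |P| - |Y_P|\cdot(1-\tfrac1a) + \ldots$ — this is getting delicate because substitutions are cheap. The right claim is actually $\EDA(P, Y_P) \ge |P| - |Y_P| + \frac{|Y_P|}{a} \cdot 0$... no: aligning $P$ (length $|P|$) with $Y_P$ (disjoint alphabet) forces $|Y_P|$ of the positions to be substitutions (cost $\frac1a$ each) if $|Y_P|\le|P|$, plus $|P|-|Y_P|$ deletions, total $\frac{|Y_P|}{a} + (|P|-|Y_P|)$; combined with $D_a$'s bookkeeping $-|Y_P|$... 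Let me just assert: a short lemma shows $D_a(P', Y_P) \ge D_a(\varepsilon, Y_P) = 2|Y_P|$ is \emph{not} what I want either.

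The cleanest path, and the one I expect the authors take, is: observe $D_a(Z,Y)$ counts, per the intuitive description already given in the paper, cost $0$ for matched $Y[y]$, $\frac1a$ for substituted, $2$ for inserted, and $0$ for deleted characters of $Z$. Given padding of length $\ge|Y|$ on each side with a fresh symbol, there is always enough ``padding material'' to match-or-substitute against, but since $\$\notin\alph(Y)$, matching is impossible and every alignment of $Y$ against padding characters must treat them as substitutions, each costing $\frac1a$, which is dominated by—and here is the subtlety—we must compare to simply \emph{not} using the padding (deleting it, cost $0$ to $D_a$) and inserting that character of $Y$ instead (cost $2$). Since $\frac1a \le 1 < 2$, using padding is weakly cheaper, but the number of $Y$-characters one can offload onto padding is bounded only by $|L|+|R| \ge 2|Y| \ge |Y|$, which is enough to cover all of $Y$ if desired — hence the optimal value is insensitive to the exact lengths and contents of $L,R$ beyond the threshold $|Y|$, and to the fact that both are $\ge|Y|$ rather than exactly $|Y|$. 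So the argument is: $(\le)$ monotonicity — a longer padding only gives more deletion-for-free options, so $D_a(L'XR',Y) \le D_a(LXR,Y)$ when $L',R'$ extend $L,R$; reduce the general case to a common extension of both. $(\ge)$ an optimal alignment for the longer padding uses at most $|Y|$ characters from each side nontrivially (the rest deleted for free), so it can be simulated within padding of length exactly $|Y|$, giving $D_a(LXR,Y) \le D_a(L'XR',Y)$ after renaming. Combining, all are equal, and in particular equal to $D^+_a(X,Y)$.

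\textbf{Main obstacle.} The delicate point is $(\ge)$: arguing that an optimal alignment never ``needs'' more than $|Y|$ characters of padding on either side, and that the padding characters are used only as substitution targets or deletions (never matches, which is immediate from $\$\notin\alph(Y)$). One must check that rerouting any ``excess'' usage — e.g., inserting a $Y$-character rather than substituting it against a far-away padding character, or vice versa — does not increase the $D_a$-cost, which boils down to the inequalities $0 \le \frac1a \le 1$ and the fact that $D_a$ charges $0$ for deletions in $Z$. I would isolate this as a short exchange-argument lemma and then the Fact follows by symmetry between $L$ and $R$ and by sandwiching between the length-exactly-$|Y|$ padding and any longer one.
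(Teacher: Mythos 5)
Your final plan is in essence the paper's argument: a ``monotonicity'' step (extending the padding cannot increase $D_a$, since each extra padding character can be deleted for unit $\EDA$-cost while simultaneously increasing $|Z|$ by one, so $D_a(Z,Y)=\EDA(Z,Y)+|Y|-|Z|$ is unchanged), and a ``compression'' step (any alignment of $LXR$ with $Y$ deletes at least $|L|-|Y|$ characters of $L$ and $|R|-|Y|$ of $R$, since the characters of $L$ and $R$ that are \emph{not} deleted must be aligned, monotonically, with disjoint sets of at most $|Y|$ characters of $Y$; dropping exactly that many deletions yields a no-costlier alignment with $\$^{|Y|}X\$^{|Y|}$). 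The paper packages the monotonicity step as a one-line triangle inequality $\EDA(LXR,Y)\le \EDA(LXR,\$^{|Y|}X\$^{|Y|})+\EDA(\$^{|Y|}X\$^{|Y|},Y)$.

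The one thing worth flagging is that the ``main obstacle'' you identify---an exchange argument weighing $\tfrac1a$ against $1$ or $2$, deciding whether to substitute a padding character against $Y[y]$ or to delete-and-insert---is a red herring and never needed. Neither direction of the proof requires comparing different \emph{kinds} of operations; both are pure counting arguments that exploit only the identity $D_a(Z,Y)=\EDA(Z,Y)+|Y|-|Z|$ and the fact that a character of $Z$ deleted in an optimal alignment contributes $+1$ to $\EDA$ and $-1$ to $|Y|-|Z|$, i.e.\ net zero. Once you see this, the false starts in the first half of your writeup (the attempted lemma $\EDA(PQ,Y)=\EDA(Q,Y)+|P|$, the worries about whether $\tfrac1a$ makes substitutions preferable) dissolve: the claim is about $D_a$, not $\EDA$, and $D_a$ is exactly the cost normalization under which those comparisons become irrelevant. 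So the approach is sound and matches the paper, but tightening the writeup to drop the discarded attempts and the unnecessary exchange lemma would bring it to the paper's level of concision.
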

\begin{proof}
Due to $\alph(Y)\cap \alph(LR)=\emptyset$, we may assume without loss of generality that $L=\$^{|L|}$
and $R=\$^{|R|}$. Any edit-distance alignment between $LXR$ and $Y$ deletes at least $|L|-|Y|$ characters of $L$
and at least $|R|-|Y|$ characters of $R$,
so $D_a(LXR,Y) \ge D_a(\$^{|Y|}\cdot X \cdot \$^{|Y|}, Y)$.
On the other hand, we have $\EDA(LXR,Y) \le \EDA(LXR,\$^{|Y|}\cdot X \cdot \$^{|Y|}) + \EDA(\$^{|Y|}\cdot X \cdot \$^{|Y|}, Y) = |L|-|Y|+|R|-|Y| + \EDA(\$^{|Y|}\cdot X \cdot \$^{|Y|}, Y)$ by the triangle inequality,
so $D_a(LXR) = \EDA(LXR,Y)+|Y|-|LXR| \le \EDA(\$^{|Y|}\cdot X \cdot \$^{|Y|}, Y)+|LR|-2|Y|+|Y|-|LXR|
= \EDA(\$^{|Y|}\cdot X \cdot \$^{|Y|}, Y)+|Y|-(|X|+2|Y|)= D^+_a(X,Y)$.
\end{proof}

Bringmann and Künnemann~\cite{BK15} developed a generic scheme of reducing the Orthogonal Vectors problem
to the problem of computing a given string similarity measure.\footnote{The authors wish to thank Marvin Künnemann for clarifying a few points about this framework.}
This framework requires identifying \emph{types} in the input space (which is $\Sigma^*$ here)
and proving that the similarity measure admits an \emph{alignment gadget} and \emph{coordinate values}.

In our construction, the types are of the form $[0\dd \sigma)^n$ for $n,\sigma\in \Zp$ (we assume $\Sigma\sub \Zz$).

\begin{restatable}{lemma}{lemcoord}
	The similarity measure $D^+_a$ admits coordinate values. That is, there exist strings $\mathbf{0}_X,\mathbf{1}_X$ (of some type $t_X$) and $\mathbf{0}_Y,\mathbf{1}_Y$ (of some type $t_Y$) such that $D^+_a(\mathbf{1}_X,\mathbf{1}_Y) > D^+_a(\mathbf{0}_X,\mathbf{0}_Y)=D^+_a(\mathbf{0}_X,\mathbf{1}_Y)=D^+_a(\mathbf{1}_X,\mathbf{0}_Y)$.
\end{restatable}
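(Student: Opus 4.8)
The plan is to reduce the statement to the behaviour of $D^+_a$ on trivially small instances and then simply exhibit the coordinate gadgets by hand. First I would record what $D^+_a$ looks like when the second argument is a single character. By \cref{fct:any_extension}, $D^+_a(X,Y)=D_a(LXR,Y)$ for any strings $L,R$ of length at least $|Y|$ over an alphabet disjoint from $\alph(Y)$, so we may picture $X$ as flanked by an unbounded reservoir of ``junk'' symbols that can never be matched to a character of $Y$. Since $D_a$ charges $0$ for a matched character of $Y$, $\frac1a$ for a substituted one, and $2$ for an inserted one (and nothing for deletions in $X$), and since $a\ge 1$ makes $\frac1a\le 1<2$, in an optimal alignment every character of $Y$ may be assumed matched or substituted (never inserted), being substituted against a flanking junk symbol whenever it is not matched. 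For $Y=c$ a single character this leaves exactly two options — match $c$ to an occurrence of $c$ inside $X$, if one exists, or substitute $c$ — so $D^+_a(X,c)=0$ if $c\in\alph(X)$ and $D^+_a(X,c)=\frac1a$ otherwise.

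With this formula in hand, I would take $t_Y=[0\dd 2)^{1}$ and $t_X=[0\dd 2)^{2}$, and set $\mathbf{0}_Y=\mathtt{0}$, $\mathbf{1}_Y=\mathtt{1}$, $\mathbf{0}_X=\mathtt{01}$, $\mathbf{1}_X=\mathtt{00}$. Then $\alph(\mathbf{0}_X)=\{\mathtt{0},\mathtt{1}\}$ contains both $\mathbf{0}_Y$ and $\mathbf{1}_Y$, while $\alph(\mathbf{1}_X)=\{\mathtt{0}\}$ contains $\mathbf{0}_Y$ but not $\mathbf{1}_Y$; hence $D^+_a(\mathbf{0}_X,\mathbf{0}_Y)=D^+_a(\mathbf{0}_X,\mathbf{1}_Y)=D^+_a(\mathbf{1}_X,\mathbf{0}_Y)=0$ and $D^+_a(\mathbf{1}_X,\mathbf{1}_Y)=\frac1a>0$, which is precisely the required chain. (Should a later part of the construction prefer coordinate gadgets of length more than one, the same idea works with $|Y|=2$: take $\mathbf{0}_Y=\mathtt{01}$, $\mathbf{1}_Y=\mathtt{10}$, $\mathbf{0}_X=\mathtt{010}$, $\mathbf{1}_X=\mathtt{001}$, so that $\mathbf{0}_X$ contains both $\mathbf{0}_Y$ and $\mathbf{1}_Y$ as subsequences while $\mathbf{1}_X$ contains only $\mathbf{0}_Y$, and the three ``cheap'' values are again $0$.)

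The only genuine subtlety — and the reason I keep the $Y$-gadget as short as possible — is that $D^+_a(X,Y)$ is \emph{not} in general equal to $\frac1a(|Y|-\mathrm{LCS}(X,Y))$: matching a long common subsequence can strand characters of $Y$ in a ``gap'' between two matched positions of $X$, and since junk symbols sit only outside $X$, those stranded characters may be forced into insertions of cost $2$ rather than substitutions of cost $\frac1a$. For $|Y|\le 2$ no such gaps can arise, so the clean single-character (or two-character) formula used above is valid and the verification of the (in)equalities is immediate; this is essentially the whole argument, the remaining work being routine.
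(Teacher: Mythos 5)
Your proposal is correct and picks exactly the same coordinate strings as the paper ($\mathbf{0}_X=\mathtt{01}$, $\mathbf{1}_X=\mathtt{00}$, $\mathbf{0}_Y=\mathtt{0}$, $\mathbf{1}_Y=\mathtt{1}$); the paper's proof simply states these and says the values are easy to check. Your additional observation that $D^+_a(X,c)=0$ if $c\in\alph(X)$ and $\frac1a$ otherwise — obtained by noting that $D_a$ charges $2$ per insertion and $\frac1a$ per substitution so a lone character of $Y$ is always matched against $X$ or substituted against a flanking $\$$ — is a clean and correct justification of what the paper leaves implicit.
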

\begin{proof}
	We set $\mathbf{0}_X =  01,\mathbf{1}_X = 00$ (of type $[0\dd 2)^2$),
	and $\mathbf{0}_Y = 0,\mathbf{1}_Y = 1$ (of type $[0\dd 2)^1$).
	It is easy to check that $\frac1a = D^+_a(\mathbf{1}_X,\mathbf{1}_Y) > D^+_a(\mathbf{0}_X,\mathbf{0}_Y)=D^+_a(\mathbf{0}_X,\mathbf{1}_Y)=D^+_a(\mathbf{1}_X,\mathbf{0}_Y)=0$.
	\end{proof}

\newcommand{\GA}{\mathsf{GA}}
An alignment gadget is a pair of functions $\GA_X$ and $\GA_Y$ parameterized by integers $n\ge m$ and types $t_X,t_Y$. The function $\GA_X$, given $n$ strings $X_1,\ldots,X_n$ of type $t_X$, produces a string $X$ of some fixed type (that only depends on $n,m,t_X,t_Y$), whereas the function $\GA_Y$, given $m$ strings $Y_1,\ldots,Y_m$ of type $t_Y$, produces a string $Y$ of some (other) fixed type. Both functions must admit $\Oh((n+m)(\ell_X+\ell_Y))$-time algorithms, where $\ell_X$ is the common length of strings in $t_X$ and $\ell_Y$ is the common length of strings in $t_Y$.
The main requirement relates the value $D^+_a(X,Y)$ to the values $D^+_a(X_i,Y_j)$ for $i\in [1\dd n]$ and $j\in [1\dd m]$.  Specifically, there must be a fixed value $C$ (depending only on $n,m,t_X,t_Y$) that satisfies the following conditions:
\begin{enumerate}
	\item Each $\delta \in [0\dd n-m]$ satisfies $D^+_a(X,Y)\le C + \sum_{j=1}^m D^+_a(X_{j+\delta},Y_j)$.
	\item There is a set $A=\{(i_1,j_1),\ldots,(i_k,j_k)\}\sub [1\dd n]\times [1\dd m]$ such that $i_1< \cdots < i_k$,
	$j_1< \cdots < j_k$, and $D^+_a(X,Y) \ge C + \sum_{(i,j)\in A} D^+_a(X_i,Y_j) + (m-|A|)\max_{i,j}D^+_a(X_i,Y_j)$.
\end{enumerate}

\begin{construction}\label{cons:ag}
	Let $n\ge m$ be positive integers and $t_X = [0\dd \sigma_X)^{\ell_X}$, $t_Y= [0\dd \sigma_Y)^{\ell_Y}$ be input types. Denote $\A = \sigma_Y$, $\B =\sigma_Y+1$, and $\ell=\ell_Y$.
	We set $\GA_X$ and $\GA_Y$ so that 
	\begin{align*}
		\GA_X(X_1,\ldots, X_n) :=  X & :=  \bigodot_{i=1}^n \left(\A^{2\ell}\cdot X_i\cdot \A^{\ell}\cdot \B^{\ell}\right),\text{ and}\\ 
		\GA_Y(Y_1,\ldots, Y_m) :=  Y & :=  \B^{n\ell} \cdot \bigodot_{j=1}^m \left(\A^{\ell}\cdot Y_j \cdot \B^{\ell}\right)\cdot \B^{n\ell},
	\end{align*}
	where $\bigodot$ and $\cdot$ denote concatenation.
\end{construction}
It is easy to check that both functions can be implemented in $\Oh((n+m)(\ell_X+\ell_Y))$ time and the output types are  $[0\dd \max(\sigma_X,\sigma_Y+2))^{n(\ell_X+4\ell_Y)}$ and $Y\in [0\dd \sigma_Y+2)^{(2n+3m)\ell_Y}$, respectively.
In the next two lemmas, we prove that \cref{cons:ag} satisfies the two aforementioned conditions
with $C =  \frac1a(n+m)\ell$.

\begin{restatable}{lemma}{lemsub}
	Each $\delta\in [0\dd n-m]$ satisfies $D^+_a(X,Y)\le \frac1a(n+m)\ell+ \sum_{j=1}^{m} D^+_a(X_{j+\delta},Y_j)$.
\end{restatable}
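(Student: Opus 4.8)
The plan is to exploit that, by definition, $D^+_a(X,Y)=D_a(\$^{|Y|}\cdot X\cdot\$^{|Y|},Y)$ and that $D_a$ is sub-additive under concatenation, i.e., $D_a(P_1\cdots P_t,\,Q_1\cdots Q_t)\le\sum_{i=1}^t D_a(P_i,Q_i)$; this holds because one may concatenate optimal $\EDA$-alignments of the $P_i$ onto the $Q_i$, and the length corrections $|Q_i|-|P_i|$ add up. It therefore suffices to cut $\$^{|Y|}X\$^{|Y|}$ and $Y$ into equally many pieces and bound the $D_a$-cost of each. Writing $X=B_1\cdots B_n$ with $B_i=\A^{2\ell}X_i\A^\ell\B^\ell$ and $Y=\B^{n\ell}\,C_1\cdots C_m\,\B^{n\ell}$ with $C_j=\A^\ell Y_j\B^\ell$, I would use the three aligned pieces
\[
 (\$^{|Y|}B_1\cdots B_\delta,\ \B^{n\ell}),\qquad (B_{1+\delta}\cdots B_{m+\delta},\ C_1\cdots C_m),\qquad (B_{m+\delta+1}\cdots B_n\,\$^{|Y|},\ \B^{n\ell}),
\]
which is well defined precisely because $0\le\delta\le n-m$.

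For the middle piece I would split further into the $m$ pairs $(B_{j+\delta},C_j)$. Since $B_{j+\delta}=\A^\ell\cdot(\A^\ell X_{j+\delta}\A^\ell)\cdot\B^\ell$ and $C_j=\A^\ell\cdot Y_j\cdot\B^\ell$, sub-additivity yields $D_a(B_{j+\delta},C_j)\le D_a(\A^\ell,\A^\ell)+D_a(\A^\ell X_{j+\delta}\A^\ell,Y_j)+D_a(\B^\ell,\B^\ell)=D_a(\A^\ell X_{j+\delta}\A^\ell,Y_j)$, and because $\A\notin\alph(Y_j)$ and $|\A^\ell|=\ell=|Y_j|$, \cref{fct:any_extension} identifies the last quantity with $D^+_a(X_{j+\delta},Y_j)$. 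For the prefix piece I would split $\B^{n\ell}=\B^{(n-\delta)\ell}\cdot(\B^\ell)^\delta$, pairing $\$^{|Y|}$ with $\B^{(n-\delta)\ell}$ and each $B_i$ (for $i\le\delta$) with one copy of $\B^\ell$; then $D_a(B_i,\B^\ell)\le D_a(\A^{2\ell}X_i\A^\ell,\varepsilon)+D_a(\B^\ell,\B^\ell)=0$ (using $D_a(P,\varepsilon)=0$ for the empty string $\varepsilon$), while $D_a(\$^{|Y|},\B^{(n-\delta)\ell})\le\tfrac1a(n-\delta)\ell$ by substituting $(n-\delta)\ell$ of the dollar signs and deleting the rest, which is possible since $|Y|=(2n+3m)\ell\ge n\ell$. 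Symmetrically, cutting the target of the suffix piece as $(\B^\ell)^{n-m-\delta}\cdot\B^{(m+\delta)\ell}$ bounds its cost by $\tfrac1a(m+\delta)\ell$. Adding the three contributions gives $D^+_a(X,Y)\le\tfrac1a(n-\delta)\ell+\sum_{j=1}^m D^+_a(X_{j+\delta},Y_j)+\tfrac1a(m+\delta)\ell=\tfrac1a(n+m)\ell+\sum_{j=1}^m D^+_a(X_{j+\delta},Y_j)$, as required.

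This proof is essentially bookkeeping, so I do not anticipate a real obstacle; the only points needing care are that the three cuts are well defined with matching block counts on both sides (which is exactly where $0\le\delta\le n-m$ is used) and that each padding block $\$^{|Y|}$ is long enough to supply the required substitutions, which follows from $|Y|=(2n+3m)\ell\ge n\ell\ge\max\{(n-\delta)\ell,(m+\delta)\ell\}$. The genuinely delicate part of the alignment-gadget verification is instead the matching lower bound on $D^+_a(X,Y)$ (Condition~2), treated in the next lemma, where one cannot prescribe how $X$ aligns to $Y$.
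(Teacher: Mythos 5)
Your proposal is correct and matches the paper's proof in substance: the alignment you describe piece by piece is exactly the one the paper constructs (match the leading $\A^\ell$-blocks and trailing $\B^\ell$-blocks of each gadget phrase, map $Y_j$ onto $\A^\ell X_{j+\delta}\A^\ell$ via \cref{fct:any_extension}, absorb the remaining $\B^\ell$-blocks into the unused $X_i$-phrases, and pay $\frac1a(n-\delta)\ell + \frac1a(m+\delta)\ell$ to substitute $\B$s of $Y$ against $\$$s of the padding). The only presentational difference is that you package this explicitly through a sub-additivity lemma for $D_a$ under concatenation (which holds because $D_a(X,Y)=\frac1a s + 2i\ge 0$ sums over the alignment pieces), whereas the paper describes the same alignment directly and tallies its cost — both are sound.
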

\begin{proof}
	Let us construct an edit distance alignment between $\$^{|Y|}\cdot X \cdot \$^{|Y|}$ and $Y$ so that the total cost charged to characters of $Y$ does not exceed the claimed upper bound.
	\begin{itemize}
		\item The prefix $\B^{(n-\delta)\ell}$ is matched against the prefix $\$^{(n-\delta)\ell}$ (at cost $\frac1a (n-\delta)\ell$).
		\item The subsequent $\delta$ blocks  $\B^{\ell}$ are matched against the $\B^{\ell}$ blocks within the phrases $\A^{2\ell}\cdot X_{i}\cdot \A^{\ell}\cdot \B^{\ell}$ for $i\in [0\dd \delta)$ (at cost $0$).
		\item Each phrase $\A^{\ell}\cdot Y_j \cdot \B^{\ell}$ with $j\in [1\dd m]$ is matched against the phrase $\A^{2\ell}\cdot X_{j+\delta}\cdot \A^{\ell}\cdot \B^{\ell}$ so that:
		\begin{itemize}
			\item The leading block $\A^{\ell}$ is matched against the leading block $\A^{\ell}$ (at cost $0$).
			\item The string $Y_j$ is matched against $\A^{\ell}X_{j+\delta} \A^{\ell}$ (at cost $D^+_a(X_{j+\delta},Y_j)$ by \cref{fct:any_extension}).
			\item The trailing block $\B^{\ell}$ is matched against the trailing block $\B^{\ell}$ (at cost $0$).
		\end{itemize}
		\item The subsequent $n-m-\delta$ blocks $\B^{\ell}$ are matched against the $\B^{\ell}$ blocks within the phrases $\A^{2\ell}\cdot X_{i}\cdot \A^{\ell}\cdot \B^{\ell}$ for $i\in (m+\delta \dd n]$ (at cost $0$).
		\item The suffix $\B^{(m+\delta)\ell}$ is matched against the suffix $\$^{(m+\delta)\ell}$ (at cost $\frac1a (m+\delta)\ell$).\qedhere
	\end{itemize}
	\end{proof}

\begin{restatable}{lemma}{lemslb}
	There exists a set $A=\{(i_1,j_1),\ldots,(i_k,j_k)\}\sub [1\dd n]\times [1\dd m]$ such that $i_1< \cdots < i_k$,
	$j_1< \cdots < j_k$, and $D^+_a(X,Y) \ge \frac1a(n+m)\ell+ \sum_{(i,j)\in A}^m D^+_a(X_i,Y_j) + (m-|A|)\max_{i,j}D_a(X_i,Y_j)$.
\end{restatable}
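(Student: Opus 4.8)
The plan is to read the chain $A$ off an optimal alignment and to account for its cost character by character. Fix a minimum-cost $\EDA$-alignment $\mathcal{M}$ of $Z:=\$^{|Y|}\cdot X\cdot\$^{|Y|}$ against $Y$. Since any alignment satisfies $(\text{number of deletions})-(\text{number of insertions})=|Z|-|Y|$, substituting into the definition of $\EDA$ gives the convenient identity $D^+_a(X,Y)=D_a(Z,Y)=2\cdot(\text{number of insertions of }\mathcal M)+\tfrac1a\cdot(\text{number of substitutions of }\mathcal M)$; equivalently, $D^+_a(X,Y)=\sum_{y}c(y)$, where $c(y)\in\{0,\tfrac1a,2\}$ charges $Y[y]$ by $0$ if it is matched to an equal character, by $\tfrac1a$ if it is matched to a different character (in particular, to a $\$$), and by $2$ if it is inserted. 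I will repeatedly use two elementary facts: (i) replacing the first argument of $D_a$ by a superstring can only decrease $D_a$, since the extra characters may be deleted for free; and (ii) $D^+_a(X_i,Y_j)=D_a(\Phi_i,Y_j)$ for the $i$-th phrase $\Phi_i:=\A^{2\ell}X_i\A^{\ell}\B^{\ell}$ of $X$, which is \cref{fct:any_extension} applied with left buffer $\A^{2\ell}$ and right buffer $\A^{\ell}\B^{\ell}$ (both of length $\ge\ell=|Y_j|$ and over an alphabet disjoint from $\alph(Y_j)\sub[0\dd\sigma_Y)$).

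First I would isolate the constant $C=\tfrac1a(n+m)\ell$ via a counting argument on the letter $\B$. The string $Y$ has exactly $(2n+m)\ell$ occurrences of $\B$ (namely $n\ell$ in each padding block and $\ell$ in every phrase $\A^{\ell}Y_j\B^{\ell}$, as $Y_j$ avoids $\A$ and $\B$), whereas $X$ — and hence the non-$\$$ part of $Z$ — has only $n\ell$ of them (here I use, as holds in the application, that the $X_i$ avoid $\A$ and $\B$). Writing $D$ for the number of occurrences of $\B$ in $X$ that $\mathcal M$ deletes, it follows that at least $(2n+m)\ell-(n\ell-D)=(n+m)\ell+D$ occurrences of $\B$ in $Y$ are not matched to a $\B$, so the cost charged to the $\B$'s of $Y$ is at least $\tfrac1a(n+m)\ell+\tfrac1a D=C+\tfrac1a D$. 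The surplus $\tfrac1a D$ is the budget I will use to pay for the values of $j$ that fall outside the chain.

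To build the chain, for each $j\in[1\dd m]$ let $F_j$ be the fragment of $Z$ spanned by the $\mathcal M$-images of the characters of the $Y_j$-block (a single point if all of $Y_j$ is inserted); by monotonicity of $\mathcal M$, the fragments $F_1,\ldots,F_m$ appear left to right in $Z$. Call $j$ \emph{clean} if all matched $\mathcal M$-images of the $Y_j$-block lie inside one phrase $\Phi_i$; for such $j$, discarding from $F_j$ its (cost-free) deleted prefix and suffix exhibits the cost that $\mathcal M$ charges to $Y_j$ as a feasible $D_a$-alignment cost of $Y_j$ against a substring of $\Phi_i$, which by (i) and (ii) is at least $D_a(\Phi_i,Y_j)=D^+_a(X_i,Y_j)$. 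I would then pick $A$ greedily: scanning $j=1,\ldots,m$ and remembering the largest phrase index used so far, I add $(i,j)$ whenever $j$ is clean and its phrase $\Phi_i$ lies strictly beyond the last used one. Monotonicity of the $F_j$'s makes $A$ strictly increasing in both coordinates, and each $(i,j)\in A$ contributes at least $D^+_a(X_i,Y_j)$, as required.

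The hard part will be showing that every $j\notin A$ contributes at least $M:=\max_{i,j}D^+_a(X_i,Y_j)$ once the surplus $\tfrac1a D$ is taken into account; I expect this to be the main obstacle. There are three cases. (a) If $F_j$ misses $X$ entirely, then $F_j\sub\$^{|Y|}$, and matching or inserting the $\ell$ characters of $Y_j$ costs at least $\min(\tfrac1a\ell,2\ell)=\tfrac1a\ell$. (b) If $F_j$ meets two or more phrases, then between the two extreme phrases $\mathcal M$ must cross a barrier of the form $\A^{\ell}\B^{\ell}\A^{2\ell}$: a short case analysis on where the endpoints of $F_j$ fall shows that $\mathcal M$ either deletes at least $\ell$ of that barrier's $\B$'s (so $D$, and hence the surplus, grows by $\ge\tfrac\ell a$) or else matches at least $\ell$ characters of $Y_j$ against barrier characters or inserts them (cost $\ge\tfrac1a\ell$). (c) If $F_j\sub\Phi_i$ but some earlier clean $j'<j$ was matched to the very same $\Phi_i$, then the buffer $\B^{\ell}$ separating $Y_{j'}$ from $Y_j$ in $Y$ must be placed inside $\Phi_i$, all of whose $\B$'s sit at its right end; this strands the suffix $\A^{\ell}Y_j$ against the (now exhausted) tail of $\Phi_i$ — so $Y_j$ is inserted at cost $2\ell$ — or spends $\approx\tfrac1a\ell$ on mismatched or deleted $\B$'s near that buffer. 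Turning each of these bounds into a clean ``at least $M$ per skipped $j$, with no surplus counted twice'' statement uses the precise lengths in \cref{cons:ag} together with the mild size condition that $\ell_Y$ is large compared with $\max_{i,j}D^+_a(X_i,Y_j)$ (satisfied in the Orthogonal Vectors reduction, where the $Y_j$'s are long vector gadgets with small pairwise $D^+_a$ values). Summing the per-$j$ contributions together with $C+\tfrac1a D$ then yields the claimed inequality; orchestrating the surplus $\tfrac1a D$ correctly across cases (b) and (c) is where the real work lies, and everything else is bookkeeping with the identity $D^+_a(X,Y)=\sum_y c(y)$ and facts (i)--(ii).
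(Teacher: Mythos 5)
Your high-level plan matches the paper's: fix an optimal alignment, isolate the constant $\frac1a(n+m)\ell$ by counting $\B$'s, build a non-crossing chain $A$ out of $Y_j$-blocks that align cleanly to a single $X_i$, and charge each skipped $j$ at least $\frac1a\ell$ via the ``surplus'' coming from unmatched or mismatched separator characters. The paper's own case analysis is also tripartite (no $X_i$ touched; two $X_i$'s touched; same $X_i$ reused), so structurally you are on the same track.

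The genuine gap is in how you close the argument. You want to prove that each skipped $j$ contributes at least $M:=\max_{i,j}D^+_a(X_i,Y_j)$, and for that you appeal to ``the mild size condition that $\ell_Y$ is large compared with $\max_{i,j}D^+_a(X_i,Y_j)$, satisfied in the Orthogonal Vectors reduction.'' That misreads the situation in two ways. First, the lemma is a property of the alignment gadget and must hold for \emph{all} inputs of the given types, not only for the ones produced by the OV reduction, so you cannot import an application-specific promise here. Second, and more importantly, no such promise is needed: the inequality $D^+_a(X_i,Y_j)\le\frac1a\ell_Y=\frac1a\ell$ holds \emph{unconditionally}, because one can delete all of $\$^{|Y_j|}X_i\$^{|Y_j|}$ except a window of $\$$'s and substitute all of $Y_j$, giving cost exactly $\frac{|Y_j|}{a}$. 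The paper's proof uses exactly this: it establishes the stronger bound $D^+_a(X,Y)\ge\frac1a(n+2m-|A|)\ell+\sum_{(i,j)\in A}D^+_a(X_i,Y_j)$ (each skipped $j$ contributes $\frac1a\ell$) and then replaces $\frac1a\ell$ by the weaker $\max D^+_a$. Once you notice this, the ``orchestration of the surplus'' you flag as the hard part collapses to exactly the three short observations the paper makes, and there is no residual size condition. As written, though, your proposal explicitly defers this step (``I expect this to be the main obstacle,'' ``this is where the real work lies''), so the proof is incomplete at precisely the point where the paper's observation is needed.

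Two smaller discrepancies worth flagging. You let $D$ count only \emph{deleted} $\B$'s of $X$, whereas the paper's quantity $c_\B$ counts all $\B$'s of $X$ not matched to a $\B$ of $Y$ (deleted or substituted); the latter is the right accounting for your case~(b), where barrier $\B$'s can be mismatched against $Y_j$'s characters rather than deleted. And your definition of ``clean'' (all matched images inside one phrase $\Phi_i$) is stronger than the paper's membership rule for $A$ (images may touch separators, just not a different $X_{i'}$); your stricter notion forces a needlessly delicate analysis of alignments that stray into the $\A^{\ell}\B^{\ell}\A^{2\ell}$ buffers, which is exactly the part you say you haven't finished.
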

\begin{proof}
	Let us fix an optimal alignment between $\$^{|Y|}\cdot X \cdot \$^{|Y|}$ and $Y$.
	We construct a set $A\sub  [1\dd n]\times [1\dd m]$ consisting of pairs $(i,j)$ jointly satisfying the following conditions:
	\begin{itemize}
		\item At least one character of $Y_j$ is aligned against a character of $X_i$.
		\item No character of $Y_j$ is aligned against any character of $X_{i'}$ for any $i'\ne i$.
		\item No character of $X_i$ is aligned against any character of $Y_{j'}$ for any $j' < j$.
	\end{itemize}
	It is easy to see that $A$ forms a non-crossing matching, i.e., ordering its elements by the first coordinate is equivalent	to ordering them by the second coordinate.
	
	Let us analyze the contribution of each letter of $Y$ to $D^+_a(X,Y)$ with respect to any fixed optimal alignment
	between $\$^{|Y|}\cdot X \cdot \$^{|Y|}$ and $Y$. Let $c_\B$ denote the number of $\B$s in $X$ that are not matched with any $\B$ in $Y$.
	Since the number of $\B$s in $X$ and $Y$ is $n\ell$ and $(2n+m)\ell$, respectively, the contribution of $\B$s in $Y$ to the total cost is at least $\frac1a((n+m)\ell+c_\B)$. 
	
	If $(i,j)\in A$, then $Y_j$ is aligned against a subsequence of $X$ obtained by deleting $X_{i'}$ for all $i'\ne i$ and (possibly) some further characters (recall that all deletions in $X$ are free).
	In this case, the contribution of $Y_j$ is at least $D^+_a(X_i,Y_j)$ by \cref{fct:any_extension}.
	
	If no character of $Y_j$ is aligned against any character of any $X_i$, then all characters of $Y_j$ are deleted or substituted, meaning that the contribution of $Y_j$ is at least $\frac1a\ell$.
	If characters of $Y_j$ are aligned against characters of $X_{i}$ and $X_{i'}$ for two distinct $i<i'$,
	then all $\B$s between $X_i$ and $X_{i'}$ contribute to $c_\B$; this contribution is at least $\ell$, and it cannot be charged to any $Y_{j'}$
	with $j'\ne j$.
	Finally, if characters of both $Y_{j'}$ and $Y_{j}$ (with $j' < j$) are aligned to characters of the same $X_i$,
	then the block $\A^\ell$ preceding $Y_j$ contributes at least $\frac1a\ell$.
	Overall, we conclude that the total cost is at least $\frac1a\ell$ for each $j\in [1\dd m]$ with no $(i,j)\in A$ (this contribution is either directly charged to $\A^\ell Y_j$ or via $c_\B$).
	Hence, the total cost $D^+_a(X,Y)$ is at least $\frac1a(n+2m-|A|)\ell + \sum_{(i,j)\in A} D^+_a(X_i,Y_j)$.
	Due to $D^+_a(X_i,Y_j)\le\frac1a|Y_j| = \frac1a\ell$, this yields the claimed lower bound.
	\end{proof}

Consequently, the framework of~\cite{BK15} yields the following result. (An inspection of~\cite{BK15} reveals that the strings are produced from the coordinate values composed by recursive application of the alignment gadget. The tree of this recursion is of height 3, so the output strings are over an alphabet of size 8.)

\begin{corollary}\label{cor:lb}
	There is an $\Oh(nd)$-time algorithm that, given two sets $U,V\sub \{0,1\}^d$ of $n$ vectors each
	and an integer $a\in \Zp$, constructs strings $X,Y\in [0\dd 8)^*$ and a threshold $k$ such that $D^+_a(X,Y)\le k$ if and only if $\langle u,v \rangle =0$ for some $u\in U$ and $v\in V$. Moreover, $|X|$, $|Y|$, and $k$ depend only on $n$, $d$, and $a$.
\end{corollary}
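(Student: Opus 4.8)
The plan is to invoke the generic reduction of Bringmann and K\"unnemann~\cite{BK15} essentially as a black box. Their scheme turns any instance of the Orthogonal Vectors problem into a pair of strings and a threshold for a target string similarity measure, provided the measure is equipped with a system of \emph{types}, a choice of \emph{coordinate values}, and an \emph{alignment gadget}. We have already supplied all three for $D^+_a$: the types $[0\dd\sigma)^n$; the coordinate values $\mathbf{0}_X=01$, $\mathbf{1}_X=00$, $\mathbf{0}_Y=0$, $\mathbf{1}_Y=1$, which realize a gap of $\tfrac1a$; and the alignment gadget $(\GA_X,\GA_Y)$ of~\cref{cons:ag}, whose two defining inequalities hold with additive constant $C=\tfrac1a(n+m)\ell$ by the two lemmas immediately preceding this corollary. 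Feeding these ingredients into the reduction of~\cite{BK15} directly yields, from sets $U,V\subseteq\{0,1\}^d$ of $n$ vectors each, strings $X,Y$ and a threshold $k$ such that $D^+_a(X,Y)\le k$ if and only if some $u\in U$ and $v\in V$ are orthogonal.

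It then remains to read off the three quantitative claims by inspecting the reduction, which composes the coordinate values through a recursion of height $3$ formed by nested applications of $\GA_X$ and $\GA_Y$. For the alphabet bound, the coordinate values live over $\{0,1\}$ and each application of~\cref{cons:ag} enlarges the $Y$-side alphabet by exactly the two fresh symbols $\A,\B$, so after three levels the output lies in $[0\dd 8)^*$. For the ``depends only on $n,d,a$'' claim, the shape of the nested gadgets---how many strings are combined at each level, all block lengths, and the threshold $k$, which the framework sets as a fixed affine function of the accumulated constants $C$ and of the gap $\tfrac1a$---is dictated solely by $n$, $d$, and $a$; the actual vectors enter only through the choice among the equal-length coordinate values at the leaves, which leaves $|X|$, $|Y|$, and $k$ untouched. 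For the running time, each of the constantly many calls to $\GA_X,\GA_Y$ runs in time linear in the total input length and inflates that length by a constant factor (\cref{cons:ag}), so the whole construction takes $\Oh(nd)$ time and produces strings of length $\Oh(nd)$.

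The step I expect to carry the technical weight is confirming that~\cite{BK15}'s black box genuinely applies here---that the $\tfrac1a$ gap between YES- and NO-instances survives the three levels of composition rather than being absorbed into the additive constants. This, however, is exactly what~\cite{BK15}'s analysis proves using only the two alignment-gadget inequalities: the upper bound exhibits a cheap alignment whenever an orthogonal pair exists, while the lower bound---specifically its penalty term $(m-|A|)\cdot\max_{i,j}D^+_a(X_i,Y_j)$---forces every alignment to pay the gap at least once when no orthogonal pair exists. Hence no new argument is needed; I would only double-check the mild regularity conditions the framework assumes, namely $D^+_a\ge 0$ and $D^+_a(X,Y)\le\tfrac1a|Y|$ (both immediate from the definition), and verify that our gadget constants are exactly of the promised form $C=\tfrac1a(n+m)\ell$, which the two preceding lemmas already establish. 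The only genuinely new point relative to~\cite{BK15} is that the machinery runs on the asymmetric measure $D^+_a$ rather than on $\EDA$; this is sound precisely because~\cref{fct:any_extension} makes $D^+_a$ invariant under arbitrary $\$$-padding of its first argument, which is the padding-invariance the gadget compositions rely on and also the reason the alignment gadget needs no extra separator symbols, so the alphabet grows by only two per level.
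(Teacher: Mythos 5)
Your proposal is correct and takes exactly the route the paper does: feed the already-established coordinate values and alignment gadget into the Bringmann--Künnemann framework~\cite{BK15}, then read off the $O(nd)$ size and construction time, the fact that $|X|,|Y|,k$ depend only on $n,d,a$ (since the recursion shape and the equal-length coordinate values are input-independent), and the alphabet bound $[0\dd 8)$ from the height-$3$ recursion tree adding two fresh symbols per level. The paper's own ``proof'' is just the one-sentence parenthetical remark preceding the corollary, so your write-up is a more explicit rendering of the same argument.
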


\subsection{Bounded Distance}

Our next goal is to derive a counterpart of \cref{cor:lb} with $k\ll |X|+|Y|$
and $\ED_a$ instead of $D_a^+$. The following construction is at the heart of our reduction.

\begin{construction}\label{cons:bnd}
	Let $n,\ell_X,\ell_Y$ be positive integers and let $\ell = \ell_X+\ell_Y$. For a sequence of $n$ pairs $(X_i,Y_i)\in \Sigma^{\ell_X}\times \Sigma^{\ell_Y}$, we define strings 
	\[
		X = \A^{\ell_Y}\cdot  \bigodot_{j=1}^n (\B^{\ell}\cdot X_j\cdot \C^{\ell}\cdot \A^{\ell_Y})\qquad\text{and}\qquad
		Y = Y_1\cdot  \bigodot_{j=2}^{n} (\B^{\ell}\cdot \D^{\ell_X}\cdot \C^{\ell} \cdot Y_j).
	\]
\end{construction}

\begin{restatable}{lemma}{lemulb}\label{lem:ulb}
If $a\ge n$, then the strings $X,Y$ of \cref{cons:bnd} satisfy
\[ D_a(X,Y) = \tfrac{(n-1)\ell}{a} + \min_{i=1}^n D^+_a(X_i,Y_i).\]
\end{restatable}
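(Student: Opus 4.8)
The plan is to prove $D_a(X,Y)=\tfrac{(n-1)\ell}{a}+\min_iD^+_a(X_i,Y_i)$ by two inequalities, of which only ``$\ge$'' will use $a\ge n$. I will repeatedly use two elementary facts, writing $\ell=\ell_X+\ell_Y$: an alignment of strings $A,B$ with $\sigma$ substitutions and $\iota$ unmatched characters of $B$ has $D_a$-cost exactly $\tfrac\sigma a+2\iota$ (the deletions from $A$, of which there are $|A|-|B|+\iota$, are free), and $D_a$ is subadditive under concatenation, $D_a(A_1A_2,B_1B_2)\le D_a(A_1,B_1)+D_a(A_2,B_2)$. A direct count gives $|X|=\ell_Y+3n\ell$ and $|Y|=\ell_Y+3(n-1)\ell$.

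\textbf{Upper bound.}
For each $i\in[1\dd n]$ I will exhibit an alignment of cost $\tfrac{(n-1)\ell}{a}+D^+_a(X_i,Y_i)$; minimizing over $i$ gives ``$\le$''. Cut $X=X_{<}\cdot(\A^{\ell_Y}\B^{\ell}X_i\C^{\ell})\cdot X_{>}$, where $X_{<}$ is the prefix ending just before the $\A^{\ell_Y}$-block preceding $X_i$, and $Y=Y_{<}\cdot Y_i\cdot Y_{>}$, where $Y_{<}$ is the prefix ending just before $Y_i$. Align $\A^{\ell_Y}\B^{\ell}X_i\C^{\ell}$ against $Y_i$: by \cref{fct:any_extension} with $L=\A^{\ell_Y}\B^{\ell}$ and $R=\C^{\ell}$ (both of length $\ge\ell_Y=|Y_i|$ and alphabet-disjoint from $Y_i$), this costs exactly $D^+_a(X_i,Y_i)$. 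Align $X_{<}$ against $Y_{<}$ block by block in a one-unit-shifted pattern: the leading $\A^{\ell_Y}$ of $X$ against $Y_1$, every $\B^{\ell}$ against a $\B^{\ell}$ and every $\C^{\ell}$ against a $\C^{\ell}$ exactly, every block $X_j$ against a $\D^{\ell_X}$ block ($\ell_X$ substitutions), and every interior $\A^{\ell_Y}$ block against the next block $Y_{j+1}$ ($\ell_Y$ substitutions); the substitutions sum to $(i-1)\ell$, i.e.\ $D_a$-cost $\tfrac{(i-1)\ell}{a}$. Symmetrically, $X_{>}$ against $Y_{>}$ costs $\tfrac{(n-i)\ell}{a}$, with the two surplus $\A^{\ell_Y}$-blocks of $X$ deleted for free. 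Subadditivity over the three pieces completes the bound.

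\textbf{Lower bound.}
I will fix an optimal $\EDA$-alignment $\mathcal A$ of $X$ and $Y$. By ``$\le$'' and $D^+_a(X_i,Y_i)\le\tfrac{\ell_Y}{a}$, its $D_a$-cost is $c\le\tfrac{n\ell}{a}\le\ell$, so $\mathcal A$ inserts fewer than $\ell$ characters of $Y$. Since $\D\notin\alph(X)$, each of the $(n-1)\ell_X$ characters of $Y$'s $\D$-runs is substituted or inserted, already contributing $\ge\tfrac{(n-1)\ell_X}{a}$; it remains to bound the contribution of $Y$'s $\B$-, $\C$-, and $\Sigma$-characters by $\tfrac{(n-1)\ell_Y}{a}+\min_iD^+_a(X_i,Y_i)$. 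The plan is: first, normalize $\mathcal A$ without increasing its cost so that every $\B$- and $\C$-character of $Y$ is matched to an equal character of $X$ and no $\Sigma$-character of $Y$ is inserted --- this uses that each delimiter run of $Y$ has length $\ell$, that $\mathcal A$ inserts fewer than $\ell$ characters, and that $X$ carries one surplus $\B$-run and one surplus $\C$-run, so any offending character can be rerouted by local surgery. The normalized alignment then has a rigid skeleton: the $n-1$ $\B$-runs and $n-1$ $\C$-runs of $Y$ are matched into $X$'s $\B$- and $\C$-runs, which together with monotonicity forces exactly one $X$-unit $\A^{\ell_Y}\B^{\ell}X_{i^*}\C^{\ell}\A^{\ell_Y}$ to be ``skipped'': the single block $Y_{i^*}$ that meets an $X$-region is aligned against (a fragment of) that unit, contributing $\ge D^+_a(X_{i^*},Y_{i^*})$ by \cref{fct:any_extension}, whereas each of the remaining $n-1$ blocks $Y_j$ is aligned only against $\A$-runs and contributes $\ge\tfrac{\ell_Y}{a}$. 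Summing over $j$ yields the claim. (Equivalently, one can phrase this as induction on $n$, with base case $n=1$ being exactly \cref{fct:any_extension} and the inductive step cutting $\mathcal A$ at the boundary isolating the first or the last $(X_j,Y_j)$-unit.)

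\textbf{Main obstacle.}
The technical heart is the normalization step. Because one substitution costs only $\tfrac1a$ and $a$ may be arbitrarily large, the budget $c\le\ell$ does \emph{not} imply that the delimiter runs of $Y$ are ``mostly clean'', so no counting shortcut is available; one must genuinely produce cost-neutral local modifications of $\mathcal A$ that match up all delimiter characters and eliminate insertions, and then rule out ``sliding'' the skeleton so that several blocks $Y_j$ meet $X$-regions --- it is here that the hypothesis $a\ge n$ is used. Once the skeleton is pinned down, checking that the surviving alignments all have cost exactly $\tfrac{(n-1)\ell}{a}+\min_iD^+_a(X_i,Y_i)$ is careful but routine bookkeeping with $D_a$.
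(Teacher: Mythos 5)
Your upper bound is correct and is essentially the paper's: the phrases before and after index $i$ are aligned by substitutions only (cost $\tfrac{\ell}{a}$ each), the middle unit is handled by \cref{fct:any_extension}, and subadditivity closes the argument.

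For the lower bound, however, you have left the central step as an explicit gap. You correctly identify that $a\ge n$ gives a budget $c\le \tfrac{n\ell}{a}\le\ell$, but then propose to normalize an optimal alignment so that every delimiter character of $Y$ is matched and no $\Sigma$-character of $Y$ is inserted, via ``cost-neutral local surgery''. You flag this normalization as the ``technical heart'' without carrying it out, and it is far from clear that it can be carried out: rerouting a mis-aligned $\B$-run of $Y$ onto a $\B$-run of $X$ is not a local move --- by monotonicity it forces shifts in the alignment of everything between, and it is not obvious this is cost-neutral or even well-defined when several runs are mis-aligned. Moreover, even granting the normalization, the ensuing ``rigid skeleton'' argument needs more work than you indicate: with $n$ $\B$-runs and $n$ $\C$-runs in $X$ and $n-1$ of each in $Y$, an order-preserving injection leaves one $\B$-run and one $\C$-run of $X$ unused, but nothing forces them to belong to the \emph{same} unit, so the claim that exactly one block $Y_{i^*}$ meets an $X$-region does not come for free. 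Your parenthetical induction sketch (cutting at the boundary of the first or last unit) also does not reduce to a smaller instance of the same construction, because the pieces do not have the required block pattern.

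The paper avoids all of this. Its lower bound is a direct case split that makes your remark ``no counting shortcut is available'' incorrect. Case one: if some character of $X_i$ is aligned to a character of $Y_j$ with $i\ne j$, then comparing the absolute positions of $X_i$ and $Y_j$ shows at least $(3|i-j|-2)\ell\ge\ell$ characters of $Y$ must be left unmatched, so the cost is already $\ge\ell\ge\tfrac{n\ell}{a}\ge\tfrac{(n-1)\ell}{a}+\min_i D^+_a(X_i,Y_i)$ --- this is exactly where $a\ge n$ is used, and it dispenses with any notion of a rigid skeleton. Case two: with no such cross-alignment, a per-index charging argument suffices. The $\D$-runs of $Y$ contribute $\tfrac{(n-1)\ell_X}{a}$; the smallest $i$ with $Y_i$ touching $X_i$ contributes $D^+_a(X_i,Y_i)$ via \cref{fct:any_extension}; every other index $j$ contributes at least $\tfrac{\ell_Y}{a}$, either directly (if $Y_j$ never touches $X_j$, so $\ell_Y$ substitutions or insertions), or via an LCS bound showing $\ge\ell$ of the $\B$- and $\C$-characters between consecutive touching indices go unmatched. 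Summing gives the claim. So while your normalization route could conceivably be completed, you would need to prove a nontrivial structural lemma that the paper simply sidesteps; as written, the lower bound half of your proof is not complete.
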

\begin{proof}
	First, we prove an upper bound on $D_a(X,Y)$.
	Let us fix $i\in [1\dd n]$. We define an edit distance alignment between \[X = \bigodot_{j=1}^{i-1} (\A^{\ell_Y}\cdot \B^{\ell}\cdot X_j\cdot \C^{\ell}) \cdot \A^{\ell_Y}\cdot \B^{\ell}\cdot X_i \cdot \C^{\ell}\cdot \A^{\ell_Y} \cdot \bigodot_{j=i+1}^{n} (\B^{\ell}\cdot X_j \cdot \C^{\ell} \cdot \A^{\ell_Y})\] and \[Y= \bigodot_{j=1}^{i-1} (Y_j\cdot \B^{\ell}\cdot \D^{\ell_X}\cdot \C^{\ell}) \cdot Y_i \cdot \bigodot_{j=i+1}^{n} (\B^{\ell}\cdot \D^{\ell_X}\cdot \C^{\ell} \cdot Y_j)\] so that the total cost charged to characters of $Y$ does not exceed $\tfrac{(n-1)\ell}{a} + D^+_a(X_i,Y_i)$.
	\begin{itemize}
		\item For $j\in [1\dd i)$, the phrase $Y_j\cdot  \B^{\ell}\cdot \D^{\ell_X}\cdot \C^{\ell}$  is aligned with the phrase $\A^{\ell_Y}\cdot \B^{\ell}\cdot X_j\cdot \C^{\ell}$ using substitutions only (at cost $\frac1a(\ell_X+\ell_Y)$). 
		\item The block $Y_i$ is aligned against $\A^{\ell_Y} \cdot B^\ell \cdot X_i\cdot \C^{\ell}\cdot \A^{\ell_Y}$ (at cost $D^+_a(X_i,Y_i)$ by \cref{fct:any_extension}).
		\item For $j\in (i\dd n]$, the phrase $(\B^{\ell}\cdot \D^{\ell_X}\cdot \C^{\ell} \cdot Y_j)$ is aligned with the phrase $\B^{\ell}\cdot X_j \cdot \C^{\ell} \cdot \A^{\ell_Y}$ using substitutions only (at cost $\frac1a(\ell_X+\ell_Y)$). 
	\end{itemize}
	
	It remains to prove the lower bound on $D_a(X,Y)$.
	Let us fix an arbitrary alignment of $X$ and $Y$; we shall prove that its cost is at least as large as the claimed lower bound on $D_a(X,Y)$.  
	First, suppose that, for some $i\ne j$, a character of $X_i$ is aligned against a character of $Y_j$.
	Note that $X_i = X[(3i-2)\ell + \ell_Y \dd (3i-1)\ell)$ and $Y_j = Y[(3j-3)\ell\allowbreak \dd (3j-3)\ell+\ell_Y)$.
	If $i < j$, then aligning a character of $X_i$ against a character of $Y_j$ requires deleting at least 
	$(3j-3)\ell-(3i-1)\ell = (3j-3i-2)\ell \ge \ell$
	characters in the prefix $Y[0\dd (3j-3)\ell)$.
	Similarly, if $i > j$, then aligning a character of $X_i$ against a character of $Y_j$ requires deleting at least 
	$(|Y|-((3j-3)\ell+\ell_Y))-((|X|-(3i-2)\ell + \ell_Y)) = (3(n-j)\ell - (3n-3i+2)\ell) = (3i-3j-2)\ell \ge \ell$
	characters in the suffix of $Y[(3j-3)\ell+\ell_Y\dd |Y|)$.
	In either case, the alignment cost is at least $\ell \ge \frac{n\ell}{a}\ge  \frac{(n-1)\ell}{a} + \frac{1}{a}\ell_Y \ge \frac{(n-1)\ell}{a} + \min_{i=1}^n D^+_a(X_i,Y_i)$. 
	
	Thus, we may assume that no character of $Y_j$ is aligned against a character of $X_i$ for $i\ne j$.
	In this case, we bound from below the total cost charged to characters of $Y$.
	Note that the symbols $\D$ in $Y$ contribute at least $\frac{n-1}{a}\ell_X$ (because there are no $\D$s in $X$). If, for some $i\in [1\dd n]$, no character of $Y_i$ is aligned against a character of $X_i$, then $Y_i$ is charged at least $\frac1a\ell_Y$.
	Otherwise, $Y_i$ is charged at least $D^+_a(X_i,Y_i)$ by \cref{fct:any_extension}.
	However, if $i<j$ are subsequent indices such that a character of $Y_i$ is aligned against a character of $X_i$ and a character of $Y_j$ is aligned against a character $X_j$ then, among the $2\ell(j-i)$ characters $\B$ and $\C$ between $Y_i$ and $Y_j$, at most $\mathsf{LCS}((\B^\ell \C^\ell)^{j-i},(\C^\ell \B^\ell)^{j-i})=\ell(2(j-i)-1)$
	are matched, incurring a cost of at least $\frac{1}{a}\ell>\frac{1}{a} \ell_Y$ for the mismatched characters.
	Overall, $\D$s contribute at least $\frac{n-1}{a}\ell_X$, the smallest $i$ such that a character of $X_i$ is aligned against a character of $Y_i$ (if there is any) contributes at least $D^+_a(X_i,Y_i)$, each of the remaining $i\in [1\dd n]$ contributes at least $\frac{1}{a}\ell_Y$ (either directly or via $\B$s and $\C$s), for a total of $\frac{(n-1)\ell}{a} + \min_{i=1}^n D^+_a(X_i,Y_i)$.
	\end{proof}

These properties let us use \cref{cons:bnd} to generalize \cref{cor:lb} and finally prove \cref{thm:lb}.
\begin{restatable}{proposition}{prplb}\label{prp:lb}
	There is an $\Oh(nmd)$-time algorithm that, given two sets $U,V\sub \{0,1\}^d$ of $n$ vectors each
	and integers $a\in \Zp$, $m\in [1\dd n]$, constructs strings $X,Y\in [0\dd 12)^*$ and a threshold $K=\Oh(\max(\frac{m}{a}, \frac{1}{m})nd)$ such that $\EDA(X,Y)\le K$ if and only if $\langle u,v \rangle =0$ for some $u\in U$ and $v\in V$.
\end{restatable}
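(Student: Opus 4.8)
The plan is to combine \cref{cor:lb} with \cref{cons:bnd} in an OR-composition fashion. Given an $\OV$ instance with sets $U,V\subseteq\{0,1\}^d$ of $n$ vectors each, I first partition $U$ into $m$ groups $U_1,\dots,U_m$ of $\lceil n/m\rceil$ vectors each (padding with all-ones vectors, which are never orthogonal to anything). For each group index $i\in[1\dd m]$, I invoke the algorithm of \cref{cor:lb} on the pair $(U_i,V)$ with the same parameter $a$; this produces strings $X_i,Y_i\in[0\dd 8)^*$ and a threshold $k$ (the same $k$ for every $i$, since \cref{cor:lb} guarantees that $|X_i|,|Y_i|,k$ depend only on $\lceil n/m\rceil$, $d$, and $a$) with $D^+_a(X_i,Y_i)\le k$ iff some $u\in U_i$ and $v\in V$ are orthogonal. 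Thus $\min_{i=1}^m D^+_a(X_i,Y_i)\le k$ iff the original $\OV$ instance is a YES-instance. I then feed the $m$ pairs $(X_i,Y_i)$ into \cref{cons:bnd} (using the common lengths $\ell_X:=|X_i|$ and $\ell_Y:=|Y_i|$, and with $n$ there set to $m$), obtaining strings $X,Y$ over an alphabet of size $8+4=12$ (the four fresh symbols $\A,\B,\C,\D$ are disjoint from the size-8 alphabet used by \cref{cor:lb}).

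Next I would apply \cref{lem:ulb}. Its hypothesis requires the cost parameter to be at least the number of composed pairs, i.e. $a\ge m$; I will simply assume $m\le a$ when choosing $m$ (this is the only place the assumption is needed, and it is compatible with the target lower bound). Under this hypothesis, \cref{lem:ulb} gives
\[
  D_a(X,Y)=\frac{(m-1)\ell}{a}+\min_{i=1}^m D^+_a(X_i,Y_i),
\]
where $\ell=\ell_X+\ell_Y$. Since $D_a(X,Y)=\EDA(X,Y)+|Y|-|X|$ by definition and $|X|,|Y|$ are fixed by the construction, setting the threshold
\[
  K:=\frac{(m-1)\ell}{a}+k-\bigl(|Y|-|X|\bigr)
\]
makes $\EDA(X,Y)\le K$ equivalent to $\min_{i=1}^m D^+_a(X_i,Y_i)\le k$, which in turn is equivalent to the $\OV$ instance being a YES-instance. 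It remains to bound $K$ and the running time. By \cref{cor:lb}, $\ell_X,\ell_Y$ and $k$ are $\poly(\lceil n/m\rceil, d)$; inspecting the alignment gadget of \cref{cons:ag} (applied recursively to depth $3$), one gets $\ell=\Theta((n/m)d)$ and $k=\Theta((n/m)d)$ up to the constant hidden in $d=c\log n$, so that $\frac{(m-1)\ell}{a}=\Theta(\frac{m}{a}nd)$ and $k=\Theta(\frac{1}{m}nd)$, hence $K=\Oh(\max(\frac{m}{a},\frac1m)nd)$ as claimed (after verifying $|Y|-|X|$ does not spoil this—indeed $|Y|-|X|$ is itself $\Theta(\frac1m nd)$ in \cref{cons:bnd}, of the same order, and one can absorb it into the $\Oh(\cdot)$ by a small constant adjustment, or observe it is nonnegative and subtracting it only helps). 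The running time is $m$ invocations of \cref{cor:lb}, each costing $\Oh(\lceil n/m\rceil\cdot d)$, plus the $\Oh((|X|+|Y|))=\Oh(\frac{n}{m}\cdot md)=\Oh(nd)$ cost of \cref{cons:bnd}; since $m\le n$ this is $\Oh(nd+md)=\Oh(nmd)$, matching the statement (in fact it is $\Oh(nd)$, but $\Oh(nmd)$ is a safe bound).

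The main obstacle I anticipate is bookkeeping the lengths and thresholds carefully enough to justify the clean bound $K=\Oh(\max(\frac{m}{a},\frac1m)nd)$: \cref{cor:lb} only promises that $|X_i|,|Y_i|,k$ depend on $(\lceil n/m\rceil,d,a)$ but does not state their magnitude, so I need to trace through \cref{cons:ag} (three levels of recursion: one for the OR over $U_i$, one matching the $\OV$ composition in~\cite{BK15}, giving alphabet growth $2\to8$) to confirm that all of $\ell_X,\ell_Y,k$ are $\Theta(\frac{n}{m}d)$ and, crucially, that $k$ can be taken \emph{exactly} uniform across the $m$ subinstances (which holds because we pass the identical second set $V$ and identically-sized first sets). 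A secondary point is ensuring the fresh symbols $\A,\B,\C,\D$ of \cref{cons:bnd} genuinely lie outside $\alph(X_iY_i)\subseteq[0\dd 8)$, which is immediate by relabeling them as $8,9,10,11$, yielding the final alphabet $[0\dd 12)$. Everything else—padding $U$, the YES/NO correctness equivalence, and the arithmetic for $K$—is routine once \cref{lem:ulb} is in hand. Finally, \cref{thm:lb} follows by choosing $m=\Theta(\min(a,\sqrt{\tfrac{n}{k}\,a},\dots))$ appropriately so that $K\le k_n$ and $n+k_n\min(n,a_nk_n)$ matches $(nmd)^{?}$, with the padding-up-$n$ remark handling the case $k=\Oh(n/a)$; I would defer those parameter choices to the proof of \cref{thm:lb} itself.
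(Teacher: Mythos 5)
The key structural step is off. You split only $U$ into $m$ groups and feed the pairs $(U_i,V)$ into \cref{cor:lb}, but \cref{cor:lb} (and the alignment gadget underlying it) produces strings whose length and threshold scale with the \emph{larger} of the two input sets. Since $V$ still has all $n$ vectors, you get $\ell_X,\ell_Y,k=\Theta(nd)$, not the $\Theta\bigl(\tfrac{n}{m}d\bigr)$ you assert. (If you instead pad $U_i$ to size $n$ so that \cref{cor:lb} literally applies, the same $\Theta(nd)$ bound appears.) This is exactly the ``bookkeeping'' point you flagged as the main risk, and it fails: with $k=\Theta(nd)$ your final threshold is $K=\Theta\bigl(\tfrac{m}{a}nd+nd\bigr)$, and the additive $nd$ term is \emph{not} bounded by $\max\bigl(\tfrac{m}{a},\tfrac1m\bigr)nd$ in the relevant regime $m^2<a$ (where the maximum equals $\tfrac{1}{m}$). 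So the claimed bound on $K$ does not follow.

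The paper's proof avoids this by covering \emph{both} $U$ and $V$ with $m$ subsets of size $\lceil n/m\rceil$ each and iterating over all $m^2$ pairs $(U_i,V_j)$. Each call to \cref{cor:lb} then involves two sets of size $\lceil n/m\rceil$, so $\ell_X,\ell_Y,k_{i,j}=\Theta\bigl(\tfrac{n}{m}d\bigr)$. Feeding these $m^2$ pairs into \cref{cons:bnd} gives total length $\Theta(m^2\cdot\tfrac{n}{m}d)=\Theta(mnd)$ (matching the running time) and threshold $K=\tfrac{m^2-1}{a}\ell+k_{i,j}+\Oh(\ell)=\Oh\bigl(\tfrac{m^2}{a}\tfrac{n}{m}d+\tfrac{n}{m}d\bigr)=\Oh\bigl(\max(\tfrac{m}{a},\tfrac1m)nd\bigr)$ as required. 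Note this also changes the side condition from \cref{lem:ulb}: with $m^2$ pairs one needs $a\ge m^2$ (which is why the proof of \cref{thm:lb} sets $m_n\le\sqrt{a_n}$), whereas your version would only need $a\ge m$ — that would be a genuine improvement if the threshold bound went through, but it does not.

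A smaller point: padding $U$ with all-ones vectors does not guarantee non-orthogonality if $V$ may contain the zero vector; one either has to exclude that case up front or pad with a vector that cannot be orthogonal to anything in $V$ (the paper sidesteps this by covering rather than partitioning). The rest of your outline — using \cref{lem:ulb} to turn a min over subinstances into a single $\EDA$ threshold, translating from $D_a$ to $\EDA$ via the fixed offset $|X|-|Y|$, and the running-time accounting — is sound and matches the paper's argument.
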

\begin{proof}
	We cover $U$ and $V$ with subsets $U_1,\ldots, U_m\sub U$ and $V_1,\ldots,V_m\sub V$ of size $\lceil \frac{n}{m}\rceil$.
	For any $i,j\in [1\dd m]$, we construct an instance $(X_{i,j},Y_{i,j},k_{i,j})$ using \cref{cor:lb}
	for $U_i,V_j\sub \{0,1\}^d$.
	Note that $|X_{i,j}|=\ell_X$, $|Y_{i,j}|=\ell_Y$, and $k_{i,j}$ depend only on $n$, $m$, and $d$.
	We then apply \cref{cons:bnd} to the collection of $m^2$ pairs $(X_{i,j},Y_{i,j})$,
	resulting in strings $X$ and $Y$.
	The running time of this construction is $\Oh(m^2(\ell_X+\ell_Y))=\Oh(m^2\cdot \lceil \frac{n}{m}\rceil d) = \Oh(mnd)$.
	Moreover, we set $k = \frac{m^2-1}{a}(\ell_X+\ell_Y)+k_{i,j}$ (recall that $k_{i,j}$ is the same for all $i,j\in [1\dd m]$); this value satisfies $k \le \frac{m^2}{a}(\ell_X+\ell_Y)=\Oh(a^{-1}mnd)$. 
	If $\langle u,v \rangle = 0$ for some $u\in U$ and $v\in V$, then \cref{cor:lb}
	implies that $D^+_a(X_{i,j},Y_{i,j})\le k_{i,j}$ for $i,j\in [1\dd m]$ such that $u\in U_i$ and $v \in V_j$.
	Consequently, $D_a(X,Y)\le k$ by \cref{lem:ulb}.
	Symmetrically, if $D_a(X,Y)\le k$, then, by \cref{lem:ulb}, $D^+_a(X_{i,j},Y_{i,j})\le k_{i,j}$ holds for some $i,j\in [1\dd m]$ and, by \cref{cor:lb}, $\langle u,v\rangle = 0$ for some $u\in U_i\sub U$ and $v\in V_j\sub V$.
	The threshold $k$ for $D_a(X,Y)$ translates to a threshold $K = k+|X|-|Y|=k+\Oh(\ell)
	=\Oh(\max(a^{-1}mnd,nd/m))$ for $\EDA$.
\end{proof}

\thmlb*
\begin{proof}
	For each $n\in \Zp$, let $t_n = \min(n,a_nk_n)k_n$.
	Note that $\Omega(n)$ time is already needed to check whether $X=Y$.
	Thus, the claim holds trivially if there are infinitely many pairs $(a_n,k_n)$ with $t_n^{1-\epsilon} < n$.
	Consequently, we may assume without loss of generality that $t_n^{1-\epsilon }\ge n$ for each $n$ (any finite prefix of the sequence $(a_n,k_n)$ is irrelevant).
	
	Let us choose a constant $c'$ so that $(a_n,k_n)$ can be constructed in $\Oh(n^{c'})$ time,
	a constant $c$ of \cref{conj:ov} for $\epsilon/(2c')$, and a constant $C$ so that \cref{prp:lb} guarantees $|X|+|Y|\le Cnmd$ and $K \le C(\max(m/a,1/m)nd)$.
	
	Given an instance $V \sub \{0,1\}^d$ of the Orthogonal Vectors problem with $d = c\log |V|$,
	we set $n := \ceil{|V|^{1/c'}}$ and compute the values $a_n$, $k_n$, and $t_n$ in $\Oh(n^c)=\Oh(|V|)$ time. 
	Next, we set $N := \floor{\sqrt{t_n}/(Cd)}$ and cover $V$ with $v := \ceil{\frac{1}{N}|V|}$ subsets $V_1,\ldots, V_v$ of size $N$.
	For any $i,j\in [1\dd v]$, we construct an instance $(X_{i,j},Y_{i,j},K_{i,j})$ using \cref{prp:lb}
	for $V_i,V_j\sub \{0,1\}^d$, $a_n$, and $m_n := \floor{\sqrt{\min(n/k_n,a_n)}}$.
	
	Note that $|X_{i,j}|+|Y_{i,j}|\le C m_n N d \le \sqrt{\min(n/k_n,a_n)t_n} \le \sqrt{n/k_n \cdot nk_n} \le n$ and $K_{i,j}\le C \cdot \max(m_n/a_n, 1/m_n)N d \le \max(m_n/a_n, 1/m_n) \sqrt{t_n}$.
	If $a_nk_n \le n$, then $K_{i,j}\le \max(\sqrt{a_n}/a_n, 1/\sqrt{a_n})\cdot \sqrt{a_nk_n^2}=k_n$.
	If $a_nk_n \ge n$, on the other hand, then $K_{i,j}\le \max(\sqrt{n/k_n}/a_n,\sqrt{k_n/n})\cdot \sqrt{nk_n}\le \max(n/a_n,k_n)\le k_n$.
	Thus, we can test $\EDA(X_{i,j},Y_{i,j})\le K_{i,j}$ using an instance $(X_{i,j},Y_{i,j},a_n,k_n)$ of the problem in question.
	By our hypothesis, this costs $\Oh(n+t_n^{1-\epsilon})=\Oh(t_n^{1-\epsilon})$ time (including construction of \cref{prp:lb}).
	Summing up over $i,j\in [1\dd v]$, the running time is 
	$\Oh(v^2 t_n^{1-\epsilon})=\Oh(|V|^2/N^2 \cdot t_n^{1-\epsilon})=\Oh(d^2|V|^2t_n^{-\epsilon})
	=\Oh(d^2|V|^2 n^{-\epsilon}) = \Oh(d^2|V|^{2-\epsilon/c'})=\Oh(|V|^{2-\epsilon/(2c')})$.
	Overall, including the construction of $(a_n,k_n)$ in $\Oh(|V|)$ time, the total time complexity of solving the OV instance is $\Oh(|V|^{2-\epsilon/(2c')})$, contradicting \cref{conj:ov}.
	\end{proof}

\ifArXiv
\appendix

\section{Bicriteria Algorithms}\label{app:bic}

\subsection{Exact Algorithm}
\biexact*

Our algorithm runs in time $\Oh(n+k_Sk_I^2)$ and is implemented as \cref{alg:biexact},
where all uninitialized and out-of-bounds values $\De_{v_I,v_S}[s]$
are implicitly set to $-\infty$.
Its correctness follows from \cref{lem:biexact} (see below),
and the running time is proportional to the number of $\LCE$ queries, which is $\Oh(k_Sk_I^2)$, plus the $\Oh(n)$ construction time 
of an $\LCE$ data structure.

\begin{algorithm}[htb]
	\ForEach{$v_S\in [0\dd k_S]$}{%
		\ForEach{$v_I\in [0\dd k_I]$}{
		\ForEach{$s\in [-v_I\dd v_I]$}{
			$\De'_{v_I,v_S}[s] \gets \min(|X|,|Y|-s,\max(\De_{v_I-1,v_S}[s-1],\De_{v_I,v_S-1}[s]+1,\De_{v_I-1,v_S}[s+1]+1))$\;
			\lIf{$s=0$}{$\De'_{v_I,v_S}[s] \gets \max(\De'_{v_I,v_S}[s],0)$}
			$\De_{v_I,v_S}[s] \gets \De'_{v_I,v_S}[s] + \LCE(\De'_{v_I,v_S}[s],\De'_{v_I,v_S}[s]+s)$\;
		}
		\lIf{$\De_{v_I,v_S}[|Y|-|X|]=|X|$}{\Return{YES}}
		}
	}
	\Return{NO}\;
	
	\caption{Exact bicriteria algorithm}\label{alg:biexact}
	
\end{algorithm}

\begin{lemma}\label{lem:biexact}
	For every $v_I\in [0\dd k_I]$, $v_S\in [0\dd k_S]$, $x\in [0\dd |X|]$, and $y\in [0\dd |Y|]$, we have $\De_{v_I,v_S}[y-x]\ge x$
	if and only if $X[0\dd x)$ and $Y[0\dd y)$ admit an $(v_I,v_S)$-alignment.
\end{lemma}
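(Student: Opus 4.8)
The plan is to follow the proof of \cref{lem:exact} almost verbatim, replacing the scalar cost $v$ by the pair $(v_I,v_S)$, the substitution transition $\De_{v-1/a}[s]+1$ by $\De_{v_I,v_S-1}[s]+1$, and keeping the two indel transitions $\De_{v-1}[s\pm1]$ as $\De_{v_I-1,v_S}[s\pm1]$. Both implications will be established simultaneously by induction on $v_I+v_S$, which is strictly decreased by each of the three arguments of the inner $\max$ in the update of $\De'_{v_I,v_S}[s]$; the property used throughout is that the predicate ``$X[0\dd x)$ and $Y[0\dd y)$ admit a $(v_I,v_S)$-alignment'' is monotone in both $v_I$ and $v_S$. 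The base case $v_I=v_S=0$ is immediate: $\De_{0,0}[0]=\LCE(0,0)$ equals the length of the longest common prefix of $X$ and $Y$, i.e.\ the largest $x$ with $X[0\dd x)=Y[0\dd x)$, which is exactly the condition for a $(0,0)$-alignment of $X[0\dd x)$ and $Y[0\dd x)$ to exist.

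For the forward implication, fix any $(v_I,v_S)$-alignment of $X[0\dd x)$ and $Y[0\dd y)$ and let $X[x'\dd x)=Y[y'\dd y)$ be the longest suffixes it aligns with no edit (so $y-x=y'-x'$). The operation performed just before this matched run is a deletion of $X[x'-1]$, an insertion of $Y[y'-1]$, a substitution $X[x'-1]\leftrightarrow Y[y'-1]$, or nothing (when $x'=y'=0$). Discarding that last edit and the identity matches of the suffix, the restriction to the relevant prefixes is an $(i',t')$-alignment with $(i',t')\le(v_I-1,v_S)$, $(v_I-1,v_S)$, or $(v_I,v_S-1)$ respectively, hence a $(v_I-1,v_S)$-, $(v_I-1,v_S)$-, or $(v_I,v_S-1)$-alignment after weakening via monotonicity; the inductive hypothesis applied to the corresponding summand then yields $\De'_{v_I,v_S}[y-x]=\De'_{v_I,v_S}[y'-x']\ge x'$, and the case $x'=y'=0$ is covered by the explicit $\max(\cdot,0)$ guard on the diagonal $s=0$. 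Finally $\De_{v_I,v_S}[y-x]\ge x$ follows because $X[x'\dd x)=Y[y'\dd y)$ forces $\LCE(x'',x''+(y-x))\ge x-x''$ for every $x''\in[x'\dd x]$, in particular for $x''=\De'_{v_I,v_S}[y-x]\ge x'$.

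For the converse, suppose $\De_{v_I,v_S}[y-x]\ge x$; set $s=y-x$, $x'=\De'_{v_I,v_S}[s]$, $y'=x'+s$, so that $\De_{v_I,v_S}[s]=x'+\LCE(x',y')$. A four-way case analysis on which argument of the $\min/\max$ attains $x'$, using the inductive hypothesis for $(v_I-1,v_S)$ on the two indel branches and for $(v_I,v_S-1)$ on the substitution branch (and $x'=y'=0$ on the last branch), shows that $X[0\dd x')$ and $Y[0\dd y')$ admit a $(v_I,v_S)$-alignment. If $x\ge x'$, then $\LCE(x',y')\ge x-x'$, so appending the matches $X[x'\dd x)=Y[y'\dd y)$ extends this to a $(v_I,v_S)$-alignment of $X[0\dd x)$ and $Y[0\dd y)$; if $x<x'$, the same conclusion holds because the $(v_I,v_S)$-alignability predicate is monotone along a fixed diagonal (removing equally many characters from the ends of two alignable strings on the same diagonal keeps them alignable, by composing the given alignment with the cost-one-indel-per-removed-pair alignment, via the triangle inequality for $\EDA$).

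The routine four-case expansions are verbatim copies of those in \cref{lem:exact} under the index substitution above, so the only genuinely new bookkeeping is to verify, in each of the eight subcases, that the last edit decrements exactly the matching coordinate of $(v_I,v_S)$ while the other coordinate is carried through unchanged, and that these decrements line up with the indices $\De_{v_I-1,v_S}$ and $\De_{v_I,v_S-1}$ in the recurrence for $\De'_{v_I,v_S}[s]$. I expect this index-tracking to be the only place anything can go wrong; once the correspondence with the recurrence is pinned down, correctness is immediate, and the same adaptation underlies the bicriteria approximation algorithm of \cref{thm:bi}.
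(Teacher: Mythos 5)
Your plan is the paper's own: induct on $v_I+v_S$, reproduce both four-case analyses of \cref{lem:exact} with $\De_{v-1}[s\pm1]$ replaced by $\De_{v_I-1,v_S}[s\pm1]$ and $\De_{v-1/a}[s]+1$ by $\De_{v_I,v_S-1}[s]+1$, and use $\LCE(x',y')\ge x-x'$ to pass between $(x',y')$ and $(x,y)$. This matches \cref{lem:biexact} essentially verbatim, so no further comment on the structure is needed.

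One small caveat on the side remark you add for the converse when $x<x'$. You are right that this subcase exists and that the paper's phrase ``because $\LCE(x',y')\ge x-x'$'' is vacuous there, so the paper is also implicitly relying on monotonicity of the alignability predicate along a diagonal. The predicate \emph{is} monotone, but the justification you offer does not establish it: composing a $(v_I,v_S)$-alignment of $X[0\dd x')$ and $Y[0\dd y')$ with the length-$(x'-x)$ deletion on the $X$ side and the length-$(y'-y)$ deletion on the $Y$ side produces, by the subadditivity of indel and substitution counts under composition, only a $\bigl(v_I+2(x'-x),\,v_S\bigr)$-alignment of $X[0\dd x)$ and $Y[0\dd y)$, not a $(v_I,v_S)$-alignment. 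The correct argument is to take the optimal alignment path from $(0,0)$ to $(x',y')$, look at the last lattice point $(\hat x,\hat y)$ of the path lying in the box $[0\dd x]\times[0\dd y]$, observe that $\hat x=x$ or $\hat y=y$, and then check that the tail of the path from $(\hat x,\hat y)$ to $(x',y')$ already contains at least as many indels as you must add to extend the prefix alignment (of $X[0\dd\hat x)$ and $Y[0\dd\hat y)$) up to $X[0\dd x)$ and $Y[0\dd y)$. With that replacement your proof is complete; the paper simply elides the case.
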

\begin{proof}
	Let us first prove, by induction on $v_I+v_S$, that $\De_{v_I,v_S}[y-x]\ge x$ if  $X[0\dd x)$ and $Y[0\dd y)$ admit a $(v_I,v_S)$-alignment. 
	Let us fix such an alignment and consider the longest suffixes $X[x'\dd x)=Y[y'\dd y)$ aligned without any edit.
	We shall prove that $\De'_{v_I,v_S}[y-x]=\De'_{v_I,v_S}[y'-x']\ge x'$ by considering four cases:
	\begin{itemize}
		\item If $x'=y'=0$, then $\De'_{v_I,v_S}[y'-x']=\De'_{v_I,v_S}[0]\ge 0$.
		\item If $X[x'-1]$ is deleted, then $X[0\dd x'-1)$ and $Y[0\dd y')$ admit a $(v_I-1, v_S)$-alignment.
		By inductive assumption, $\De'_{v_I,v_S}[y'-x']\ge \De_{v_I-1,v_S}[y'-x'+1]+1\ge x'$.
		\item If $Y[y'-1]$ is inserted,  then $X[0\dd x')$ and $Y[0\dd y'-1)$ admit a $(v_I-1, v_S)$-alignment.
		By inductive assumption,  $\De'_{v_I,v_S}[y'-x']\ge \De_{v_I-1,v_S}[y'-x'-1]\ge x'$.
		\item Otherwise, $X[x'-1]$ is substituted for $Y[y'-1]$; then, $X[0\dd x'-1)$ and $Y[0\dd y'-1)$ admit a $(v_I,v_S-1)$-alignment.
		By inductive assumption,  $\De'_{v_I,v_S}[y'-x']\ge \De_{v_I,v_S-1}[y'-x']+1\ge x'$.
	\end{itemize}
	Now, $\De_{v_I,v_S}[y-x]\ge x$ follows from $\De'_{v_I,v_S}[y'-x']\ge x'$ due to $\LCE(x',y')\ge x-x'$.
	
	The converse implication is also proved by induction on $v_I+v_S$. We consider four cases 
	depending on $x':=\De'_{v_I,v_S}[y-x]$ and $y' := x'+(y-x)$:
	\begin{itemize}
		\item If $x'\le \De_{v_I-1,v_S}[y'-x'+1]+1$, then, by the inductive assumption, $X[0\dd x'-1)$ and $Y[0\dd y')$ admit a $(v_I-1, v_S)$-alignment,
		which can be extended to a $(v_I, v_S)$-alignment of $X[0\dd x')$ and $Y[0\dd y')$ by deleting $X[x'-1]$.
		\item If $x'\le \De_{v_I-1,v_S}[y'-x'-1]$, then, by the inductive assumption, $X[0\dd x')$ and $Y[0\dd y'-1)$ admit a $(v_I-1, v_S)$-alignment,
		which can be extended to a $(v_I, v_S)$-alignment of $X[0\dd x')$ and $Y[0\dd y')$ by inserting $Y[y'-1]$.
		\item If $x'\le \De_{v_I,v_S-1}[y'-x']+1$, then, by the inductive assumption, $X[0\dd x'-1)$ and $Y[0\dd y'-1)$ admit a $(v_I, v_S-1)$-alignment,
		which can be extended to a $(v_I,v_S)$-alignment of $X[0\dd x')$ and $Y[0\dd y')$ by substituting $X[x'-1]$ for $Y[y'-1]$.
		\item In the remaining case, we have $x'=y'=0$. Trivially, $X[0\dd x')$ and $Y[0\dd y')$ admit a $(0,0)$-alignment, which is also a $(v_I,v_S)$-alignment.
	\end{itemize}
	In all cases, the $(v_I,v_S)$-alignment of  $X[0\dd x')$ and $Y[0\dd y')$ yields a $(v_I,v_S)$-alignment of $X[0\dd x)$ and $Y[0\dd y)$
	because $\LCE(x',y')\ge x-x'$.
\end{proof}

\subsection{Approximation algorithm}
Given strings $X,Y$, integer thresholds $k_I,k_S$, and an accuracy parameter $\epsilon\in(0,1)$,
our next algorithm reports YES and $X$ and $Y$ admit a $(k_I,k_S)$-alignment,
NO if $X$ and $Y$ do not admit a $(k_I, (1+\epsilon)k_S)$-alignment,
and an arbitrary answer otherwise.
The algorithm mimics the behavior of~\cref{alg:biexact} with a coarser granularity of the substitution costs.
We assume that out-of-bounds and uninitialized values $\Da_{v_I,v_S}[s]$ are implicitly set to $-\infty$.

\begin{algorithm}[htb]
	$d := \ceil{\frac{\epsilon k_S}{2+k_I}}$\;
	\ForEach{$v_S\in \{0,d,2d,\ldots,d\cdot \ceil{k_S/d}\}$}{%
		\ForEach{$v_I\in [0\dd k_I]$}{
		\ForEach{$s\in [-v_I\dd v_I]$}{
			$\Da'_{v_I,v_S}[s] \gets \min(|X|,|Y|-s,\max(\Da_{v_I-1,v_S}[s-1],\Da_{v_I,v_S-d}[s],\Da_{v_I-1,v_S}[s+1]+1))$\;
			\lIf{$s=0$}{$\Da'_{v_I,v_S}[s] \gets \max(\Da'_{v_I,v_S}[s],0)$}
			$\Da_{v_I,v_S}[s] \gets \Da'_{v_I,v_S}[s] + \LCE_{d,\epsilon}(\Da'_{v_I,v_S}[s],\Da'_{v_I,v_S}[s]+s)$\;
		}
		\lIf{$\Da_{v_I,v_S}[|Y|-|X|]=|X|$}{\Return{YES}}
		}
	}
	\Return{NO}\;
	
	\caption{Approximate bicriteria algorithm}\label{alg:biapx}
	
\end{algorithm}

The following lemma justifies the correctness of \cref{alg:biapx}.
\begin{lemma}\label{lem:biapx}
Let $d=\ceil{\frac{\epsilon k_S}{2+k_I}}$.
For every $v_S\in \{0,d,\ldots,d \ceil{k_S/d}\}$, $v_I\in [0\dd k_I]$, $x\in [0\dd |X|]$, and $y\in [0\dd |Y|]$:
\begin{itemize}
	\item If $X[0\dd x)$ and $Y[0\dd y)$ admit a $(v_I,v_S)$-alignment, then $\Da_{v_I,v_S}[y-x]\ge x$.
	\item If $\Da_{v_I,v_S}[y-x]\ge x$, then $X[0\dd x)$ and $Y[0\dd y)$ admit a $(v_I, (1+\epsilon)(d+v_Id+v_S))$-alignment.
\end{itemize}
\end{lemma}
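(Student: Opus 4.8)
The plan is to mirror the proof of \cref{lem:apx} (which already handles the $(1+\epsilon)$-slack coming from approximate LCE queries) and of \cref{lem:biexact} (which already handles tracking indels and substitutions separately), combining the two. Both implications are proved by induction on $v_I+v_S$, and the only genuinely new feature relative to \cref{lem:biexact} is that each $\LCE_{d,\epsilon}$ jump may be charged up to $(1+\epsilon)d$ substitutions instead of exactly $d$; converting the resulting bound into a clean $(k_I,(1+\epsilon)k_S)$-guarantee (the analog of \cref{cor:apx}) is a separate step and relies on the choice $d=\ceil{\frac{\epsilon k_S}{2+k_I}}$, so within this lemma $d$ is treated merely as a positive integer.

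For the first implication, I would fix a $(v_I,v_S)$-alignment of $X[0\dd x)$ and $Y[0\dd y)$ and, as in \cref{lem:apx}, let $X[x'\dd x)$ and $Y[y'\dd y)$ be the longest suffixes aligned with at most $d$ substitutions and no indels; then $y'-x'=y-x$ and $\HD(X[x'\dd x),Y[y'\dd y))\le d$. I split into four cases according to the operation of the alignment immediately before this clean suffix: the boundary case $x'=y'=0$; a deletion of $X[x'-1]$, after which $X[0\dd x'-1)$ and $Y[0\dd y')$ admit a $(v_I-1,v_S)$-alignment; an insertion of $Y[y'-1]$, after which $X[0\dd x')$ and $Y[0\dd y'-1)$ admit a $(v_I-1,v_S)$-alignment; and the new case of a substitution $X[x'-1]\mapsto Y[y'-1]$, which by maximality of the clean suffix must use exactly $d$ substitutions, so that $X[0\dd x')$ and $Y[0\dd y')$ admit a $(v_I,v_S-d)$-alignment -- and here $v_S\ge d$ and $v_S-d$ is again a nonnegative multiple of $d$, hence one of the grid values enumerated by \cref{alg:biapx}. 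In each case the inductive hypothesis, applied through the matching term of the $\max$ defining $\Da'_{v_I,v_S}$, gives $\Da'_{v_I,v_S}[y-x]=\Da'_{v_I,v_S}[y'-x']\ge x'$, and then $\Da_{v_I,v_S}[y-x]\ge x$ follows because $\LCE_{d,\epsilon}\ge \LCE_d$ lets the algorithm extend from $\Da'_{v_I,v_S}[y-x]$ past position $x$ (using $\HD(X[x'\dd x),Y[y'\dd y))\le d$ and monotonicity of the extension), exactly as in \cref{lem:apx,lem:biexact}.

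For the second implication I would again induct on $v_I+v_S$, setting $x':=\Da'_{v_I,v_S}[y-x]$ and $y':=x'+(y-x)$, so that $\Da_{v_I,v_S}[y-x]=x'+\LCE_{d,\epsilon}(x',y')\ge x$ yields $\LCE_{d,\epsilon}(x',y')\ge x-x'$. I first show, by the four-way case split on which of $\Da_{v_I-1,v_S}[s-1]$, $\Da_{v_I,v_S-d}[s]$, $\Da_{v_I-1,v_S}[s+1]+1$ realizes $x'$ (or the clamp to $0$), that $X[0\dd x')$ and $Y[0\dd y')$ admit a $(v_I,(1+\epsilon)(v_Id+v_S))$-alignment: in the first two cases the inductive hypothesis produces a $(v_I-1,(1+\epsilon)(v_Id+v_S))$-alignment of the appropriate prefix which is extended by one indel; in the third case it produces the same $(v_I,(1+\epsilon)(v_Id+v_S))$-alignment directly (note $d+v_Id+(v_S-d)=v_Id+v_S$); and the fourth is the empty alignment. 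Then I extend from $(x',y')$ to $(x,y)$: if $x\ge x'$, the inequality $\LCE_{d,\epsilon}(x',y')\le \LCE_{(1+\epsilon)d}(x',y')$ gives $\HD(X[x'\dd x),Y[y'\dd y))\le(1+\epsilon)d$, so appending the $x-x'$ matched/substituted pairs adds at most $(1+\epsilon)d$ substitutions and yields a $(v_I,(1+\epsilon)(v_Id+v_S)+(1+\epsilon)d)=(v_I,(1+\epsilon)(d+v_Id+v_S))$-alignment of $X[0\dd x)$ and $Y[0\dd y)$; if $x<x'$, then $x$ lies below one of the three predecessor table entries, so the inductive hypothesis applied directly at index $x$ with the corresponding diagonal already gives a $(v_I,(1+\epsilon)(v_Id+v_S))$-alignment of $X[0\dd x)$ and $Y[0\dd y)$, which a fortiori is the desired one -- this is the same treatment of the sub-$x'$ case as in \cref{lem:apx}.

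The main obstacle is not any single hard argument but the bookkeeping: keeping the indel count exact while letting the substitution budget accumulate the $(1+\epsilon)d$ slack incurred by each $\LCE_{d,\epsilon}$ jump, and verifying that the shifted index $v_S-d$ used in the recursion of \cref{alg:biapx} always stays a nonnegative multiple of $d$ -- both of which hinge on the substitution layer of the DP advancing in steps of size exactly $d$ and on the clean suffix using exactly $d$ substitutions in the new case. Once the invariant "$\Da_{v_I,v_S}[y-x]\ge x$ iff the corresponding prefixes admit a $(v_I,v_S)$- (resp.\ approximately-$(v_I,v_S)$-) alignment" is pinned down, each four-case induction is a routine adaptation of \cref{lem:apx} and \cref{lem:biexact}.
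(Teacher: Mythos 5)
Your proof is correct and follows essentially the same approach as the paper's: both implications are proved by induction on $v_I+v_S$, with the first implication splitting off the longest clean suffix (at most $d$ substitutions, no indels) and the second implication casing on which predecessor entry realizes $x':=\Da'_{v_I,v_S}[y-x]$, then extending by the $\LCE_{d,\epsilon}$ jump. One small difference: for the extension step of the second implication, the paper simply states that the alignment of $X[0\dd x')$ and $Y[0\dd y')$ "yields" one for $X[0\dd x)$ and $Y[0\dd y)$ because $\LCE_{(1+\epsilon)d}(x',y')\ge x-x'$, implicitly covering the case $x<x'$ via the (standard but unstated) monotonicity of alignment cost along a diagonal; you instead resolve $x<x'$ by re-applying the inductive hypothesis at the shifted index through the relevant predecessor entry, which is a cleaner way to avoid invoking that monotonicity (though your wording "already gives" is slightly loose — in the indel cases you still need to append the single deletion/insertion after the recursive call). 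Incidentally, your proposal correctly uses $\Da_{v_I,v_S-d}[s]$ in the third case, matching \cref{alg:biapx}.
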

\begin{proof}
	We start by proving the first implication inductively on $v_I+v_S$.
	Let us fix a $(v_I,v_S)$-alignment
	of $X[0\dd x)$ and $Y[0\dd y)$
	and consider the longest suffixes $X[x'\dd x)$ and $Y[y'\dd y)$ aligned with at most $d$ substitutions and no indels. 
	We shall prove that $\Da'_{v_I,v_S}[y-x]=\Da'_{v_I,v_S}[y'-x']\ge x'$ by considering four cases:
	\begin{itemize}
		\item If $x'=y'=0$, then $\Da'_{v_I,v_S}[y'-x'] = \Da'_{v_I,v_S}[0] \ge 0$.
		\item If $X[x'-1]$ is deleted, then $X[0\dd x'-1)$ and $Y[0\dd y')$ admit a $(v_I-1, v_S)$-alignment.
		By the inductive assumption, $\Da'_{v_I,v_S}[y'-x']\ge \Da_{v_I-1,v_S}[y'-x'+1]+1\ge x'$.
		\item If $Y[y'-1]$ is inserted, then $X[0\dd x')$ and $Y[0\dd y'-1)$ admit a $(v_I-1, v_S)$-alignment.
		By the inductive assumption, $\Da'_{v_I,v_S}[y'-x']\ge \Da_{v_I-1,v_S}[y'-x'-1]\ge x'$.
		\item Otherwise, there are exactly $d$ substitutions between $X[x'\dd x)$ and $Y[y'\dd y)$.
		Thus, $X[0\dd x')$ and $Y[0\dd y')$ admit a $(v_I,v_S-d)$-alignment and, by the inductive assumption, $\Da'_{v_I,v_S}[y'-x'] \ge \Da_{v_I,v_S-d}[y'-x']\ge x'$.
	\end{itemize}
	Now, $\Da_{v_I,v_S}[y-x]\ge x$ follows from $\Da'_{v_I,v_S}[y'-x'] \ge x'$ due to $\LCE_{d}(x',y')\ge x-x'$.

	The second implication is also proved by induction on $v_I+v_S$.
	We consider four cases depending on $x':=\Da'_{v_I,v_S}[y-x]$ and $y' := x'+(y-x)$:
	\begin{itemize}
		\item If $x'\le \Da_{v_I-1,v_S}[y'-x'+1]+1$, then, by the inductive assumption, $X[0\dd x'-1)$ and $Y[0\dd y')$ admit a $(v_I-1, (1+\epsilon)(v_Id+v_S))$-alignment,
		which can be extended to a $(v_I, (1+\epsilon)(v_Id+v_S))$-alignment of $X[0\dd x')$ and $Y[0\dd y')$ by deleting $X[x'-1]$.
		\item If $x'\le \Da_{v_I-1,v_S}[y'-x'-1]$, then, by the inductive assumption, $X[0\dd x')$ and $Y[0\dd y'-1)$ admit a $(v_I-1, (1+\epsilon)(v_Id+v_S))$-alignment,
		which can be extended to a $(v_I, (1+\epsilon)(v_Id+v_S))$-alignment of $X[0\dd x')$ and $Y[0\dd y')$ by inserting $Y[y'-1]$.
		\item If $x'\le \Da_{v_I,v_S-1}[y'-x']$, then, by the inductive assumption, $X[0\dd x')$ and $Y[0\dd y')$ admit a $(v_I, (1+\epsilon)(v_Id+v_S))$-alignment.
		\item In the remaining case, we have $x'=y'=0$. Trivially, $X[0\dd x')$ and $Y[0\dd y')$ admit a $(0,0)$-alignment, which is also a $(v_I, (1+\epsilon)(v_Id+v_S))$-alignment.
	\end{itemize}
	In all cases, the $(v_I, (1+\epsilon)(v_Id+v_S))$-alignment of $X[0\dd x')$ and $Y[0\dd y')$  yields a $(v_I, \allowbreak (1+\epsilon)(d+v_Id+v_S))$-alignment of $X[0\dd x)$ and $Y[0\dd y)$ because $\LCE_{(1+\epsilon)d}(x',y')\ge x-x'$.
\end{proof}

\begin{corollary}\label{cor:biapx}
\cref{alg:biapx} returns YES if $X$ and $Y$ admit a $(k_I,k_S)$-alignment,
and NO if $X$ and $Y$ do not admit a $(k_I,(1+\epsilon)(2+k_I+(1+\epsilon)k_S))$-alignment.
Moreover, it can be implemented in $\tOh(\frac{n k_I}{\epsilon^3 k_S}+k_I^3 k_S)$ time.
\end{corollary}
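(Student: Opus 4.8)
The plan is to transcribe the proof of Corollary~\ref{cor:apx} almost verbatim, substituting Lemma~\ref{lem:biapx} for Lemma~\ref{lem:apx} and instantiating Theorem~\ref{prp:LCE} with threshold $d=\ceil{\tfrac{\epsilon k_S}{2+k_I}}$ and width $w=k_I$; the latter is legitimate because every $\LCE_{d,\epsilon}$ call in Algorithm~\ref{alg:biapx} has $|x-y|=|s|\le v_I\le k_I$.

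For the YES direction I would set $v_I:=k_I$ and $v_S:=d\ceil{k_S/d}\ge k_S$, so that a $(k_I,k_S)$-alignment of $X$ and $Y$ is in particular a $(v_I,v_S)$-alignment. The first part of Lemma~\ref{lem:biapx} (taken with $x=|X|$, $y=|Y|$) then gives $\Da_{v_I,v_S}[|Y|-|X|]\ge|X|$, and since the update rule of Algorithm~\ref{alg:biapx} always keeps $\Da_{v_I,v_S}[s]\le\min(|X|,|Y|-s)$, this is an equality, so the algorithm returns YES no later than the iteration for this pair $(v_S,v_I)$. For the NO direction, if the algorithm returns YES then $\Da_{v_I,v_S}[|Y|-|X|]=|X|$ for some $v_I\le k_I$ and some $v_S\in\{0,d,\ldots,d\ceil{k_S/d}\}$, whence $v_S<k_S+d$; the second part of Lemma~\ref{lem:biapx} supplies a $(v_I,(1+\epsilon)(d+v_Id+v_S))$-alignment, hence a $(k_I,(1+\epsilon)((k_I+2)d+k_S))$-alignment. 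The only real arithmetic is the bound $(k_I+2)d=(k_I+2)\ceil{\tfrac{\epsilon k_S}{k_I+2}}<\epsilon k_S+(k_I+2)$, which upgrades this to a $(k_I,(1+\epsilon)(2+k_I+(1+\epsilon)k_S))$-alignment; contraposing yields the NO claim.

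For the running time I would count the $\LCE_{d,\epsilon}$ calls: the three nested loops make $\Oh\bigl((\ceil{k_S/d}+1)(k_I+1)(2k_I+1)\bigr)=\Oh\bigl((\tfrac{k_S}{d}+1)k_I^2\bigr)$ of them, and by Theorem~\ref{prp:LCE} each costs $\tOh(dw)=\tOh(dk_I)$ after $\tOh(\tfrac{n}{\epsilon^2 d})$-time randomized preprocessing that succeeds with high probability. Using $d\ge\tfrac{\epsilon k_S}{2+k_I}\ge\tfrac{\epsilon k_S}{3k_I}$ the preprocessing is $\tOh(\tfrac{n}{\epsilon^2}\cdot\tfrac{k_I}{\epsilon k_S})=\tOh(\tfrac{nk_I}{\epsilon^3 k_S})$, and using in addition $d\le\epsilon k_S+1=\Oh(k_S)$ the total query time reduces to $\tOh\bigl((\tfrac{k_S}{d}+1)k_I^2\cdot dk_I\bigr)=\tOh\bigl((k_S+d)k_I^3\bigr)=\tOh(k_S k_I^3)$, which gives the stated bound; since the query phase is deterministic once preprocessing succeeds, the whole algorithm is correct with high probability.

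The step I expect to be the crux — the only one that is not a mechanical copy of the $\EDA$ analysis — is verifying that the single choice $d=\ceil{\tfrac{\epsilon k_S}{2+k_I}}$ does two jobs at once: it makes the additive slack accumulated across the $\le k_I+2$ inter-indel segments collapse into the advertised multiplicative $(1+\epsilon)$-type guarantee, and it balances the two competing terms $\tfrac{n}{\epsilon^2 d}$ and $(\tfrac{k_S}{d})k_I^2\cdot dk_I$ of the running time. Everything else is bookkeeping parallel to Corollary~\ref{cor:apx}, so I would keep the write-up terse.
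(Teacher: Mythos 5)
Your proof is correct and follows essentially the same approach as the paper: the YES direction and the NO direction both instantiate Lemma~\ref{lem:biapx} in the same way, the running-time analysis counts $\LCE_{d,\epsilon}$ calls and invokes Theorem~\ref{prp:LCE} with $w=k_I$ identically, and the bound $(k_I+2)\ceil{\epsilon k_S/(k_I+2)}<\epsilon k_S+(k_I+2)$ is exactly the arithmetic the paper uses (your version is in fact cleaner; the paper's chain contains what appears to be a typographical slip). Your additional remarks, e.g.\ that $\Da_{v_I,v_S}[s]\le\min(|X|,|Y|-s)$ forces the $\ge|X|$ bound to be an equality, and that the bound $d\ge\epsilon k_S/(3k_I)$ uses $k_I\ge 1$, are harmless elaborations of steps the paper leaves implicit.
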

\begin{proof}
First, suppose that $X$ and $Y$ admit a $(k_I,k_S)$-alignment.
Then, we apply \cref{lem:biapx} for $v_S = d \ceil{k_S/d}$, $v_I=k_I$, $x=|X|$, and $y=|Y|$.
We conclude that $\Da_{v_I,v_S}[|X|-|Y|]\ge |X|$, which means that the algorithm returns YES.
On the other hand, if the algorithm returns YES, then $\Da_{v_I,v_S}[|X|-|Y|]\ge |X|$
holds for some $v_I\in [0\dd k_I]$ and $v_S\in \{0,d,\ldots,d \ceil{k_S/d}\}$.
In this case, \cref{lem:biapx} implies that $X$ and $Y$ admit a $(v_I, (1+\epsilon)(d+v_Id+v_S))$-alignment,
which is also a $(k_I,(1+\epsilon)(k_S+k_I+2+\epsilon))$-alignment because $v_I\le k_I$
and $(1+\epsilon)(d+v_Id+v_S) \le (1+\epsilon)(2d+k_I+k_S)(1+\epsilon) \le (1+\epsilon)((2+k_I)d + k_S)
\le (1+\epsilon)(2+k_I+(1+\epsilon)k_S)$.
As for the running time, we observe that the number of $\LCE_{d,\epsilon}(x,y)$ queries is $\Oh(k_I^2 \cdot k_S/d)$ and that each of them satisfies $|x-y|\le k_I$. By \cref{prp:LCE}, the total running time is therefore $\tOh(\frac{n k_I}{\epsilon^3 k_S}+k_I^3 \cdot k_S)$.
\end{proof}

\biapx*
\begin{proof}
	If $\frac{\epsilon k_S}{5} < 2+k_I$, we use the algorithm of \cref{prp:bi}, which costs $\Oh(n+k_S k_I^2)= \Oh(\frac{n k_I}{\epsilon k_S}+k_S k_I^2)
	=\tOh(\frac{nk_I}{\epsilon^3 k_S}+k_S k_I^3)$ time.
	Otherwise, we apply \cref{cor:biapx} with the accuracy parameter $\epsilon':=\frac{\epsilon}{5}$.
	This guarantees that $(1+\epsilon')(2+k_I+(1+\epsilon')k_S)\le (1+\epsilon')(1+2\epsilon')k_S \le (1+5\epsilon')k_S
	= (1+\epsilon)k_S$. Thus, \cref{alg:biapx} meets the requirements of \cref{prob:bi} with $\alpha=1$ and $\beta=1+\epsilon$.
\end{proof}
\fi

\bibliographystyle{plainurl}
\bibliography{references}

\end{document}